\documentclass[10pt,conference]{IEEEtran}

\usepackage{array}
\usepackage{times}
\usepackage{graphicx}
\usepackage{amsmath}
\usepackage{amssymb}
\usepackage{amsfonts}
\usepackage{cite}
\usepackage{setspace}
\usepackage{multido}
\usepackage[justification=Centering]{subfig}
\usepackage{algorithm}
\usepackage{algorithmic}
\usepackage{tikz}
\usepackage{xcolor}
\usepackage{multirow}
\usepackage{paralist}
\usepackage{url}
\usepackage{etoolbox}

\usetikzlibrary{shadows,patterns,shapes,arrows,decorations.pathmorphing,backgrounds,positioning,fit,plotmarks}

\newtheorem{lemma}{Lemma}
\newtheorem{theorem}{Theorem}
\newtheorem{corollary}{Corollary}
\newtheorem{definition}{Definition}


 \def\myendproof{{\ \vbox{\hrule\hbox{%
   \vrule height1.3ex\hskip0.8ex\vrule}\hrule }}\par}
 \newenvironment{proof}{\noindent{\bf Proof. }}{\myendproof}
 \newenvironment{proofAppendix}[1]{\noindent{\bf Proof of #1. }}{\myendproof}

\newcommand{\floor}[1]{\left\lfloor{#1}\right\rfloor}
\newcommand{\ceiling}[1]{\left\lceil{#1}\right\rceil}
\newcommand{\setof}[1]{\left\{{#1}\right\}}

\newcommand{\framework}[1]{$\mathbf{k^2U}$}

\newcommand{\resourceArbM}{3-\frac{1}{M}}

\providebool{techreport}
\setbool{techreport}{true}

\hyphenation{op-tical net-works semi-conduc-tor sche-du-ling}

\addtolength{\textheight}{12pt}

\tikzset{
    task/.style={shade, shading=radial, rectangle,minimum height=.1cm,
        inner color=#1!20, outer color=#1!60!gray},
    task1/.style={task=yellow, minimum width=13mm},
    task2/.style={task=orange, minimum width=13mm},
    task3/.style={task=red, minimum width=13mm},
    task4/.style={task=green, minimum width=13mm},
    task5/.style={task=blue, minimum width=13mm},
    task6/.style={task=purple, minimum width=13mm},
    task7/.style={task=cyan, minimum width=13mm},
    task8/.style={task=pink, minimum width=13mm},
}

\floatname{algorithm}{Algorithm}
\newcommand{\algorithmicinput}{\textbf{Input:}}
\newcommand{\INPUT}{\item[\algorithmicinput]}

\ifbool{techreport}{
\pagestyle{plain}
}
{
\pagestyle{empty}
}

\tikzstyle{entity} = [top color=white, bottom color=red!40, 
                            draw=blue!50!black!100, drop shadow]
\tikzstyle{relationship} = [diamond, shape aspect=2, top color=white, bottom color=red!20, 
                                  draw=red!50!black!100, drop shadow]
\tikzstyle{end} = [top color=white, bottom color=blue!30, 
                            draw=blue!50!black!100, drop shadow]

\tikzstyle{max} = [top color=white, bottom color=orange!60, 
                            draw=blue!50!black!100, drop shadow]

\title{Partitioned Multiprocessor Fixed-Priority Scheduling of
  Sporadic Real-Time Tasks}

 \author{Jian-Jia Chen\\
   $^1$Department of Informatics, TU Dortmund University, Germany\\
   E-mail: jian-jia.chen@cs.uni-dortmund.de
 }

\begin{document}

\maketitle

\begin{abstract}
  Partitioned multiprocessor scheduling has been widely accepted in
  academia and industry to statically assign and partition real-time
  tasks onto identical multiprocessor systems. This paper studies
  fixed-priority partitioned multiprocessor scheduling for sporadic
  real-time systems, in which deadline-monotonic scheduling is applied
  on each processor.  Prior to this paper, the best known results are
  by Fisher, Baruah, and Baker with speedup factors $4-\frac{2}{M}$
  and $3-\frac{1}{M}$ for arbitrary-deadline and constrained-deadline
  sporadic real-time task systems, respectively, where $M$ is the
  number of processors.  We show that a
  greedy mapping strategy has a speedup factor $3-\frac{1}{M}$ when
  considering task systems with arbitrary deadlines. Such a factor holds for polynomial-time
  schedulability tests and exponential-time (exact) schedulability
  tests. Moreover, we also improve the speedup factor to $2.84306$
  when considering constrained-deadline task systems. 
We also provide tight examples when the fitting strategy in the
mapping stage is arbitrary and $M$ is sufficiently large.
 For both constrained- and
  arbitrary-deadline task systems,  the analytical result surprisingly shows that using exact tests does not gain
  theoretical benefits (with respect to speedup factors) if the speedup factor analysis
 is oblivious of the particular fitting strategy used.
\end{abstract}

{\bf Keywords:} Sporadic real-time tasks, resource augmentation,
approximation, schedulability analysis.

\section{Introduction}
\label{sec:intro}

The sporadic task model has been widely adopted as the basic model for
real-time systems with recurring executions
\cite{Mok:1983:FDP:888951}.  A sporadic real-time task $\tau_i$ is
characterized by its \emph{minimum inter-arrival time} $T_i$, its
timing constraint or \emph{relative deadline} $D_i$, and its
(worst-case) \emph{execution time} $C_i$. A sporadic task defines an
infinite sequence of task instances, also called \emph{jobs}, that
arrive with the minimum inter-arrival time constraint. Under the
minimum inter-arrival time constraint, any two consecutive jobs of
task $\tau_i$ should be temporally separated by at least $T_i$.  When
a job of task $\tau_i$ arrives at time $t$, the job should finish no
later than its \emph{absolute deadline} $t+D_i$. If we consider a task
releases its jobs periodically, such a task model is the well-known
Liu and Layland task model \cite{liu73scheduling}, where $T_i$ is the
\emph{period} of the task.  An input task set is said to have
1)~\emph{implicit deadlines} if the relative deadlines of sporadic
tasks are equal to their minimum inter-arrival times,
2)~\emph{constrained deadlines} if the minimum inter-arrival times
are no less than their relative deadlines, and (3)~\emph{arbitrary
  deadlines}, otherwise.

Through this paper, we only consider implicitly preemptive scheduling. That is, a job
may be preempted by another job on a processor.
For scheduling sporadic tasks on a processor, the preemptive
earliest-deadline-first (EDF) policy is optimal \cite{liu73scheduling}
to meet the timing constraints. However, EDF requires to prioritize
the jobs in the ready queue by using their absolute deadlines, and the
overhead is in general not negligible. The industrial practice is to
use fixed-priority scheduling, also supported in most real-time
operating systems, in which a task is assigned with a fixed priority level.
The seminal work by Liu and Layland \cite{liu73scheduling} shows that
rate monotonic (RM) scheduling is optimal for uniprocessor fixed-priority
scheduling when considering implicit-deadline task systems. Moreover,
deadline monotonic (DM) scheduling \cite{journals/pe/LeungW82} is
optimal for uniprocessor fixed-priority scheduling for constrained-deadline task
systems. For arbitrary-deadline task systems, Audsley et
al. \cite{audsley1993applying} provide an optimal priority assignment
algorithm to define the priority levels of the sporadic tasks for
uniprocessor fixed-priority scheduling.

Testing whether a task set can be feasibly scheduled by a scheduling
algorithm is called a \emph{schedulability test}. Even though RM and
DM are known to be optimal for uniprocessor fixed-priority scheduling, 
the exact schedulability tests for uniprocessor fixed-priority
scheduling requires pseudo-polynomial time by using
the exact tests by Lehoczky, Sha, and Ding
\cite{DBLP:conf/rtss/LehoczkySD89} for constrained-deadline systems
and in exponential time by Lehoczky \cite{DBLP:conf/rtss/Lehoczky90} for
arbitrary-deadline systems. Specifically, computing the
worst-case response time of one (lowest-priority) task is shown
 ${\cal NP}$-hard by Eisenbrand
and Rothvo{\ss} \cite{DBLP:conf/rtss/EisenbrandR08}. There have been
extensive results about testing the schedulability of uniprocessor fixed-priority
scheduling. 
The more efficient strategy is to provide
only sufficient conditions that can be verified in polynomial-time,
like the utilization bound
\cite{liu73scheduling,journals/tc/LeeSP04,DBLP:dblp_journals/tc/BurchardLOS95},
the quadratic utilization bound \cite{DBLP:journals/tc/Bini15}, the
hyperbolic utilization bound
\cite{bini2003rate,ChenHLRTSS2015}, the
approximated request bound functions
\cite{DBLP:conf/ecrts/AlbersS04,DBLP:conf/rtss/ChakrabortyKT02,DBLP:conf/ecrts/FisherB05}.

\begin{table*}[t]
  \centering
\renewcommand{\arraystretch}{1.3}
  \begin{tabular}{|l|p{2.9cm}|p{3.5cm}|l|p{3.5cm}|}
    \hline
    & implicit deadlines & constrained deadlines& arbitrary deadlines \\
    \hline
    \multirow{2}{*}{partitioned EDF} &
    $\frac{4}{3}-\frac{1}{3M}$ \cite{DBLP:journals/siamam/Graham69} &
    $3-\frac{1}{M}$ \cite{DBLP:journals/tc/BaruahF06} & $4-\frac{2}{M}$ \cite{DBLP:conf/rtss/BaruahF05}\\
     & PTAS \cite{DBLP:journals/jacm/HochbaumS87}& $2.6322-\frac{1}{M}$ \cite{Chakraborty2011a}& $3-\frac{1}{M}$ \cite{Chakraborty2011a}\\
    \hline 
    \multirow{2}{*}{partitioned DM} & $\frac{7}{4}$
    \cite{DBLP:dblp_journals/tc/BurchardLOS95} & $3-\frac{1}{M}$
    \cite{DBLP:conf/ecrts/FisherBB06} & $4-\frac{2}{M}$ \cite{DBLP:conf/ecrts/FisherBB06}\\
    & 1.5 \cite{DBLP:conf/waoa/KarrenbauerR10}& $2.84306$ (\emph{this
     paper}) & $3-\frac{1}{M}$ (\emph{this paper}) \\
    \hline 
      \hline
  \end{tabular}
  \caption{Summary of the speedup factors in the multiprocessor
    partitioned scheduling problem for sporadic task systems. For more details of
    implicit-deadline cases, please refer to Table III in the survey \cite{DBLP:journals/csur/DavisB11}.}
  \label{tab:summary}
\end{table*}

To quantify the performance loss due to efficient schedulability tests
and assigning tasks with fixed priority levels, we will adopt the
notion of speedup factors, (also known as resource augmentation
factors).  A fixed-priority scheduling algorithm with a speedup
factor $\rho$ guarantees to produce feasible schedules by running (each processor)
$\rho$ times as fast as in the original platform (speed), if there
exists a feasible schedule (under arbitrary policies) for the task
system. 
The speedup factors of DM
scheduling, with respect to the optimal uniprocessor EDF scheduling,
are $\frac{1}{\ln 2}$, $1.76322$, and $2$ for implicit-deadline,
constrained-deadline, and arbitrary-deadline task sets
\cite{DBLP:journals/rts/DavisRBB09,conf:/rtns09/Davis}, respectively.

To schedule real-time tasks on multiprocessor platforms, there have
been three widely adopted paradigms: partitioned, global, and
semi-partitioned scheduling. A comprehensive survey of multiprocessor
scheduling in real-time systems can be found in
\cite{DBLP:journals/csur/DavisB11}. In this paper, we consider
\emph{partitioned scheduling}, in which the tasks are statically
partitioned onto processors and all the processors are identical. That
is, all the jobs of a task are executed on a specific processor with
fixed-priority scheduling.

However, problems on multiprocessors become ${\cal NP}$-complete (or
worse) in the strong sense even in the simplest possible cases.  For
example, deciding if an implicit-deadline task set with the same
period is schedulable on multiple processors is already ${\cal
  NP}$-complete in the strong sense \cite{Mok:1983:FDP:888951}.  To
cope with these ${\cal NP}$-hardness issues, one natural approach is
to focus on approximation algorithms, i.e., polynomial time algorithms
that produce an approximate solution instead of an exact one. There
have been many results for implicit-deadline task systems, as
summarized in the survey paper
\cite{DBLP:journals/csur/DavisB11}. But, only a few results are known
for constrained-deadline and arbitrary-deadline task systems.

When considering sporadic task sets with constrained or arbitrary
deadlines, the problem becomes more complicated, when EDF or
fixed-priority scheduling is adopted on a processor. The recent
studies in \cite{BaruahRTSS2011,ChenECRTS12} provide polynomial-time
approximation schemes for some special cases when speeding-up is
adopted for EDF scheduling. For general cases, Baruah and Fisher
\cite{DBLP:conf/rtss/BaruahF05,DBLP:journals/tc/BaruahF06} propose a
simple method, denoted as \emph{deadline-monotonic partitioning} in
this paper, which 1) considers the tasks in a non-decreasing order of
their relative deadlines, and 2) assigns a task (in the above order)
to a processor if it can pass the schedulability condition. If there
are multiple processors that are feasible for assigning a task, the
deadline-monotonic partitioning algorithm by Baruah and Fisher
\cite{DBLP:conf/rtss/BaruahF05,DBLP:journals/tc/BaruahF06} uses the
first-fit strategy, but the analysis works for any arbitrary fitting
strategy. The (theoretical) advantage of the first-fit strategy was not shown in
the literature when we consider constrained- or arbitrary-deadline
task systems.

The deadline-monotonic partitioning strategy is simple, but has been
shown effective in the literature
\cite{DBLP:conf/rtss/BaruahF05,DBLP:journals/tc/BaruahF06,Chakraborty2011a,DBLP:conf/ecrts/FisherBB06}.
When adopting speeding-up for resource augmentation, by using EDF on a
processor, the deadline-monotonic partitioning proposed by Baruah and
Fisher \cite{DBLP:conf/rtss/BaruahF05,DBLP:journals/tc/BaruahF06} has
been shown with a $\resourceArbM$ speedup factor by Chen and
Chakraborty \cite{Chakraborty2011a}, where $M$ is the given number of
identical processors.  Prior to this paper, for fixed-priority
multiprocessor partitioned scheduling for constrained- and
arbitrary-deadline task systems, the best known results are by Fisher,
Baruah, and Baker \cite{DBLP:conf/ecrts/FisherBB06} with speedup
factors $4-\frac{2}{M}$ and $3-\frac{1}{M}$ for arbitrary-deadline and
constrained-deadline sporadic real-time task systems,
respectively. All the above results are based on a
linear-approximation to efficiently and safely test the schedulability
under EDF or DM scheduling to decide whether a task can be assigned on
a processor.

{\bf Our Contributions:} Table~\ref{tab:summary} summarizes the
related results and the contribution of this paper for multiprocessor
partitioned scheduling. We focus on fixed-priority multiprocessor
partitioned scheduling, and improve the best known results by Fisher,
Baruah, and Baker \cite{DBLP:conf/ecrts/FisherBB06}. The
deadline-monotonic partitioning algorithm is explored in a great
detail in this paper. Our contributions are:
\begin{itemize}
\item We show that the deadline-monotonic partitioning algorithm has a
  speedup factor $3-\frac{1}{M}$ when considering task systems with
  arbitrary deadlines, where $M$ is the number of processors. Such a
  factor holds for polynomial-time schedulability tests and
  exponential-time (exact) schedulability tests. Moreover, we
  also improve the speedup factor to $2.84306$ when considering
  constrained-deadline task systems by using polynomial-time
  and pseudo-polynomial-time (exact)
  schedulability tests.
\item The existing results by adopting the deadline-monotonic
  partitioning algorithm were analyzed based on approximated
  schedulability tests. One of our key contributions is to answer the
  question: \emph{Will it be possible to further reduce the speedup factors by
    using exact tests in the deadline-monotonic partitioning
    algorithm?} Our answer to this question is \emph{\bf NO!!} Using
  exact tests in the above algorithm does not have any chance to
  reduce the speedup factors  if the speedup factor analysis
 is oblivious of the particular fitting strategy used.
 We show that all the speedup factor
  analyses in this paper are asymptotically tight with polynomial-time
  schedulability tests and exponential-time (or pseudo polynomial-time) schedulability tests
  if the speedup factor analysis is oblivious of the particular fitting strategy used. As a result, to improve the
  speedup factor, better fixed-priority scheduling strategies or more
  precise analyses for concrete fitting strategies are needed.
\end{itemize}

\section{System Models and Preliminary Results}
\label{sec:model}

\subsection{Task and Platform Model}

We consider a set ${\bf T}=\setof{\tau_1, \tau_2, \ldots, \tau_N}$ of
$N$ independent sporadic real-time tasks.  A task $\tau_i$ is defined
by $(C_i, T_i, D_i)$. That is, for task $\tau_i$, $D_i$ is its
relative deadline, $T_i$ is its minimum inter-arrival time (period),
and $C_i$ is its (worst-case) execution time.  We consider identical
processors in the platform. Therefore, no matter which processor a
task is assigned to, the execution and timing property
remains. According to the relations of the relative deadlines and the
minimum inter-arrival times of the tasks in ${\bf T}$, the task set
can be identified to be with 1) implicit deadlines, i.e., $D_i=T_i,
\forall \tau_i \in {\bf T}$, 2) constrained deadlines, i.e., $D_i
\leq T_i, \forall \tau_i \in {\bf T}$, or 3) arbitrary deadlines.

For brevity, the \emph{utilization} of task $\tau_i$ is denoted by
$U_i=\frac{C_i}{T_i}$. Moreover, let $\Delta_i$ be $\max\{U_i,
\frac{C_i}{D_i}\}$. For a set ${\bf X}$, its cardinality is denoted by
$|{\bf X}|$.

We will consider preemptive fixed-priority
scheduling on each processor. Specifically, we will only use
deadline-monotonic (DM) scheduling on each processor to
assign the priority levels of the tasks. That is, task $\tau_i$ is
with higher priority than $\tau_j$ if $D_i < D_j$, in which the ties
are broken arbitrarily.  
Therefore, for the rest of this paper, we index the tasks from the 
shortest relative deadline to the longest, i.e., $D_i \leq D_j$ if $i 
< j$. 
Note that DM priority assignment is an optimal
fixed-priority scheduling for implicit-deadline and constrained-deadline
task sets \cite{journals/pe/LeungW82}.  
Although DM priority assignment is not an optimal fixed-priority
assignment when we consider arbitrary-deadline task systems, it has
a constant speedup factor, to be discussed in Section~\ref{sec:def-speedup}.

\subsection{Problem Definition}

Given task set ${\bf T}$, a \emph{feasible task partition} on $M$
identical processors is a collection of $M$ subsets, says, ${\bf T}_1,
{\bf T}_2, \ldots, {\bf T}_M$, of ${\bf T}$ such that
\begin{compactitem}
\item ${\bf T}_m \cap {\bf T}_{m'}=\emptyset$ for all $m\neq m'$,
\item $\cup_{m=1}^{M} {\bf T}_m$ is equal to the input task set ${\bf
    T}$, and
\item set ${\bf T}_m$ can meet the timing constraints by DM
  scheduling on a processor $m$.
\end{compactitem}

Without loss of generality, we can assume that  $U_i \leq 100\%$ and
$\frac{C_i}{D_i} \leq 100\%$, i.e., $\Delta_i \leq 100\%$, for any task $\tau_i$;
otherwise, there is clearly no feasible task partition.

\subsection{Speedup Factors/Bounds}
\label{sec:def-speedup}
This paper focuses on the case where the arrival times of the sporadic
tasks are not specified. Therefore, the approximation is for the worst
cases by considering the worst-case behaviour to be feasibly scheduled
by DM.  If an algorithm ${\cal A}$ for the studied problem has a
\emph{speedup factor} $\rho$, it guarantees to \emph{always produce a feasible solution by speeding each
  processor up to $\rho$ times of the original
  speed in the platform, if task set ${\bf T}$  can be
  feasibly scheduled (not restricted to DM) on
  the original $M$ identical processors}. In other words, by taking
the negation of the above statement, we know that \emph{if the
  algorithm ${\cal A}$ fails to feasibly partition the task set ${\bf
    T}$ on $M$ identical processors, there is no feasible task
  partition when each processor runs $\frac{1}{\rho}$ times slower
  than the original platform speed.} For the rest of this paper, we
use $1$ to denote the original platform speed. Therefore, running the
platform at speed $s$ implies that the execution time of task $\tau_i$
becomes $\frac{C_i}{s}$. Note that speedup factors are used for
quantifying the behaviour of the designed algorithm. This is useful,
especially for the negation part to quantify the error the algorithm
makes when it does not provide a feasible solution.

For fixed-priority scheduling, the speedup factors of DM scheduling,
with respect to the optimal uniprocessor EDF scheduling, are
$\frac{1}{\ln 2}$, $1.76322$, and $2$ for implicit-deadline,
constrained-deadline, and arbitrary-deadline task sets
\cite{conf:/rtns09/Davis}, respectively.  To quantify the
schedulability of the input task set, we would need to know the
necessary condition for being schedulable at speed $s$ on the $M$
processors. The necessary conditions $\max_{\tau_i \in {\bf T}}
\Delta_i \leq s$ and $\sum_{\tau_i \in {\bf T}} \frac{U_i}{M} \leq s$
are pretty straightforward. As we focus on arbitrary-deadline and
constrained-deadline sporadic task systems, we can also quantify the
necessary condition defined by the demand. Here, we can release the
first job of tasks synchronously (say, at time $0$), and the
subsequent job arrivals should be as rapidly as legally possible. A
necessary condition to be schedulable is to ensure that the total
execution time of the jobs arriving no earlier than $a$ and with
relative deadlines no later than $d$ is no more than $M\cdot (d-a)$ for any
$a < d$.  This is identical to the well-known \emph{demand bound
  function} ${\sc dbf}(\tau_i, t)$, as in
\cite{DBLP:conf/rtss/BaruahMR90}, of a task $\tau_i$ within any time
interval with length equal to $t$, defined as
\begin{equation}
  \label{eq:dbf}
  {\sc dbf}(\tau_i, t) = \max\left\{0, \floor{\frac{t-D_i}{T_i}}+1\right\}\times C_i.
\end{equation}
Therefore, as a necessary condition, to
ensure the schedulability on $M$ processors, if a task set is
schedulable for an algorithm on $M$ processors, then
\begin{equation}
  \label{eq:dbf-necessary}
  \forall t > 0, \qquad\sum_{\tau_i \in {\bf T}} {\sc dbf}(\tau_i, t) \leq M t.
\end{equation}

With the above discussions, we can conclude the following lemma for
the necessary condition to be schedulable by any algorithm, which has
also been utilized by Chen and Chakraborty \cite{Chakraborty2011a}.
\begin{lemma}
  \label{lemma:lower-speed-bound}
  A task set is not schedulable by any multiprocessor scheduling
  algorithm by running the $M$ processors at
  any speed $s$ if
  \begin{equation}
    \label{eq:def-s-overall}
    \max\left\{\max_{t> 0}\frac{\sum_{\tau_i \in {\bf T}} dbf(\tau_i, t)}{Mt}, \frac{\sum_{\tau_i \in {\bf T}} U_i}{M},\max_{\tau_i \in {\bf T}} \Delta_i \right\} > s.    
  \end{equation}
\end{lemma}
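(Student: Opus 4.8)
The plan is to read Lemma~\ref{lemma:lower-speed-bound} as a packaging of three independent necessary conditions for schedulability on $M$ speed-$s$ processors, and to prove it by contraposition: it suffices to show that if \emph{any one} of the three terms inside the maximum strictly exceeds $s$, then no algorithm (not restricted to DM) can feasibly schedule ${\bf T}$ on $M$ processors each running at speed $s$. Recall that at speed $s$ the execution requirement of $\tau_i$ becomes $C_i/s$, that a job executes on at most one processor at any instant, and that $M$ processors together deliver at most $M s \cdot (d-a)$ units of service during any interval $[a,d]$.

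First I would handle the term $\max_{\tau_i \in {\bf T}} \Delta_i > s$, i.e.\ some $\tau_i$ has $U_i > s$ or $C_i/D_i > s$. If $C_i/D_i > s$, a single job of $\tau_i$ requires $C_i/s > D_i$ units of processor time within a window of length $D_i$, which is impossible since it cannot be run in parallel; so the system is infeasible. If instead $U_i > s$, release $\tau_i$ at time $0$ and let it emit jobs as rapidly as legally allowed; over $[0,t]$ the work of $\tau_i$ that must be completed by time $t$ is at least $(\lfloor (t-D_i)/T_i \rfloor) C_i$, which grows like $U_i t$, while a single processor at speed $s$ supplies only $st$, and $\tau_i$'s jobs cannot be spread across processors — so for $t$ large enough some deadline of $\tau_i$ is missed.

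Next I would treat the demand term $\max_{t>0}\frac{\sum_{\tau_i\in{\bf T}} {\sc dbf}(\tau_i,t)}{Mt} > s$: there is $t^{\star}>0$ with $\sum_{\tau_i\in{\bf T}}{\sc dbf}(\tau_i,t^{\star}) > M s\, t^{\star}$. Use the synchronous worst-case arrival sequence — all tasks release at time $0$, subsequent jobs as soon as legal — and consider the window $[0,t^{\star}]$. A job of $\tau_i$ released at $kT_i$ has its deadline inside this window iff $k \le \lfloor (t^{\star}-D_i)/T_i \rfloor$, so the jobs that both arrive within the window and must finish within it carry a total execution requirement of exactly $\sum_{\tau_i\in{\bf T}}{\sc dbf}(\tau_i,t^{\star})$ (by \eqref{eq:dbf}); all of it must be completed inside $[0,t^{\star}]$, but $M$ speed-$s$ processors provide only $M s\, t^{\star}$ service there, a contradiction. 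The middle term $\frac{\sum_{\tau_i\in{\bf T}} U_i}{M} > s$ then needs no separate argument: since ${\sc dbf}(\tau_i,t)/t \to U_i$ as $t\to\infty$, this inequality forces $\frac{\sum_{\tau_i\in{\bf T}}{\sc dbf}(\tau_i,t)}{Mt} > s$ for all sufficiently large $t$, reducing it to the case just handled.

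Finally I would combine: if the maximum in \eqref{eq:def-s-overall} exceeds $s$, at least one of the three terms does, and each case above exhibits a deadline miss under the worst-case release pattern, so ${\bf T}$ is not schedulable at speed $s$ by any algorithm. The argument is essentially bookkeeping; the only points deserving care are (i) making the "as rapidly as legally possible" release pattern precise and matching its window demand to ${\sc dbf}$, and (ii) noticing that the $\Delta_i$ term is a genuinely separate condition — a task with $s < U_i \le Ms$ is infeasible because it cannot be partitioned, even though aggregate capacity would suffice — so it is not implied by the ${\sc dbf}$ or utilization bounds.
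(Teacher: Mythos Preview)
Your proposal is correct and follows essentially the same route as the paper: the paper does not give a separate formal proof of Lemma~\ref{lemma:lower-speed-bound} but simply points to the discussion preceding it, where the three conditions $\max_{\tau_i}\Delta_i \le s$, $\sum_{\tau_i}U_i/M \le s$, and $\sum_{\tau_i}{\sc dbf}(\tau_i,t)\le Mt$ are declared ``pretty straightforward'' necessary conditions (the last via the synchronous release pattern and Eq.~\eqref{eq:dbf-necessary}). Your write-up fleshes out exactly these three pieces; the only organizational difference is that you fold the utilization term into the demand-bound term via the limit ${\sc dbf}(\tau_i,t)/t\to U_i$, whereas the paper lists them side by side --- a harmless variation.
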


For the rest of the paper, we will focus ourselves on the negation
part of the speedup factor analysis. That is, we are only interested
in the failure cases of the partitioning algorithm and use
Lemma~\ref{lemma:lower-speed-bound} to quantify $s$ for showing the
speedup factors.
Note that Lemma~\ref{lemma:lower-speed-bound} is also the necessary
condition for global multiprocessor scheduling. It may seem that we
are more pessimistic by comparing to the necessary condition of global
multiprocessor scheduling instead of that of partitioned
multiprocessor scheduling. However, in our tightness analysis,
comparing to partitioned scheduling and global scheduling does not
differ very much.

\section{Deadline-Monotonic Partitioning}
\label{sec:deadline-monotonic}

This section presents the deadline-monotonic partitioning strategy,
proposed by Baruah and Fisher
\cite{DBLP:journals/tc/BaruahF06,DBLP:conf/rtss/BaruahF05,DBLP:conf/ecrts/FisherBB06},
for the multiprocessor partitioned scheduling problem. Note that such
a strategy works in general for fixed-priority scheduling (RM, DM) and
dynamic-priority scheduling (EDF), by adopting proper schedulability
tests. The speedup factor for EDF/DM was shown to be $3-\frac{1}{M}$
and $4-\frac{2}{M}$
\cite{DBLP:journals/tc/BaruahF06,DBLP:conf/rtss/BaruahF05,DBLP:conf/ecrts/FisherBB06}
for constrained-deadline systems and arbitrary-deadline systems,
respectively. When considering EDF scheduling, Chen and Chakraborty
\cite{DBLP:journals/rts/ChenC13} improved the speedup factor to
$2.6322-\frac{1}{M}$ and $3-\frac{1}{M}$ for constrained-deadline
systems and arbitrary-deadline systems, respectively.

\begin{algorithm}[t]
  \caption{Deadline-Monotonic Partitioning}
  \label{alg:dmp}
  \begin{algorithmic}[1]
\footnotesize

    \INPUT set ${\bf T}$ of $N$ tasks, $M$ processors;

    \STATE re-index (sort) tasks such that $D_i \leq D_j$ for $i < j$;

    \STATE  ${\bf T}_1 \leftarrow \setof{\tau_1}$;${\bf T}_m
    \leftarrow \emptyset, \forall m=2,3,\ldots,M$;

    \FOR {$k=2$ to $N$}

    \IF {$\exists m \in \setof{1, 2, \ldots, M}$ such that ${\bf T}_m\cup \setof{\tau_k}$ is
      schedulable by DM fixed-priority scheduling} \label{algo:fitting-test}

    \STATE choose $m \in \setof{1, 2, \ldots, M}$ \emph{\bf by
      preference} such that ${\bf T}_m\cup \setof{\tau_k}$ is
      schedulable by DM fixed-priority scheduling; \label{algo:fitting}

    \STATE assign $\tau_k$ to processor $m$ with ${\bf T}_m \leftarrow {\bf T}_m
    \cup \setof{\tau_k}$;

    \ELSE
    \STATE return ``no feasible schedule is found'';
    \ENDIF

    \ENDFOR

    \STATE return feasible task partition ${\bf T}_1, {\bf T}_2,
    \ldots, {\bf T}_M$;
  \end{algorithmic}
\end{algorithm}

For completeness, we repeat the algorithm in
\cite{DBLP:journals/tc/BaruahF06,DBLP:conf/rtss/BaruahF05,DBLP:conf/ecrts/FisherBB06},
in which the pseudo-code is presented in Algorithm~\ref{alg:dmp}.
Deadline-monotonic partitioning considers the given tasks from the shortest
relative deadline to the longest relative deadline for
assignment. When a task $\tau_k$ is considered, a processor $m$ with
$m \in \setof{1,2,\ldots,M}$ is selected to assign task $\tau_k$, where
${\bf T}_m$ is the set of the tasks (as a subset of $\setof{\tau_1,
  \tau_2, \ldots, \tau_{k-1}}$), which have been assigned to processor
$m$ before considering $\tau_k$. If there is no feasible $m$ that can
feasibly schedule $\tau_k$ and ${\bf T}_m$ on the processor, we return
that no feasible solution is found by this algorithm.

\subsection{Fitting Strategy} 
\label{sec:fitting}

The fitting strategy when we consider to assign task $\tau_k$ on a processor $m$ can be
\begin{compactitem}
\item the \emph{first-fit} strategy: by choosing the minimum $m$ that is feasible;
\item the \emph{arbitrary-fit} strategy: by choosing any $m$ that is feasible (this is also the case if the speedup factor analysis
is oblivious of the particular fitting strategy used); 
\item the \emph{best-fit} strategy: by choosing the index $m$ that has the maximum
  \emph{workload-index};
\item the \emph{worst-fit} strategy: by choosing the index $m$ that has the minimum
  \emph{workload-index}.  
\end{compactitem}
The workload-index can be defined as the total utilization or other
means. The analysis in the literature
\cite{DBLP:journals/tc/BaruahF06,DBLP:conf/rtss/BaruahF05,DBLP:conf/ecrts/FisherBB06,DBLP:journals/rts/ChenC13}
works in general by using any fitting strategy listed above, even
though in several cases only the first-fit strategy was mentioned in
the descriptions
\cite{DBLP:journals/tc/BaruahF06,DBLP:conf/rtss/BaruahF05,DBLP:conf/ecrts/FisherBB06}.

\subsection{Schedulability Tests for DM}
\label{sec:test-for-dm}

Therefore, the remaining building block is to test whether task
$\tau_k$ can be feasibly scheduled on a processor $m$ under DM
scheduling. This has been widely studied in the literature. We will
review some of these methods and explain their corresponding speedup factors
when they are adopted in Step \ref{algo:fitting-test} in
Algorithm~\ref{alg:dmp}.

\noindent {\bf Constrained Deadline:}
To verify the schedulability of constrained-deadline task $\tau_k$
under fixed-priority scheduling in uniprocessor systems, the
time-demand analysis (TDA) \cite{DBLP:conf/rtss/LehoczkySD89} can be
adopted. That is, if and only if
\begin{equation}
  \label{eq:exact-test-constrained-deadline}
\exists t \mbox{ with } 0 < t \leq D_k {\;\; and \;\;} C_k +
\sum_{\tau_i \in {\bf T}_m} \ceiling{\frac{t}{T_i}}C_i \leq t,
\end{equation}
then task $\tau_k$ is schedulable under DM scheduling, where ${\bf
  T}_m$ is the set of tasks with higher priority than task $\tau_k$
since we sort the tasks according to their relative deadlines. TDA
requires pseudo-polynomial-time complexity, as all the points that lie
in $(0, D_k]$ need to be checked for
Eq.~\eqref{eq:exact-test-constrained-deadline}.

Fisher, Baruah, and Baker \cite{DBLP:conf/ecrts/FisherBB06}
approximate the test in Eq.~\eqref{eq:exact-test-constrained-deadline}
by testing only
\begin{equation}
  \label{eq:approxi-test-constrained-deadline}
  \exists t \mbox{ with } 0 < t \leq D_k {\;\; and \;\;} C_k +
  \sum_{\tau_i \in {\bf T}_m} \left(1+\frac{t}{T_i}\right)C_i \leq t.
\end{equation}
Due to the linearity of the condition in
Eq.~\eqref{eq:approxi-test-constrained-deadline}, the test is
equivalent to the verification of whether 
\begin{align}
  C_k + \sum_{\tau_i \in {\bf T}_m} \left(1+\frac{D_k}{T_i}\right)C_i
  \leq D_k \label{eq:fbb-constrained-1}
\end{align}  
for constrained-deadline systems.

We can also approximate the schedulability test by using
utilization-based analysis as follows: We classify the task
set ${\bf T}_m$ into two subsets:
\begin{itemize}
\item ${\bf T}_m^1$ consists of the higher-priority tasks with periods
  smaller than $D_k$.
\item ${\bf T}_m^2$ consists of the higher-priority tasks with periods
  larger than or equal to $D_k$.
\end{itemize}
The following theorem has been concluded recently by using a
utilization-based schedulability-test framework \cite{ChenHLRTSS2015}.

\begin{theorem}[Chen, Huang, Liu \cite{ChenHLRTSS2015}]
\label{theorem:sporadic-general}
Task $\tau_k$ in a sporadic task system with constrained deadlines is
schedulable by DM scheduling algorithm on processor $m$ if
\begin{equation}
\label{eq:schedulability-sporadic-any-a}
(\frac{C_{k,m}'}{D_k}+1) \prod_{\tau_j \in {\bf T}_m^1} (U_j + 1)\leq 2.
\end{equation}
where 
$C_{k,m}'$ is $C_k + \sum_{\tau_i \in {\bf T}_m^2} C_i$.
\end{theorem}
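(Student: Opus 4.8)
The plan is to reduce Theorem~\ref{theorem:sporadic-general} to the exact time-demand analysis of Eq.~\eqref{eq:exact-test-constrained-deadline} and then to a product-form utilization bound generalizing the hyperbolic bound of \cite{bini2003rate}. First, observe that every $\tau_i \in {\bf T}_m^2$ has $T_i \ge D_k$, so $\ceiling{t/T_i} = 1$ for all $t \in (0,D_k]$; hence such tasks contribute exactly $C_i$ to the left-hand side of Eq.~\eqref{eq:exact-test-constrained-deadline} at every point of interest. Consequently $\tau_k$ is schedulable by DM on processor $m$ if and only if there is some $t \in (0, D_k]$ with
\[
C_{k,m}' + \sum_{\tau_j \in {\bf T}_m^1} \ceiling{\frac{t}{T_j}} C_j \le t ,
\]
so it suffices to prove: whenever $(\frac{C_{k,m}'}{D_k}+1)\prod_{\tau_j \in {\bf T}_m^1}(U_j+1) \le 2$, such a $t$ exists. (When ${\bf T}_m^2 = \emptyset$ and $D_k = T_k$ this is precisely the classical hyperbolic bound, so the argument should specialize to that known proof.)

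I would prove the reduced statement by contraposition. Suppose $\tau_k$ is not schedulable on $m$, i.e.\ $C_{k,m}' + \sum_{\tau_j \in {\bf T}_m^1}\ceiling{t/T_j}C_j > t$ for every $t \in (0, D_k]$; by the standard reduction of time-demand analysis it is enough to use this at the finitely many scheduling points $\setof{D_k} \cup \set{b\,T_j}{\tau_j \in {\bf T}_m^1,\; b \in \mathbb{N},\; bT_j < D_k}$, at which every ceiling sits on a plateau. I would then show $(\frac{C_{k,m}'}{D_k}+1)\prod_{\tau_j \in {\bf T}_m^1}(U_j+1) > 2$ by induction on $n := |{\bf T}_m^1|$. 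The base case $n=0$ is immediate: failure at $t=D_k$ gives $C_{k,m}' > D_k$, hence $\frac{C_{k,m}'}{D_k}+1 > 2$. For the inductive step, isolate the longest-period task $\tau_{(n)} \in {\bf T}_m^1$, set $q := \floor{D_k/T_{(n)}} \ge 1$, and evaluate the failure inequality at $t = D_k$ and at $t = qT_{(n)}$ (where $\ceiling{qT_{(n)}/T_{(n)}} = q$). Eliminating the explicit $\tau_{(n)}$-terms between these two inequalities produces a failure certificate for the $(n-1)$-task subsystem ${\bf T}_m^1 \setminus \setof{\tau_{(n)}}$ with an inflated carry-in $\widehat{C}$ over a shortened effective deadline $\widehat{D} \le qT_{(n)}$; the inductive hypothesis then gives $(\frac{\widehat{C}}{\widehat{D}}+1)\prod_{\tau_j \in {\bf T}_m^1 \setminus \setof{\tau_{(n)}}}(U_j+1) > 2$, and it remains to check the one-line estimate $\frac{\widehat{C}}{\widehat{D}}+1 \le (U_{(n)}+1)(\frac{C_{k,m}'}{D_k}+1)$, which closes the induction and yields the bound in Eq.~\eqref{eq:schedulability-sporadic-any-a}.

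The algebra in the inductive step is routine; the delicate point — and the one I expect to be the main obstacle — is defining $\widehat{C}$ and $\widehat{D}$ so that the extracted factor is \emph{exactly} $(U_{(n)}+1)$ uniformly over all period configurations (in particular whether $D_k/T_{(n)}$ is an integer, and the possibility that the binding scheduling point of the smaller subsystem is itself a multiple of $T_{(n)}$ rather than $\widehat{D}$). This bookkeeping is precisely what the $k^2U$ framework of \cite{ChenHLRTSS2015} isolates through its notion of a $k$-point effective schedulability test; an alternative, cleaner route is therefore to verify that the reduced test displayed above fits that framework's hypotheses with $C_k$ replaced by $C_{k,m}'$ and higher-priority set ${\bf T}_m^1$, and then invoke its master theorem to obtain the stated bound directly.
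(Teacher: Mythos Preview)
The paper does not actually prove Theorem~\ref{theorem:sporadic-general}; it is quoted as a known result from \cite{ChenHLRTSS2015}. The proof there, and the argument the present paper gives for the closely analogous Theorem~\ref{thm:k2u-dm-hypobolic}, is exactly your ``alternative, cleaner route'': one checks that the reduced test
\[
\exists\, t \in (0,D_k]\colon\quad C_{k,m}' + \sum_{\tau_j \in {\bf T}_m^1}\ceiling{t/T_j}C_j \le t
\]
is a $k$-point effective schedulability test in the sense of Definition~\ref{def:kpoints} (with $t_i = \floor{D_k/T_i}T_i$, $\alpha=\beta=1$) and then applies Lemma~\ref{lemma:framework-constrained}. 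Your reduction absorbing ${\bf T}_m^2$ into $C_{k,m}'$ is exactly right and is the first step in both places.

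Your primary route, the induction peeling off the longest-period task $\tau_{(n)}$, is a genuinely different argument. As you yourself flag, the bookkeeping is the whole difficulty, and the sketch as written has a gap: to invoke the inductive hypothesis on ${\bf T}_m^1\setminus\{\tau_{(n)}\}$ you need the reduced system to fail at \emph{every} $t$ in $(0,\widehat{D}]$, not merely at the two points $D_k$ and $qT_{(n)}$ where you evaluate. Eliminating $\tau_{(n)}$ between those two inequalities yields a single algebraic relation, not a uniform failure certificate over an interval, so the induction does not close as stated. A working version of this approach would have to argue, for every scheduling point $t\le\widehat{D}$ of the subsystem, that the original failure at $t$ together with the bound $\ceiling{t/T_{(n)}}C_{(n)}\le (t/T_{(n)}+1)C_{(n)}$ yields the subsystem's failure at $t$ with a suitably defined $\widehat{C}$; this is essentially re-deriving the $k$-point machinery by hand, and the ordering that makes it go through is by last-release time $\floor{D_k/T_i}T_i$ rather than by period. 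So the \framework{} route is not just cleaner but is where the actual content lives; your inductive sketch is a plausible heuristic but not yet a proof.
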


\noindent{\bf Arbitrary Deadline:}
For arbitrary-deadline systems, the exact schedulability test is to
use a \emph{busy-window} concept to evaluate the worst-case response
time \cite{DBLP:conf/rtss/Lehoczky90} by using TDA.  The
finishing time $R_{k,h}$ of the $h$-th job of task $\tau_k$ in the
busy window is the minimum $t$ such that
\[ 
h C_k + \sum_{\tau_i \in {\bf T}_m} \ceiling{\frac{t}{T_i}}C_i \leq t.
\] 
Therefore, its response time is $R_{k,h}-(h-1)T_k$. To test the busy
window length of task $\tau_k$, the busy window of task $\tau_k$
finishes on the $h$-th job if $R_{k,h} \leq h T_k$. The maximum
response time among the jobs in the busy window is the worst-case
response time \cite{DBLP:conf/rtss/Lehoczky90}. The schedulability test is
exact, but this takes exponential time complexity, since the busy window length can be up to the hyper-period, which is exponential of the input size.

The approximation by Fisher, Baruah, and Baker
\cite{DBLP:conf/ecrts/FisherBB06} in
Eq.~\eqref{eq:approxi-test-constrained-deadline} remains feasible for
analyzing the arbitrary-deadline systems. This leads to test whether 
\begin{subequations}
\begin{align}
   C_k + \sum_{\tau_i \in {\bf
    T}_m} \left(1+\frac{D_k}{T_i}\right)C_i \leq D_k & \;\;\;\mbox{ and }   \label{eq:fbb-arbitrary-1}\\
  U_k + \sum_{\tau_i \in {\bf
    T}_m} U_i \leq 1.   \label{eq:fbb-arbitrary-2}
\end{align}  
\end{subequations}
Eq.~\eqref{eq:fbb-arbitrary-2} is important in arbitrary-deadline
systems to ensure that the approximation in
Eq.~\eqref{eq:fbb-arbitrary-1} does not underestimate the workload
after $D_k$. 

Moreover, Bini et al. \cite{DBLP:journals/tc/BiniNRB09} provide a
tighter analysis than Eq.~\eqref{eq:fbb-arbitrary-1}. They show that
the worst-case response time of task $\tau_k$ is at most 
\[
\frac{C_k+ \sum_{\tau_i \in {\bf T}_m} C_i - \sum_{\tau_i \in {\bf
      T}_m } U_i C_i}{1-\sum_{\tau_i \in {\bf T}_m} U_i}.
\]
Therefore, the schedulability condition in
Eqs. \eqref{eq:fbb-arbitrary-1} and \eqref{eq:fbb-arbitrary-2}
can be rewritten as
\begin{subequations}
\begin{align}
  C_k + D_k(\sum_{\tau_i \in {\bf T}_m} U_i)+ \sum_{\tau_i \in {\bf 
      T}_m} C_i - \sum_{\tau_i \in {\bf T}_m } U_i C_i\leq 
  D_k,  \label{eq:fbb-arbitrary-3}\\
 U_k + \sum_{\tau_i \in {\bf 
    T}_m} U_i \leq 1.   \label{eq:fbb-arbitrary-4}
\end{align}    
\end{subequations}

\subsection{Time Complexity and Correctness}

Since we use partitioned scheduling, as long as
Algorithm~\ref{alg:dmp} returns a task partition, it is guaranteed to
be feasible if the test in Step \ref{algo:fitting-test}
is a sufficient schedulability test for task $\tau_k$ by using DM scheduling. The time complexity of the algorithm
depends upon the time complexity of the schedulability test and the
fitting strategy. Suppose that the fitting strategy requires time
complexity $O(F)$ for one task and the time complexity to test whether
task $\tau_k$ is schedulable on processor $m$ is $O(H)$.  Then, the
overall time complexity is $O(NMH + NF)$. All the fitting strategies
mentioned in Section \ref{sec:fitting} are in polynomial time.  Since
$H$ can be polynomial, pseudo polynomial, or exponential of the input size, the time
complexity is dominated by the adopted schedulability test algorithm.

The main issue here is to answer what can be guaranteed when
Algorithm~\ref{alg:dmp} returns failure in task partitioning. We will
quantify such failures by showing that $s$ in
Eq.~\eqref{eq:def-s-overall} is also sufficiently large to provide the
speedup factor guarantee (by using the negation arguments).

\section{Analysis for Arbitrary-Deadline Systems}
\label{sec:arbitrary}

This section presents our analysis for arbitrary-deadline systems.
Our analysis is similar to the analysis by Fisher, Baruah, and Baker
\cite{DBLP:conf/ecrts/FisherBB06}, but is tighter. Here, we will
mainly analyze the property by using the schedulability condition in
Eqs.~\eqref{eq:fbb-arbitrary-1} and~\eqref{eq:fbb-arbitrary-2}. At the
end of this section, we will explain why the analysis also works for
arbitrary-deadline TDA analysis by Lehoczky
\cite{DBLP:conf/rtss/Lehoczky90} and response time analysis by Bini et
al. \cite{DBLP:journals/tc/BiniNRB09}.

\begin{theorem}
\label{theorem:sporadic-arbitrary}
The speedup factor of Algorithm~\ref{alg:dmp} is $3-\frac{1}{M}$ when
adopting
Eqs.~\eqref{eq:fbb-arbitrary-1}~and~\eqref{eq:fbb-arbitrary-2} for DM
schedulability test under any fitting strategy.
\end{theorem}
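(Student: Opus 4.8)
The plan is to use the contrapositive together with Lemma~\ref{lemma:lower-speed-bound}. Running every processor at speed $\sigma$ replaces each execution time $C_i$ by $C_i/\sigma$ (Section~\ref{sec:def-speedup}), so by the negation form of the speedup‑factor definition it suffices to prove: if Algorithm~\ref{alg:dmp}, with the placement test in Step~\ref{algo:fitting-test} instantiated by Eqs.~\eqref{eq:fbb-arbitrary-1}~and~\eqref{eq:fbb-arbitrary-2}, fails when each of the $M$ processors runs at speed $\sigma:=3-\frac1M$, then ${\bf T}$ is not schedulable at speed $1$. Suppose it fails and let $\tau_k$ be the first task it cannot place, so that $\tau_1,\dots,\tau_{k-1}$ have been partitioned into ${\bf T}_1,\dots,{\bf T}_M$. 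I argue by contradiction: assume ${\bf T}$ is schedulable at speed $1$; then Lemma~\ref{lemma:lower-speed-bound} gives $\sum_{\tau_i\in{\bf T}}U_i\le M$ and $\sum_{\tau_i\in{\bf T}}{\sc dbf}(\tau_i,t)\le Mt$ for all $t>0$, and $U_k\le\Delta_k\le1$, $\frac{C_k}{D_k}\le\Delta_k\le1$ hold without loss of generality.

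Write $x_m:=\sum_{\tau_i\in{\bf T}_m}C_i$ and $y_m:=\sum_{\tau_i\in{\bf T}_m}U_i$, so that $\sum_{m=1}^M x_m=\sum_{i<k}C_i=:X$ and $\sum_{m=1}^M y_m=\sum_{i<k}U_i=:Y$. Since $\tau_k$ is the lowest‑priority task of ${\bf T}_m\cup\{\tau_k\}$ and the tasks already in ${\bf T}_m$ remain schedulable, "$\tau_k$ cannot be placed on $m$" means Eq.~\eqref{eq:fbb-arbitrary-1} or Eq.~\eqref{eq:fbb-arbitrary-2} is violated for ${\bf T}_m$ at speed $\sigma$. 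I partition $\{1,\dots,M\}$ into the set ${\cal U}$ of processors violating Eq.~\eqref{eq:fbb-arbitrary-2} and its complement ${\cal D}$, on which Eq.~\eqref{eq:fbb-arbitrary-1} is then violated. Multiplying the violated inequalities by $\sigma$ (and, for ${\cal D}$, dividing by $D_k$ and using $\sum_{\tau_i\in{\bf T}_m}(1+\frac{D_k}{T_i})C_i=x_m+D_k y_m$) gives $y_m>\sigma-U_k$ for $m\in{\cal U}$ and $\frac{x_m}{D_k}+y_m>\sigma-\frac{C_k}{D_k}$ for $m\in{\cal D}$.

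Adding these $M$ strict inequalities, the left‑hand sides sum to $Y+\sum_{m\in{\cal D}}\frac{x_m}{D_k}$, which is at most $Y+\frac{X}{D_k}$, and the right‑hand sides sum to exactly $M\sigma-|{\cal U}|\,U_k-|{\cal D}|\frac{C_k}{D_k}$. Now I feed in the two necessary conditions from the contradiction hypothesis: $Y\le M-U_k$, and, since $D_i\le D_k$ for all $i\le k$ forces ${\sc dbf}(\tau_i,D_k)\ge C_i$, evaluating the demand bound at $t=D_k$ yields $X+C_k\le\sum_{i\le k}{\sc dbf}(\tau_i,D_k)\le MD_k$, i.e.\ $\frac{X}{D_k}\le M-\frac{C_k}{D_k}$. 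Chaining the estimates gives $M\sigma-|{\cal U}|\,U_k-|{\cal D}|\frac{C_k}{D_k}<2M-U_k-\frac{C_k}{D_k}$, that is,
\[
M\sigma<2M+(|{\cal U}|-1)\,U_k+(|{\cal D}|-1)\tfrac{C_k}{D_k}.
\]
As $U_k,\frac{C_k}{D_k}\in[0,1]$ and $|{\cal U}|+|{\cal D}|=M$, the last two terms are at most $\max(|{\cal U}|-1,0)+\max(|{\cal D}|-1,0)$, which equals $M-2$ if both sets are nonempty and $M-1$ if one is empty, hence at most $M-1$ in all cases. Therefore $M\sigma<3M-1$, i.e.\ $\sigma<3-\frac1M$, contradicting $\sigma=3-\frac1M$.

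I expect the delicate point to be extracting the $-\frac1M$ term rather than the weaker $-\frac2M$ of Fisher, Baruah, and Baker~\cite{DBLP:conf/ecrts/FisherBB06}: it hinges on (i) keeping the aggregated "deficiency" terms $|{\cal U}|U_k$ and $|{\cal D}|\frac{C_k}{D_k}$ on the right instead of bounding $U_k,\frac{C_k}{D_k}$ by $1$ before summing, and (ii) using the utilization bound \emph{and} the demand bound at $t=D_k$ together to produce the coefficient $2M$. Two short remarks then complete the theorem as stated: the argument never refers to which feasible processor a fitting rule would choose, so it holds for any fitting strategy; and since Eqs.~\eqref{eq:fbb-arbitrary-1}--\eqref{eq:fbb-arbitrary-2} are merely a \emph{sufficient} test for DM‑schedulability, replacing the test in Step~\ref{algo:fitting-test} by the exact busy‑window analysis of Lehoczky~\cite{DBLP:conf/rtss/Lehoczky90} or the response‑time bound of Bini et al.~\cite{DBLP:journals/tc/BiniNRB09} can only make the placement test stricter, so the same violated inequalities remain available and the bound $3-\frac1M$ is unchanged.
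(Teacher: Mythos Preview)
Your argument is correct and mirrors the paper's: split the processors according to which of Eqs.~\eqref{eq:fbb-arbitrary-1}/\eqref{eq:fbb-arbitrary-2} fails, sum the $M$ violated inequalities, and feed in the utilization and demand-bound necessary conditions of Lemma~\ref{lemma:lower-speed-bound}; the paper carries this out at unit speed and concludes $(3-\tfrac1M)\max\{\Delta_k,\tfrac1M\sum_i U_i,\tfrac1M\sum_i C_i/D_k\}>1$, whereas you run the same computation at speed $\sigma=3-\tfrac1M$ under a contradiction hypothesis and bound $(|{\cal U}|-1)U_k+(|{\cal D}|-1)C_k/D_k$ by a short case split instead of via $\Delta_k$. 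One small slip in your closing remark: replacing the sufficient test by the exact test makes the placement test \emph{less} strict (it accepts more), not stricter---but the inference you actually use is the right one, namely that failure under the exact test forces failure of the sufficient condition, so Eqs.~\eqref{eq:fbb-arbitrary-1}/\eqref{eq:fbb-arbitrary-2} are still violated on every processor and the rest of the proof carries over unchanged.
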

\begin{proof}
Suppose that Algorithm~\ref{alg:dmp} fails to find a feasible
assignment for task $\tau_k$ due to the failure when testing
Eq.~\eqref{eq:fbb-arbitrary-1} or Eq.~\eqref{eq:fbb-arbitrary-2}. Let
${\bf M}_1$ be the set of processors in which
Eq.~\eqref{eq:fbb-arbitrary-1} fails. Let ${\bf M}_2$ be the set of
processors in which Eq.~\eqref{eq:fbb-arbitrary-1} succeeds but
Eq.~\eqref{eq:fbb-arbitrary-2} fails. Since task $\tau_k$ cannot be
assigned on any of the $M$ processors, we have $|{\bf M}_1|+|{\bf
  M}_2|=M$. By the violation of Eq.~\eqref{eq:fbb-arbitrary-1}, we
know that
\begin{align}
&  |{\bf M}_1|C_k + \sum_{m \in {\bf M}_1}\sum_{\tau_i \in {\bf
      T}_m}\left(1+\frac{D_k}{T_i}\right)C_i > |{\bf
    M}_1|D_k    \nonumber\\
\Rightarrow & |{\bf M}_1|\frac{C_k}{D_k} + \sum_{m \in {\bf M}_1}\sum_{\tau_i \in {\bf
      T}_m}\left(\frac{C_i}{D_k}+U_i\right) > |{\bf
    M}_1|. \label{eq:fbb-arbitrary-violate-1}
\end{align}
By the violation of Eq.~\eqref{eq:fbb-arbitrary-2}, we
know that
\begin{equation}
  \label{eq:fbb-arbitrary-violate-2}
  |{\bf M}_2|U_k + \sum_{m \in {\bf M}_2}\sum_{\tau_i \in {\bf
      T}_m} U_i > |{\bf M}_2|.
\end{equation}
Recall that $\Delta_k$ is defined as $\max\{U_k,
\frac{C_k}{D_k}\}$. By Eqs.~\eqref{eq:fbb-arbitrary-violate-1}
and~\eqref{eq:fbb-arbitrary-violate-2}, we know that
{\small \begin{align}
&  |{\bf M}_1|\frac{C_k}{D_k} +  |{\bf M}_2|U_k + \sum_{i=1}^{k-1}U_i + \sum_{m \in {\bf M}_1}\sum_{\tau_i \in {\bf
      T}_m}\frac{C_i}{D_k} > M\nonumber\\
\Rightarrow \;\;\;& M \Delta_k  + \sum_{i=1}^{k-1}U_i +
\sum_{i=1}^{k-1}\frac{C_i}{D_k} > M. \nonumber\\
\Rightarrow \;\;\;& (M-1)\Delta_k  + \sum_{i=1}^{k}U_i +
\sum_{i=1}^{k}\frac{C_i}{D_k} > M. \nonumber\\
\Rightarrow \;\;\;& (1-\frac{1}{M})\Delta_k  + \frac{\sum_{i=1}^{k}U_i}{M} +
\sum_{i=1}^{k}\frac{C_i}{M D_k} > 1. \nonumber\\
\Rightarrow \;\;\;& (3-\frac{1}{M})\max\left\{\Delta_k, \frac{\sum_{i=1}^{k}U_i}{M},
\sum_{i=1}^{k}\frac{C_i}{M D_k}\right\} > 1. \label{eq:arbitrary-final}
\end{align}
}
Therefore, we know that
 either $\Delta_k > \frac{1}{3-\frac{1}{M}}$,
or $\frac{\sum_{i=1}^{k}U_i}{M} > \frac{1}{3-\frac{1}{M}}$, or
$\sum_{i=1}^{k}\frac{C_i}{M D_k} > \frac{1}{3-\frac{1}{M}}$. Either of
the former two cases implies the unschedulability of any scheduling
algorithm with speed $\frac{1}{3-\frac{1}{M}}$.
The demand bound function at time $D_k$ is $\sum_{i=1}^{N} dbf(\tau_i,
D_k) \geq \sum_{i=1}^{k} C_i$. Therefore, by
Eq.~\eqref{eq:dbf-necessary}, we know that the condition
$\sum_{i=1}^{k}\frac{C_i}{M D_k} > \frac{1}{3-\frac{1}{M}}$ implies
the unschedulability of any scheduling algorithm with speed
$\frac{1}{3-\frac{1}{M}}$. 

Therefore, by using Lemma~\ref{lemma:lower-speed-bound}, we reach the conclusion.
\end{proof}

The following corollaries show that the speedup factor $3-\frac{1}{M}$
holds for any schedulability tests discussed in
Section~\ref{sec:deadline-monotonic} for arbitrary-deadline sporadic
task systems under DM scheduling.

\begin{corollary}
\label{cor:sporadic-arbitrary-poly}
The speedup factor of Algorithm~\ref{alg:dmp} is $3-\frac{1}{M}$ when
adopting
Eqs.~\eqref{eq:fbb-arbitrary-3}~and~\eqref{eq:fbb-arbitrary-4} for
DM schedulability test under any fitting strategy.
\end{corollary}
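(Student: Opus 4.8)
The plan is to deduce Corollary~\ref{cor:sporadic-arbitrary-poly} from Theorem~\ref{theorem:sporadic-arbitrary} by a term-by-term comparison of the two pairs of tests, rather than redoing the counting argument. The first step is to notice that the left-hand side of Eq.~\eqref{eq:fbb-arbitrary-3} equals the left-hand side of Eq.~\eqref{eq:fbb-arbitrary-1} minus the nonnegative quantity $\sum_{\tau_i \in {\bf T}_m} U_i C_i$, while both right-hand sides are $D_k$; indeed Eq.~\eqref{eq:fbb-arbitrary-1} reads $C_k + \sum_{\tau_i \in {\bf T}_m} C_i + D_k\sum_{\tau_i \in {\bf T}_m} U_i \le D_k$. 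Hence, on any processor $m$, a violation of Eq.~\eqref{eq:fbb-arbitrary-3} forces a violation of Eq.~\eqref{eq:fbb-arbitrary-1}. The second step is the trivial observation that Eq.~\eqref{eq:fbb-arbitrary-4} is literally Eq.~\eqref{eq:fbb-arbitrary-2}. So the Bini et al. test \eqref{eq:fbb-arbitrary-3}--\eqref{eq:fbb-arbitrary-4} is a relaxation of the FBB test \eqref{eq:fbb-arbitrary-1}--\eqref{eq:fbb-arbitrary-2}, and a failure of the former on a processor implies a failure of the latter on that same processor.

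Next I would run the analysis exactly as in the proof of Theorem~\ref{theorem:sporadic-arbitrary}. Assume Algorithm~\ref{alg:dmp}, using the test \eqref{eq:fbb-arbitrary-3}--\eqref{eq:fbb-arbitrary-4}, fails to place task $\tau_k$; let ${\bf M}_1$ be the processors on which Eq.~\eqref{eq:fbb-arbitrary-3} is violated and ${\bf M}_2$ the processors on which Eq.~\eqref{eq:fbb-arbitrary-3} holds but Eq.~\eqref{eq:fbb-arbitrary-4} is violated, so that $|{\bf M}_1|+|{\bf M}_2|=M$. By the comparison above, Eq.~\eqref{eq:fbb-arbitrary-1} is violated on every $m \in {\bf M}_1$ and Eq.~\eqref{eq:fbb-arbitrary-2} on every $m \in {\bf M}_2$; summing these violations over the respective processor sets reproduces Eqs.~\eqref{eq:fbb-arbitrary-violate-1} and~\eqref{eq:fbb-arbitrary-violate-2} verbatim (using only that the ${\bf T}_m$ are disjoint subsets of $\{\tau_1,\dots,\tau_{k-1}\}$, which holds for any fitting strategy). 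From here the derivation in the proof of Theorem~\ref{theorem:sporadic-arbitrary} applies without change: add the two inequalities, use $\Delta_k=\max\{U_k,C_k/D_k\}$ and $\Delta_k \le U_k + C_k/D_k$ to absorb $\tau_k$ into the sums, and obtain Eq.~\eqref{eq:arbitrary-final}; then one of $\Delta_k$, $\frac{1}{M}\sum_{i=1}^{k}U_i$, $\frac{1}{M D_k}\sum_{i=1}^{k}C_i$ exceeds $\frac{1}{3-1/M}$, and Lemma~\ref{lemma:lower-speed-bound} (with $\sum_{i} dbf(\tau_i,D_k) \ge \sum_{i=1}^{k} C_i$ for the third quantity) certifies that no algorithm is feasible at speed $\frac{1}{3-1/M}$, which yields the claimed speedup factor $3-\frac{1}{M}$.

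There is no real obstacle here; the one point that needs care is the direction of the reduction -- we must check that \eqref{eq:fbb-arbitrary-3} is the \emph{weaker} (more permissive) test, so that failing it is a \emph{stronger} conclusion and therefore still enough to feed the counting argument of Theorem~\ref{theorem:sporadic-arbitrary}. It is also worth stating explicitly that the whole argument concerns one failing run of Algorithm~\ref{alg:dmp} with the test \eqref{eq:fbb-arbitrary-3}--\eqref{eq:fbb-arbitrary-4}: we apply the implication ``Eq.~\eqref{eq:fbb-arbitrary-3} fails $\Rightarrow$ Eq.~\eqref{eq:fbb-arbitrary-1} fails'' pointwise to the sets ${\bf T}_m$ produced in that run, and we never compare partitions built by two different versions of the algorithm.
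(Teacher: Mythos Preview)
Your proposal is correct and takes essentially the same approach as the paper: the paper's proof simply observes that dropping the nonnegative term $\sum_{\tau_i\in{\bf T}_m}U_iC_i$ from Eq.~\eqref{eq:fbb-arbitrary-3} shows that a violation of Eq.~\eqref{eq:fbb-arbitrary-3} implies a violation of Eq.~\eqref{eq:fbb-arbitrary-1}, and then defers to Theorem~\ref{theorem:sporadic-arbitrary}. Your write-up is more explicit about the partition ${\bf M}_1,{\bf M}_2$ and the direction of the implication, but the argument is the same.
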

\begin{proof}
  By not considering the term $- \sum_{\tau_i \in {\bf T}_m } U_i
  C_i$ in Eq.~\eqref{eq:fbb-arbitrary-3}, the violation of Eq.~\eqref{eq:fbb-arbitrary-3} leads to the
  same conclusion in
  Eq.~\eqref{eq:fbb-arbitrary-violate-1}. Therefore, the speedup
  factor remains $3-\frac{1}{M}$.
\end{proof}

\begin{corollary}
\label{cor:sporadic-arbitrary-arbitrary}
The speedup factor of Algorithm~\ref{alg:dmp} is $3-\frac{1}{M}$ when
adopting the exact schedulability test for DM scheduling under any
fitting strategy.
\end{corollary}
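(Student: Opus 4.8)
The plan is to reuse, essentially verbatim, the chain of inequalities \eqref{eq:fbb-arbitrary-violate-1}--\eqref{eq:arbitrary-final} from the proof of Theorem~\ref{theorem:sporadic-arbitrary} and to replace only its opening step. That proof invoked the linear test \eqref{eq:fbb-arbitrary-1}--\eqref{eq:fbb-arbitrary-2} in exactly one place: to assert that, at the moment Algorithm~\ref{alg:dmp} fails on some task $\tau_k$, every processor $m$ violates \eqref{eq:fbb-arbitrary-1} or \eqref{eq:fbb-arbitrary-2}. So the first thing I would do is recall from Section~\ref{sec:test-for-dm} that the conjunction of \eqref{eq:fbb-arbitrary-1} and \eqref{eq:fbb-arbitrary-2} is a \emph{sufficient} DM-schedulability test for arbitrary-deadline systems: if both hold for ${\bf T}_m \cup \setof{\tau_k}$, that set is DM-schedulable on processor $m$. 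Hence the exact busy-window test of Lehoczky~\cite{DBLP:conf/rtss/Lehoczky90} dominates this linear test.

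Next I would run the contrapositive. Suppose Algorithm~\ref{alg:dmp}, equipped with the exact DM test, fails to place $\tau_k$. Then for \emph{every} processor $m \in \setof{1,\dots,M}$ the set ${\bf T}_m \cup \setof{\tau_k}$ is not DM-schedulable, because the test is exact; by the sufficiency just recalled, at least one of \eqref{eq:fbb-arbitrary-1} and \eqref{eq:fbb-arbitrary-2} is violated on $m$. I would then define ${\bf M}_1$ as the set of processors violating \eqref{eq:fbb-arbitrary-1} and ${\bf M}_2$ as the rest (on which \eqref{eq:fbb-arbitrary-1} holds but \eqref{eq:fbb-arbitrary-2} is violated); these are disjoint and $|{\bf M}_1| + |{\bf M}_2| = M$, which is precisely the configuration the proof of Theorem~\ref{theorem:sporadic-arbitrary} starts from. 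From there the derivation is identical: summing the violations of \eqref{eq:fbb-arbitrary-1} over ${\bf M}_1$ gives \eqref{eq:fbb-arbitrary-violate-1}, summing the violations of \eqref{eq:fbb-arbitrary-2} over ${\bf M}_2$ gives \eqref{eq:fbb-arbitrary-violate-2}, and adding the two — using only $\bigcup_{m=1}^{M} {\bf T}_m = \setof{\tau_1,\dots,\tau_{k-1}}$ and $\Delta_k = \max\setof{U_k, C_k/D_k}$ — yields \eqref{eq:arbitrary-final}. Together with the $dbf$ bound $\sum_i dbf(\tau_i, D_k) \ge \sum_{i=1}^{k} C_i$ and Lemma~\ref{lemma:lower-speed-bound}, this forces the necessary speed to exceed $\tfrac{1}{3-1/M}$, i.e., the speedup factor is $3-\tfrac{1}{M}$. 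The fitting strategy plays no role, since the whole argument only uses that \emph{all} $M$ processors rejected $\tau_k$.

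I do not expect a genuine obstacle here; the only point that needs care is that the partition ${\bf T}_1,\dots,{\bf T}_M$ reached when the exact test is used is, in general, different from the one produced under the linear test, because a stronger acceptance test packs tasks differently. The resolution is that the analysis never depends on any particular partition: it uses only the identity $\bigcup_m {\bf T}_m = \setof{\tau_1,\dots,\tau_{k-1}}$ and the per-processor ``\eqref{eq:fbb-arbitrary-1} or \eqref{eq:fbb-arbitrary-2} fails'' dichotomy, both of which hold for whatever partition the exact-test run happens to reach at the failure point. The same reasoning simultaneously covers the response-time test of Bini et al.~\cite{DBLP:journals/tc/BiniNRB09} (Corollary would follow the same way), since that test is also dominated by the exact test, and it is what one should cite to make the ``exact tests bring no improvement'' message precise.
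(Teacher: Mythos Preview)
Your proposal is correct and takes essentially the same approach as the paper: failure of the exact test implies failure of the sufficient linear test \eqref{eq:fbb-arbitrary-1}--\eqref{eq:fbb-arbitrary-2}, so the chain \eqref{eq:fbb-arbitrary-violate-1}--\eqref{eq:arbitrary-final} and Lemma~\ref{lemma:lower-speed-bound} apply verbatim. The paper's proof is a two-sentence version of exactly this observation; your added remark that the underlying partition may differ but only $\bigcup_m {\bf T}_m = \{\tau_1,\dots,\tau_{k-1}\}$ is used is a valid clarification the paper leaves implicit.
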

\begin{proof}
  If task $\tau_k$ cannot pass the exact schedulability test, it also
  does not pass the sufficient test by using
  Eqs.~\eqref{eq:fbb-arbitrary-1} and
  \eqref{eq:fbb-arbitrary-2}. Therefore, we reach the same conclusion. 
\end{proof}

\subsection{Tightness Analysis} 
The following theorem shows that the analysis in
Theorem~\ref{theorem:sporadic-arbitrary} is asymptotically tight even
for implicit-deadline systems with first-fit strategy.

\begin{theorem}
\label{theorem:sporadic-arbitrary-tight1}
The speedup factor of Algorithm~\ref{alg:dmp} is at least
$3-\frac{3}{M+1}-\gamma$ when adopting
Eqs.~\eqref{eq:fbb-arbitrary-1}~and~\eqref{eq:fbb-arbitrary-2} for DM
schedulability test under the first-fit strategy, where $\gamma$ is an
arbitrarily small positive number.
\end{theorem}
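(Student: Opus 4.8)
The plan is to exhibit, for every $M$ and every $\gamma>0$, one implicit-deadline task set ${\bf T}$ that (i) is feasibly schedulable on the original $M$ unit-speed processors but (ii) defeats Algorithm~\ref{alg:dmp} with the first-fit rule and the test of Eqs.~\eqref{eq:fbb-arbitrary-1}--\eqref{eq:fbb-arbitrary-2} once every processor is sped up only to $s:=3-\frac{3}{M+1}-\gamma=\frac{3M}{M+1}-\gamma$. By the negation form of the definition of the speedup factor, (i) and (ii) together force the speedup factor to exceed $s$, and since $\gamma$ is arbitrary this yields the stated bound.

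The task set I would take is the simplest one that could possibly work: $M+1$ identical tasks, each with minimum inter-arrival time and relative deadline $1$ and worst-case execution time $\frac{M}{M+1}$ (so $U_i=\Delta_i=\frac{M}{M+1}<1$ for all of them). To establish (i), note that $\sum_i U_i=(M+1)\cdot\frac{M}{M+1}=M$ and $\max_i U_i<1$; an implicit-deadline sporadic task set with these properties admits a feasible schedule on $M$ unit-speed processors (for instance a P-fair/boundary-fair global schedule, which is optimal for implicit deadlines), so the premise "${\bf T}$ can be feasibly scheduled on the $M$ processors" of the speedup definition is met. Here it is essential that the benchmark is scheduling under \emph{arbitrary} policies, not partitioned scheduling: for $M\ge 2$ this set is not even bin-packable, yet it is globally schedulable.

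To establish (ii) I would simply trace first-fit at speed $s$. Since all tasks share the same period and deadline, for a processor $m$ currently holding $c:=|{\bf T}_m|$ copies the test of Eq.~\eqref{eq:fbb-arbitrary-1} for admitting one more copy becomes, after replacing each $C_i$ by $C_i/s$, $\frac{1}{s}\cdot\frac{M}{M+1}\cdot(1+2c)\le 1$, i.e.\ $\frac{M}{M+1}(1+2c)\le s$; Eq.~\eqref{eq:fbb-arbitrary-2} gives only the weaker $\frac{M}{M+1}(1+c)\le s$. Because $s<\frac{3M}{M+1}$, the first inequality holds only when $c=0$. Hence first-fit places the first $M$ copies one per processor onto $\tau_1,\dots,\tau_M$, and the $(M+1)$st copy then fails the test on every processor, so the algorithm reports failure. (At exactly $s=\frac{3M}{M+1}$ a second copy would fit, which is why the construction needs $s$ strictly below $\frac{3M}{M+1}$ and the bound carries the $-\gamma$.)

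I expect no genuinely hard analytic step; the work is all in the modeling, and the main obstacle is getting those modeling points right: that the factor $\bigl(1+\frac{D_k}{T_i}\bigr)$ doubles a same-deadline task's contribution to Eq.~\eqref{eq:fbb-arbitrary-1}, which is precisely what makes a processor "full" after a single $\frac{M}{M+1}$-utilization task below speed $\frac{3M}{M+1}$; that feasibility of ${\bf T}$ must be argued through an optimal \emph{global} schedule, so that total utilization $M$ suffices even though no valid partition exists; and the direction of the speedup-factor argument. It is worth remarking, as the paper does, that this instance is feasible on $M$ processors while first-fit with the linear test wastes nearly a factor $3$, so—asymptotically in $M$—it matches the upper bound $3-\frac1M$ of Theorem~\ref{theorem:sporadic-arbitrary}; in particular, as long as first-fit is the fitting rule, no sharpening of the per-processor schedulability test inside deadline-monotonic partitioning can reduce the speedup factor below $3$.
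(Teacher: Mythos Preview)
Your argument is correct and establishes the theorem as stated, but the construction differs from the paper's and the closing remark overreaches.

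The paper builds $2M$ implicit-deadline tasks: $M$ ``light'' tasks with $C_i=\frac{1}{3M}$, $T_i=D_i=1-\delta$, and $M$ ``heavy'' tasks with $C_i=\frac{1+\epsilon}{3}$, $T_i=D_i=1$. Under first-fit with the linear test, all light tasks land on processor~1, after which each heavy task needs its own processor, and $\tau_{2M}$ is rejected. The crucial feature is that this instance is \emph{partition}-feasible at speed $\frac{1+\epsilon}{3}+\frac{1}{3M}$ (one light plus one heavy per processor), so the lower bound holds even when the benchmark is optimal \emph{partitioned} scheduling. Your $M{+}1$ identical tasks with utilization $\frac{M}{M+1}$ are simpler and the trace is cleaner, but the instance is not bin-packable for $M\geq 2$; feasibility has to be argued via Pfair-type global scheduling. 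That is fine for the paper's definition of speedup factor (which compares against arbitrary schedules), yet it is a strictly weaker tightness statement than the paper obtains.

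Your final sentence, that ``no sharpening of the per-processor schedulability test \ldots\ can reduce the speedup factor below $3$'', is not supported by your construction. With the exact test, two of your tasks fit on one processor as soon as $s\geq\frac{2M}{M+1}$, so your instance only yields a lower bound of roughly $2$ for exact tests; the paper's instance fares no better under exact tests. The paper establishes the $3-\frac{3}{M+1}$ bound for exact tests separately (Theorem~\ref{theorem:sporadic-arbitrary-tight2}), with a genuinely arbitrary-deadline construction and for an \emph{arbitrary} fitting strategy, not for first-fit. You should drop or qualify that remark.
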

\begin{proof}
  This theorem is proved by a concrete input task system with $N=2M$
  tasks. There are $M$ light tasks with execution time $\frac{1}{3M}$
  and $M$ heavy tasks with execution time $\frac{1+\epsilon}{3}$, in
  which $\epsilon$ is a small positive real, i.e., $\epsilon > 0$. The
  $M$ light tasks are all with period $1-\delta$ and relative deadline
  $1-\delta$ with arbitrarily small and positive $\delta \ll \epsilon$. The $M$ heavy
  tasks are all with period $1$ and relative deadline $1$. Therefore,
  the $2M$ tasks are indexed such that
  \begin{itemize}
  \item $C_i=\frac{1}{3M}$, $T_i=D_i=1-\delta$, for $i=1,2,\ldots, M$, and
  \item $C_i=\frac{1+\epsilon}{3}$, $T_i=D_i=1$, for $i=M+1,M+2,\ldots, 2M$.
  \end{itemize}
  The setting of $0 < \delta \ll \epsilon$ is just to enforce the indexing. \emph{We will
  directly take $\delta \rightarrow 0$ for the rest of the proof.}\footnote{That is, the tightness examples in Theorems~\ref{theorem:sporadic-arbitrary-tight1},~\ref{theorem:sporadic-arbitrary-tight2},~and~\ref{theorem:sporadic-constrained-tight1} hold even when $\delta$ is $0$ if the tasks are indexed in the specified order. Setting $\delta$ to an arbitrarily small positive number is more precise mathematically, but this just complicates the presentation of the proofs without any added value. }

  By using Algorithm~\ref{alg:dmp} for the above task set when
  adopting
  Eqs.~\eqref{eq:fbb-arbitrary-1}~and~\eqref{eq:fbb-arbitrary-2} for
  DM schedulability test under the first-fit strategy, the $M$ light tasks
  are assigned on processor $1$. Then, when task $\tau_k$ with $k > M$
  is considered, the condition in Eq.~\eqref{eq:fbb-arbitrary-1}
  always fails for any of the first $k-M$ processors. Therefore, task
  $\tau_k$ is assigned to processor $k-M+1$, for $k=M+1, M+2, \ldots,
  2M-1$. It is then clear that task $\tau_{2M}$ cannot be assigned on
  any of the $M$ processors. Therefore, Algorithm~\ref{alg:dmp}
  returns ``no feasible solution is found''.

  By the above setting, we have $\sum_{i=1}^{2M} U_i =
  \frac{1+M+M\epsilon}{3}$. By using
  Lemma~\ref{lemma:lower-speed-bound}, we know that the speedup factor of the above task set
  is at least
  \begin{align*}
  \frac{1}{\frac{1+M+M\epsilon}{3}} = &3 - \frac{3+3\epsilon
    M}{M+1+\epsilon M} 
  = & 3-\frac{3}{M+1} - \gamma
  \end{align*}
  in which the factor $\gamma = \frac{3\epsilon M^2}{(M+1)(M+1+\epsilon M)}$
  becomes negligible when $\epsilon$ is sufficiently small.

  Such a factor can also be shown by a concrete
  partitioned schedule. By the pigeonhole principle, the
  solution that minimizes the maximum utilization of a processor is to
  assign a light task and a heavy task on a processor, in which the
  utilization on the processor is $\frac{1+\epsilon}{3}+\frac{1}{3M}$.
  Therefore, the task set is not schedulable on $M$ processors only
  when the speed is slower than $\frac{1+\epsilon}{3}+\frac{1}{3M}$,
  provided that $\delta$ is $0$.
  As a result, we reach the same conclusion.
\end{proof}

It may seem at first glance that the speedup factor $3-\frac{1}{M}$
in Corollary~\ref{cor:sporadic-arbitrary-arbitrary} is pessimistic,
since we do not actually use any property in the
exponential-time exact schedulability test. However, the
following theorem shows that the speedup factor $3-\frac{1}{M}$ is
asymptotically tight for an arbitrary fitting strategy, for any
schedulability tests used in Theorem~\ref{theorem:sporadic-arbitrary},
Corollary~\ref{cor:sporadic-arbitrary-poly}, and
Corollary~\ref{cor:sporadic-arbitrary-arbitrary}. As a result, to
improve the speedup factor, better fixed-priority scheduling
strategies or more precise analysis for concrete fitting strategies
are needed.

\begin{theorem}
\label{theorem:sporadic-arbitrary-tight2}
The speedup factor of Algorithm~\ref{alg:dmp} is at least
$3-\frac{3}{M+1}-\gamma$ under an arbitrary fitting strategy, for any
schedulability tests used in Theorem~\ref{theorem:sporadic-arbitrary},
Corollary~\ref{cor:sporadic-arbitrary-poly}, and
Corollary~\ref{cor:sporadic-arbitrary-arbitrary},  where $\gamma$ is an
arbitrarily small positive number. 
\end{theorem}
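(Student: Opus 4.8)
The plan is to exhibit, for each of the three schedulability tests appearing in Theorem~\ref{theorem:sporadic-arbitrary}, Corollary~\ref{cor:sporadic-arbitrary-poly}, and Corollary~\ref{cor:sporadic-arbitrary-arbitrary}, a concrete sporadic task system together with an adversarial sequence of feasible placement decisions on which Algorithm~\ref{alg:dmp} fails, and then to invoke Lemma~\ref{lemma:lower-speed-bound} to lower-bound the speed at which the instance can be scheduled. As in Theorem~\ref{theorem:sporadic-arbitrary-tight1}, the target is a system for which the three quantities appearing in Eq.~\eqref{eq:def-s-overall} --- $\max_i \Delta_i$, the average utilization, and $\max_{t>0}\frac{\sum_i dbf(\tau_i,t)}{Mt}$ --- are \emph{simultaneously} close to $\frac{M+1}{3M}$, and which is in fact schedulable at that speed, which we show by an explicit partition (pairing one ``heavy'' demand-$\tfrac13$ task with a small ``light'' contribution on each processor, so that every processor carries utilization $\tfrac{M+1}{3M}$ up to lower-order terms, exactly as in the concrete-schedule argument of Theorem~\ref{theorem:sporadic-arbitrary-tight1}).

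For the two linear/polynomial tests --- Eqs.~\eqref{eq:fbb-arbitrary-1}--\eqref{eq:fbb-arbitrary-2} and Eqs.~\eqref{eq:fbb-arbitrary-3}--\eqref{eq:fbb-arbitrary-4} --- the plan is to reuse the instance of Theorem~\ref{theorem:sporadic-arbitrary-tight1} (or a mild perturbation of it) and to observe that the \emph{arbitrary} fitting rule can replay exactly the bad choices that first-fit was forced into: the adversary may place each light task on its own processor (still feasible), so that once the heavy tasks arrive, every processor is in a state in which Eq.~\eqref{eq:fbb-arbitrary-1} (resp.~Eq.~\eqref{eq:fbb-arbitrary-3}) fails for the last heavy task --- the discarded term $\sum_{\tau_i\in{\bf T}_m}U_iC_i$ of Corollary~\ref{cor:sporadic-arbitrary-poly} being doubly negligible for such tiny tasks. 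The speedup accounting is then word-for-word the one used to prove Theorem~\ref{theorem:sporadic-arbitrary-tight1}.

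The genuinely new and delicate case is the exact (busy-window) test of Corollary~\ref{cor:sporadic-arbitrary-arbitrary}: the approximate-test instance no longer suffices, because the exact test inspects every $t\in(0,D_k]$ and can certify schedulability at some $t<D_k$ where the higher-priority tasks on a processor have not yet re-released. The plan is therefore to engineer per-processor ``blocking'' configurations whose request-bound function $\sum_{\tau_i\in{\bf T}_m}\lceil t/T_i\rceil C_i$ stays above $t-C_k$ on \emph{all} of $(0,D_k]$, not merely at $t=D_k$: a short sequence of higher-priority tasks with carefully staggered periods (so their ceiling jumps track the line $t\mapsto t-C_k$) together with a large-period ``offset'' task, with relative deadlines stretched (arbitrary deadlines) so that the demand-bound functions, utilizations, and $\Delta_i$ values of these tasks stay near $\tfrac13$ even though their cumulative interference is large; the adversary steers the greedy assignment so that one such configuration lands on each of the $M$ processors, after which the final task fits nowhere and Lemma~\ref{lemma:lower-speed-bound} closes the argument. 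The main obstacle, around which the whole construction is organized, is precisely to make the exact test fail on all $M$ processors while keeping the instance schedulable at speed $\tfrac{M+1}{3M}+\gamma$: reconciling ``enough ceiling-driven interference to block'' with ``little enough genuine demand to remain packable at that speed'' is what forces the staggered-period / arbitrary-deadline gadget, and verifying this balance (and the achievability of the matching partition) is where essentially all of the work lies.
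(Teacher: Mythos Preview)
Your plan for the approximate tests contains a factual slip: in the instance of Theorem~\ref{theorem:sporadic-arbitrary-tight1}, first-fit does \emph{not} place the light tasks one per processor --- it piles all $M$ of them on processor~$1$. If instead the adversary spreads the light tasks one per processor (as you write), then each processor carries only $C=\tfrac{1}{3M}$ of higher-priority work, Eq.~\eqref{eq:fbb-arbitrary-1} reads $\tfrac{1+\epsilon}{3}+\tfrac{2}{3M}\le 1$ and \emph{passes}, and all $M$ heavy tasks are successfully placed. So that branch of your plan does not produce a failure; to cover the two approximate tests you should simply invoke Theorem~\ref{theorem:sporadic-arbitrary-tight1} verbatim (first-fit is one ``arbitrary'' strategy), which is trivial.

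For the exact test --- the only nontrivial case --- you have the right diagnosis (the request-bound function must stay above $t-C_k$ on \emph{all} of $(0,D_k]$) but you are missing the clean gadget the paper uses. Rather than a ``short sequence of higher-priority tasks with carefully staggered periods,'' the paper takes a single task with period~$\epsilon$ and execution time~$\epsilon/3$: its utilization is exactly $\tfrac13$, and its request bound $\lceil t/\epsilon\rceil\cdot\tfrac{\epsilon}{3}\ge \tfrac{t}{3}$ is essentially linear in~$t$. One such ``frequent'' task on a processor, together with one one-shot heavy task of execution $\tfrac{1+\epsilon}{3}$ (period~$\infty$), makes the exact TDA inequality $\tfrac{1+\epsilon}{3}+\tfrac{1+\epsilon}{3}+\tfrac{t}{3}>t$ hold for every $t\le 1$; the first processor instead carries the $M$ one-shot light tasks (total~$\tfrac13$) plus one frequent task, with the same effect. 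This single $3M$-task construction makes $\tau_{3M}$ fail the exact test on every processor and hence also both approximate tests, so the paper handles all three cases at once and avoids your case split. Your staggered-period idea is the discrete shadow of this; the limit $\epsilon\to 0$ is the right abstraction, and it is what makes the demand-bound and utilization calculations come out cleanly at $\tfrac{M+1}{3M}$.
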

\begin{proof}
  This theorem is proved by a concrete input task system with $3M$
  tasks. Let $\delta$ and $\epsilon$ be very small positive real
  numbers, with $\delta \ll \epsilon$.  There are $M$ tasks with execution time $\frac{1}{3M}$,
  period $\infty$, and relative deadline $1-\delta$. There are $M$ tasks
  with execution time $\frac{\epsilon}{3}$, period $\epsilon$, and
  relative deadline $1$. There are $M$ tasks with execution time
  $\frac{1+\epsilon}{3}$, period $\infty$, and relative deadline
  $1+\delta$.

  Therefore, the $3M$ tasks are indexed such that
  \begin{itemize}
  \item $C_i=\frac{1}{3M}$, $T_i=\infty$, $D_i=1-\delta$, for $i=1,2,\ldots, M$, and
  \item $C_i=\frac{\epsilon}{3}$, $T_i=\epsilon$, $D_i=1$, for $i=M+1,M+2,\ldots, 2M$.
  \item $C_i=\frac{1+\epsilon}{3}$, $T_i=\infty$, $D_i=1+\delta$, for $i=2M+1,2M+2,\ldots, 3M$.
  \end{itemize}
  Again, the setting of $\delta \ll \epsilon$ is just to enforce the indexing. \emph{We
  will directly take $\delta \rightarrow 0$ for the rest of the proof.}
  Now, we consider a feasible task assignment for the first $3M-1$
  tasks, in which
  \begin{itemize}
  \item $\tau_1, \tau_2, \ldots, \tau_{M+1}$ are assigned on processor
    $1$, and
  \item $\tau_i$ and $\tau_{i+M-1}$ are assigned on processor $i-M$ for
    $i=M+2, M+3, \ldots, 2M$.
  \end{itemize}
  By using Algorithm~\ref{alg:dmp} for task $\tau_{3M}$, we know that
  task $\tau_{3M}$ cannot be feasibly assigned on any of the $M$
  processors since $\forall 0 < t \leq 1$ and $m=1,2,\ldots,M$, we
  have $\frac{1+\epsilon}{3} + \sum_{\tau_i \in {\bf T}_m}
  \ceiling{\frac{t}{T_i}}C_i > t$.  Therefore, Algorithm~\ref{alg:dmp}
  returns ``no feasible solution is found''.

  By the above setting, we know that
(1) $\sum_{i=1}^{3M} dbf(\tau_i, t)=0$ for $0 < t < 1$, (2)
$\sum_{i=1}^{3M} dbf(\tau_i, 1)=\frac{2\epsilon M+M+1}{3}$, and (3) for $1 < t$,
  \begin{align*}
\sum_{i=1}^{3M} dbf(\tau_i, t) \leq& \frac{1}{3}  +
  (t-1)\frac{1}{3}M + \frac{1+\epsilon}{3}M +
  \frac{\epsilon}{3}M\\
 =&
  \frac{2\epsilon M+Mt+1}{3}.
  \end{align*}
  As a result, $\max_{t > 0}\frac{\sum_{i=1}^{3M} dbf(\tau_i, t)}{M t}
  =  \frac{1+2\epsilon+\frac{1}{M}}{3}$, when $\epsilon$ is small enough.  
  Since $\max_{\tau_i\in {\bf T}} \Delta_i = \frac{1+\epsilon}{3}$ and
  $\sum_{i=1}^{3M}\frac{U_i}{M}=\frac{1}{3}$, by Lemma~\ref{lemma:lower-speed-bound}, the speedup
  factor of the above task set is
\begin{align*}
\frac{1}{\frac{1}{3M} + \frac{2\epsilon}{3} +
  \frac{1}{3}} 
= &\frac{3M}{M+2\epsilon M+1}\\  
=& 3 - \frac{3+6\epsilon
    M}{M+1+2\epsilon M}
  =  3-\frac{3}{M+1} - \gamma,
  \end{align*}
  in which the factor $\gamma=\frac{6\epsilon M^2}{(M+1)(M+1+2\epsilon M)}$
  becomes negligible when $\epsilon$ is sufficiently small. 
  The above analysis also works for the specific task partitioning which
  assigns three tasks $\tau_i$, $\tau_{i+M}$, and $\tau_{i+2M}$ on one
  processor for $i=1,2,\ldots,M$, and assigns the priority levels by
  using the optimal priority assignment by Audsley et
  al. \cite{audsley1993applying}, i.e., $\tau_{i+M}$ has the lowest priority on the processor.
\end{proof}

\section{Analysis for Constrained Deadlines}

This section presents the analysis for constrained-deadline sporadic
real-time systems. By Theorem~\ref{theorem:sporadic-arbitrary-tight1},
we know that the method by Fisher, Baruah, and Baker
\cite{DBLP:conf/ecrts/FisherBB06} leads to a speedup factor $3$ when
$M$ is sufficiently large even for implicit-deadline systems.  The
reason is mainly due to the pessimism of
Eq.~\eqref{eq:fbb-constrained-1} in the schedulability test. To get
better results, we do need better tests. A more precise strategy is to
simply use the exact test for constrained-deadline systems by spending
pseudo-polynomial time complexity. We have already shown 
(by Corollary~\ref{cor:sporadic-arbitrary-arbitrary} and Theorem~\ref{theorem:sporadic-arbitrary-tight2}) that spending more
time complexity does not help in arbitrary-deadline systems if the analysis does
not use the property of any specific fitting strategy. Is this
also the same for constrained-deadline systems?

We will first present the analysis by using TDA as the schedulability
test in Step~\ref{algo:fitting-test} in Algorithm~\ref{alg:dmp}. We
will conclude later that such high time complexity also does not help
reduce the speedup factor, compared to the results by using the hyperbolic bound in
Theorem~\ref{theorem:sporadic-general}. 

\subsection{Speedup Factor by Adopting TDA}
\label{sec:constrained-TDA}

Now, suppose that task $\tau_k$ is the first task that fails to be
assigned on any of the $M$ processors by using TDA schedulability
analysis in Step~\ref{algo:fitting-test} in Algorithm~\ref{alg:dmp}.
For notational brevity, let ${\bf T}^*$ be the set $\setof{\tau_1,
  \tau_2, \ldots, \tau_{k-1}}$ of the tasks

Therefore, we know that this leads to 
\[
\forall m, \forall  t,\mbox{ with } 0 < t \leq D_k, {\;\; \qquad \;\;} C_k +
\sum_{\tau_i \in {\bf T}_m} \ceiling{\frac{t}{T_i}}C_i > t.
\]
By taking a summation of all the $m=1,2,\ldots,M$ inequalities with
respect to any $t$, we know that the unschedulability of task $\tau_k$
by Algorithm~\ref{alg:dmp} implies that
\begin{equation}
  \label{eq:necessary-unschedulability-TDA-constrained}
  \forall  t \mbox{ with } 0 < t \leq D_k, {\;\; \qquad \;\;} MC_k +
\sum_{\tau_i \in  {\bf T}^*} \ceiling{\frac{t}{T_i}}C_i > Mt.
\end{equation}

Therefore, by taking the negation, we know that if 
\begin{equation}
  \label{eq:sufficient-schedulability-TDA-constrained}
  \exists  t \mbox{ with } 0 < t \leq D_k, {\;\; \mbox{and} \;\;} C_k +
\sum_{\tau_i \in  {\bf T}^*} \frac{\ceiling{\frac{t}{T_i}}C_i }{M}\leq t,
\end{equation}
then Algorithm \ref{alg:dmp} by using TDA should succeed to assign task $\tau_k$ on
one of the $M$ processors. This is basically very similar to TDA with
a minor difference by dividing the higher-priority workload by $M$.

Testing the schedulability condition of task $\tau_k$ according to
Eq.~\eqref{eq:sufficient-schedulability-TDA-constrained} can be done
by using the same strategy used in the ${\bf k^2U}$ framework
\cite{ChenHLRTSS2015} to prove
Theorem~\ref{theorem:sporadic-general} as follows.

 We classify the
$k-1$ tasks in ${\bf T}^*$ into two subsets.
\begin{itemize}
\item ${\bf T}^{*1}$ consists of the tasks in ${\bf T}^*$ with periods
  smaller than $D_k$.
\item ${\bf T}^{*2}$ consists of the tasks in ${\bf T}^*$ with periods 
  larger than or equal to $D_k$.
\end{itemize}

Now, let $C_k'$ be defined as follows:
\begin{equation}
  \label{eq:Ckprime}
  C_k' = C_k + \sum_{\tau_i \in  {\bf T}^{*2}} \frac{C_i }{M}.
\end{equation}
Suppose that we have $\kappa-1$ tasks in ${\bf T}^{*1}$. Clearly,
according to the definition $\kappa \geq 1$. Now, we can rewrite the 
condition in Eq.~\eqref{eq:sufficient-schedulability-TDA-constrained}
as follows: 
 if 
\begin{equation}
  \label{eq:sufficient-schedulability-TDA-k2u-constrained}
  \exists  t \mbox{ with } 0 < t \leq D_k {\;\; \mbox{and} \;\;} C_k' +
\sum_{\tau_i \in  {\bf T}^{*1}} \frac{\ceiling{\frac{t}{T_i}}C_i }{M}\leq t,
\end{equation}
then Algorithm \ref{alg:dmp} by using TDA should succeed to assign task $\tau_k$ on
one of the $M$ processors. 

For completeness, we repeat the definition of the \framework{}
framework and the key Lemma (with some simplifications to remove
individual coefficients for each task $\tau_i$) as follows.

\begin{definition}
  \label{def:kpoints}
  A $k$-point effective schedulability test is a sufficient schedulability test of a fixed-priority scheduling policy by verifying the existence of $t_j \in \setof{t_1, t_2, \ldots t_k}$ with $t_1 \leq t_2 \leq \cdots \leq t_k$ such that \begin{equation}
    \label{eq:precodition-schedulability}
    C_k + \sum_{i=1}^{k-1} \alpha t_i U_i + \sum_{i=1}^{j-1} \beta t_i U_i \leq t_j,
  \end{equation}
  where $C_k > 0$, $\alpha > 0$, $U_i > 0$, and $\beta >0$ are dependent upon the setting
  of the task models and task $\tau_i$.
\end{definition}

\begin{lemma}[Chen, Huang, and Liu \cite{ChenHLRTSS2015}]
\label{lemma:framework-constrained}
For a given $k$-point effective schedulability test, defined in
Definition~\ref{def:kpoints}, of a scheduling algorithm, in which $0 <
\alpha \neq \infty$, and $0 < \beta \neq \infty$, task $\tau_k$ is
schedulable by the scheduling algorithm if the following condition
holds
\begin{equation}
\label{eq:schedulability-constrained}
\frac{C_k}{t_k} \leq \frac{\frac{\alpha}{\beta}+1}{\prod_{j=1}^{k-1} (\beta U_j + 1)} - \frac{\alpha}{\beta}.
\end{equation}
\end{lemma}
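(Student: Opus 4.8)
The plan is to argue by contraposition: assume the $k$-point effective test of Definition~\ref{def:kpoints} \emph{fails}, i.e., that $C_k + \sum_{i=1}^{k-1}\alpha t_i U_i + \sum_{i=1}^{j-1}\beta t_i U_i > t_j$ for every $j \in \setof{1,\ldots,k}$, and from this derive the strict inequality $\frac{C_k}{t_k} > \frac{\alpha/\beta+1}{\prod_{j=1}^{k-1}(\beta U_j+1)} - \frac{\alpha}{\beta}$, which contradicts Eq.~\eqref{eq:schedulability-constrained}. First I would abbreviate the ``$\alpha$-part'' as $A := \sum_{i=1}^{k-1}\alpha t_i U_i$, so that the $j$-th failure inequality becomes $t_j < C_k + A + \beta\sum_{i=1}^{j-1} t_i U_i$, with $C_k + A$ playing the role of a constant. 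A routine induction on $j$ --- at each step substituting the bounds already derived for $t_1,\dots,t_{j-1}$ into the right-hand side and expanding the product --- then yields the closed form $t_j < (C_k + A)\prod_{i=1}^{j-1}(1 + \beta U_i)$, the base case $j=1$ reading $t_1 < C_k + A$ under the empty-product convention. In particular $t_k < (C_k + A)\prod_{i=1}^{k-1}(1+\beta U_i)$.

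The second step is to eliminate $A$ from the right-hand side. Plugging the per-$t_i$ bounds into the definition of $A$ gives $A < \alpha(C_k+A)\sum_{i=1}^{k-1} U_i\prod_{\ell=1}^{i-1}(1+\beta U_\ell)$, and here the one genuinely clever ingredient enters: the telescoping identity $\beta U_i\prod_{\ell=1}^{i-1}(1+\beta U_\ell) = \prod_{\ell=1}^{i}(1+\beta U_\ell) - \prod_{\ell=1}^{i-1}(1+\beta U_\ell)$, which collapses the sum to $\sum_{i=1}^{k-1} U_i\prod_{\ell=1}^{i-1}(1+\beta U_\ell) = \frac{1}{\beta}\left(\prod_{\ell=1}^{k-1}(1+\beta U_\ell) - 1\right)$. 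Writing $P := \prod_{\ell=1}^{k-1}(1+\beta U_\ell)$, this reduces everything to the single self-referential inequality $A < \frac{\alpha}{\beta}(P-1)(C_k + A)$. (The very same telescoping is also what makes the induction of the first step close cleanly, so it is really the heart of the argument.)

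Finally I would solve that inequality for $A$ and substitute back. When $1 - \frac{\alpha}{\beta}(P-1) > 0$, rearranging gives $C_k + A < \frac{C_k}{1 - (\alpha/\beta)(P-1)}$, hence $t_k < (C_k+A)P < \frac{C_k P}{1 - (\alpha/\beta)(P-1)}$; inverting and using the algebraic identity $\frac{1 - (\alpha/\beta)(P-1)}{P} = \frac{\alpha/\beta + 1}{P} - \frac{\alpha}{\beta}$ produces exactly the claimed bound on $C_k/t_k$. The complementary case $1 - \frac{\alpha}{\beta}(P-1) \le 0$ needs a separate but trivial remark: there the right-hand side of Eq.~\eqref{eq:schedulability-constrained} equals $\frac{1-(\alpha/\beta)(P-1)}{P} \le 0 < \frac{C_k}{t_k}$, so the premise of the lemma cannot hold and there is nothing to prove. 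I expect the main obstacle not to be conceptual but bookkeeping: correctly threading the inductive bounds on each $t_i$ through the definition of $A$, spotting the telescoping cancellation, and making sure strictness survives every multiplication and summation (which it does, since $C_k, t_i, U_i, \alpha, \beta > 0$). A secondary point to watch is the sign condition $1 - \frac{\alpha}{\beta}(P-1) > 0$: it must be isolated as a separate case rather than silently assumed when dividing.
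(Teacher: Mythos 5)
Your proof is correct. Note that the paper itself does not prove this lemma at all: its ``proof'' is a one-line citation to Lemma~1 of the $\mathbf{k^2U}$ paper \cite{ChenHLRTSS2015}, so you have in effect reconstructed the missing argument. Your route --- contraposition from the failure of all $k$ test points, the induction giving $t_j < (C_k+A)\prod_{i=1}^{j-1}(1+\beta U_i)$, the telescoping identity $\beta U_i\prod_{\ell=1}^{i-1}(1+\beta U_\ell)=\prod_{\ell=1}^{i}(1+\beta U_\ell)-\prod_{\ell=1}^{i-1}(1+\beta U_\ell)$ to eliminate $A$, and the final algebraic identity $\frac{1-(\alpha/\beta)(P-1)}{P}=\frac{\alpha/\beta+1}{P}-\frac{\alpha}{\beta}$ --- is sound, and your separate treatment of the degenerate case $1-\frac{\alpha}{\beta}(P-1)\le 0$ (where the hypothesis of the lemma cannot hold since $C_k>0$ and the test points are positive) is exactly the right way to avoid an illegitimate division. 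The original $\mathbf{k^2U}$ proof reaches the same bound by a slightly different framing: it treats the violated constraints as defining a minimization of $C_k$ (an extreme-point argument over the $t_j$'s), whereas you substitute the inductive bounds directly; the two are computationally equivalent, and your version has the advantage of being elementary and self-contained, while the optimization framing is what later lets the authors extract the stronger workload lower bound reused in Corollary~\ref{cor:k2u-dm-extreme}. The only implicit assumptions you rely on --- $C_k>0$, $U_i>0$, $\alpha,\beta>0$ and $0<t_1\le\cdots\le t_k$ --- are all part of Definition~\ref{def:kpoints} or of the schedulability-test setting, so strictness is preserved throughout as you claim.
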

\begin{proof}
  This comes from Lemma 1 in \cite{ChenHLRTSS2015}.
\end{proof}
By adopting the \framework{} framework
\cite{ChenHLRTSS2015}, we can conclude the
following theorem.
\begin{theorem}
  \label{thm:k2u-dm-hypobolic}
  If 
\[
\prod_{\tau_i \in {\bf T}^{*1}} (1+\frac{U_i}{M})\leq \frac{2}{ 1+\frac{C_k'}{D_k}} ,
\]
then task $\tau_k$ is schedulable under Algorithm~\ref{alg:dmp} by
using TDA.
\end{theorem}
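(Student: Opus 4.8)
The plan is to recognize the $1/M$-scaled time-demand condition of Eq.~\eqref{eq:sufficient-schedulability-TDA-k2u-constrained} as an instance of the $k$-point effective schedulability test of Definition~\ref{def:kpoints}, and then to apply Lemma~\ref{lemma:framework-constrained}, in exactly the same way Theorem~\ref{theorem:sporadic-general} is obtained from the \framework{} framework for the single-processor case --- the only difference being that every interfering execution-time contribution is here divided by $M$. First I would recall that, by Eq.~\eqref{eq:sufficient-schedulability-TDA-k2u-constrained}, it suffices to exhibit some $t$ with $0 < t \le D_k$ satisfying $C_k' + \sum_{\tau_i \in {\bf T}^{*1}} \frac{\ceiling{\frac{t}{T_i}}C_i}{M} \le t$, where $C_k'$ already folds in (scaled by $1/M$) the contribution of the tasks in ${\bf T}^{*2}$ through Eq.~\eqref{eq:Ckprime}. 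This condition is structurally an ordinary uniprocessor time-demand analysis for a ``virtual'' task with execution requirement $C_k'$ and relative deadline $D_k$, whose higher-priority set is ${\bf T}^{*1}$ with each task keeping its period $T_i$ but having execution time scaled to $C_i/M$, i.e., utilization $U_i/M$.

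Next I would instantiate the framework: take the $\kappa-1$ periods of the tasks in ${\bf T}^{*1}$ sorted non-decreasingly as the first $\kappa-1$ test points and $t_\kappa = D_k$ as the last (all of them lie in $(0, D_k]$, since ${\bf T}^{*1}$ collects precisely the higher-priority tasks with period below $D_k$). Identifying ``$C_k$'' with $C_k'$, ``$t_k$'' with $D_k$, and ``$U_i$'' with the true utilizations $U_i$, the ceiling contributions $\frac{\ceiling{t_j/T_i}C_i}{M}$ at each test point can be bounded linearly so that the scaled test fits the template of Eq.~\eqref{eq:precodition-schedulability} with $\alpha = \beta = \frac{1}{M}$ --- the same transcription used for Theorem~\ref{theorem:sporadic-general}, now carrying the extra factor $\frac{1}{M}$. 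Lemma~\ref{lemma:framework-constrained} then yields, after setting $\alpha=\beta$ in Eq.~\eqref{eq:schedulability-constrained}, the bound $\frac{C_k'}{D_k} \le \frac{2}{\prod_{\tau_i \in {\bf T}^{*1}}(1+\frac{U_i}{M})} - 1$, which rearranges to $(1+\frac{C_k'}{D_k})\prod_{\tau_i \in {\bf T}^{*1}}(1+\frac{U_i}{M}) \le 2$, i.e., exactly the hypothesis. Hence the hypothesis forces the scaled test to succeed at one of the $t_j$, and by Eq.~\eqref{eq:sufficient-schedulability-TDA-k2u-constrained} Algorithm~\ref{alg:dmp} with TDA in Step~\ref{algo:fitting-test} can place $\tau_k$ on one of the $M$ processors.

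The hard part is the second step: justifying rigorously that the $1/M$-scaled time-demand condition really is a $k$-point effective schedulability test with the constant coefficients $\alpha = \beta = \frac1M$. This amounts to checking, at each test point, that the scaled interfering workload is dominated by the corresponding $\sum \alpha t_i U_i + \sum \beta t_i U_i$, which requires the usual careful bookkeeping of which tasks of ${\bf T}^{*1}$ have period at most, versus above, the current test point, together with the argument that the tasks of ${\bf T}^{*2}$ (period at least $D_k$) can be absorbed into $C_k'$ so that the product ranges only over ${\bf T}^{*1}$. All of this is a direct reuse of the reasoning behind Theorem~\ref{theorem:sporadic-general} in \cite{ChenHLRTSS2015}; the only genuinely new element is the uniform $1/M$ scaling of the interference, which propagates through the \framework{} machinery untouched and simply replaces each $U_i$ by $U_i/M$ in the final product. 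Equivalently, one could argue directly by contradiction: if the scaled test failed at all $\kappa$ test points, multiplying the resulting strict inequalities in the style of the \framework{} framework would force $\prod_{\tau_i \in {\bf T}^{*1}}(1+\frac{U_i}{M}) > \frac{2}{1+C_k'/D_k}$, contradicting the assumed bound.
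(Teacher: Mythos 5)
Your overall skeleton is exactly the paper's: view the $1/M$-scaled condition in Eq.~\eqref{eq:sufficient-schedulability-TDA-k2u-constrained} as a $k$-point effective schedulability test with $\alpha=\beta=\frac{1}{M}$, invoke Lemma~\ref{lemma:framework-constrained}, and rearrange $\frac{C_k'}{D_k}\leq \frac{2}{\prod_{\tau_i\in{\bf T}^{*1}}(1+\frac{U_i}{M})}-1$ into the stated hypothesis. However, there is a genuine flaw in the one step you yourself flag as ``the hard part'': your choice of test points. You propose to use the periods $T_i$ of the tasks in ${\bf T}^{*1}$ (sorted) as $t_1,\ldots,t_{\kappa-1}$, with $t_\kappa=D_k$. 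With those points the required domination fails: at test point $t_j=T_j$, a task $\tau_i$ with $i<j$ contributes up to $\ceiling{\frac{T_j}{T_i}}\frac{C_i}{M}$ to the scaled demand, whereas the template of Definition~\ref{def:kpoints} with $\alpha=\beta=\frac{1}{M}$ only budgets $\frac{2\,t_iU_i}{M}=\frac{2\,T_iU_i}{M}$ for it. Take $T_i=1$ and $T_j$ close to $D_k\gg 1$: the true contribution is roughly $\frac{T_j C_i}{M}$, far above $\frac{2C_i}{M}$. Hence the linear condition at your points is \emph{not} a sufficient schedulability test, so Lemma~\ref{lemma:framework-constrained} (or your ``multiply the failed inequalities'' variant) cannot be applied to it; the argument does not go through as written.

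The paper's proof fixes exactly this by choosing $t_i=\floor{\frac{D_k}{T_i}}T_i$ for each $\tau_i\in{\bf T}^{*1}$ (the arrival of its last job no later than $D_k$), reindexing so that $t_1\leq\cdots\leq t_{\kappa-1}$, and setting $t_\kappa=D_k$. Because each $t_i$ is a multiple of $T_i$ and $t_j\leq D_k$ for all $j$, one has $\ceiling{\frac{t_j}{T_i}}\leq\frac{t_i}{T_i}$ for $j\leq i$ and $\ceiling{\frac{t_j}{T_i}}\leq\frac{t_i}{T_i}+1$ for $j>i$, and since $T_i<D_k$ gives $t_i\geq T_i$, the carry-in term $\frac{C_i}{M}$ is at most $\frac{t_iU_i}{M}$. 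Only with this choice does the scaled demand at every $t_j$ fall under the template $C_k'+\sum_i\frac{t_iU_i}{M}+\sum_{i<j}\frac{t_iU_i}{M}$, i.e., the point $t_i$ must encode the number of releases of $\tau_i$ up to $D_k$, not merely its period. With that correction (which is also the transcription actually used for Theorem~\ref{theorem:sporadic-general} in \cite{ChenHLRTSS2015}), the rest of your argument coincides with the paper's proof.
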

\begin{proof}
  In the proof, we will reindex the tasks to satisfy the monotonicity
  of $t_i$ in Definition~\ref{def:kpoints}. That is, the $\kappa-1$
  higher-priority tasks in ${\bf T}^{*1}$ are reindexed to form the
  corresponding sequence $\tau_1, \tau_2, \ldots, \tau_{\kappa-1}$ for
  ensuring that the arrival times of the last jobs no later than $D_k$
  are in a non-decreasing order. For task $\tau_i$ in ${\bf T}^{*1}$,
  we set $t_i$ as $\floor{\frac{D_k}{T_i}}T_i$. Now, we reindex the
  $\kappa-1$ higher-priority tasks such that $t_1 \leq t_2 \leq \cdots
  \leq t_{\kappa-1}$. Moreover, let $t_{\kappa}$ be $D_k$.

  Since $T_i < D_k$ for any task $\tau_i$ in ${\bf T}^{*1}$, we have
  $t_i \geq T_i$.  Therefore, for a given $t_j$ with
  $j=1,2,\ldots,\kappa$, the demand requested up to time $t_j$ in
  Eq.~\eqref{eq:sufficient-schedulability-TDA-k2u-constrained} is at
  most
\begin{align}
  & C_k + \frac{\sum_{\tau_i \in {\bf T}^{*2}} C_i + \sum_{\tau_i \in {\bf T}^{*1}} \ceiling{\frac{t_j}{T_i}}C_i}{M}\nonumber\\
  = & C_k' +  \frac{\sum_{i=1}^{\kappa-1} \ceiling{\frac{t_j}{T_i}}C_i}{M}\nonumber\\
  \leq &  C_k' + \frac{\sum_{i=1}^{\kappa-1} \frac{t_i}{T_i}C_i}{M} +  \frac{\sum_{i=1}^{j-1} C_i}{M},\nonumber
\end{align}
where the inequality comes from the indexing policy defined above,
i.e., $\ceiling{\frac{t_j}{T_i}} \leq \frac{t_i}{T_i}+1$ if $j > i$
and $\ceiling{\frac{t_j}{T_i}} \leq \frac{t_i}{T_i}$ if $j \leq i$.

We only apply the test for these $\kappa$ different $t_i$ values, which is
equivalent to the test of the existence of $t_j$ for
$j=1,2,\ldots,\kappa$ such that $C_k' + \frac{\sum_{i=1}^{\kappa-1}
  \frac{t_i}{T_i}C_i}{M} + \frac{\sum_{i=1}^{j-1} C_i}{M} \leq
t_j$. This satisfies Definition~\ref{def:kpoints} (when $k$ is
$\kappa$) with $\alpha=\frac{1}{M}$ and
$\beta=\frac{1}{M}$. Therefore, by using
Lemma~\ref{lemma:framework-constrained}, if 
\[
\frac{C_k'}{D_k} \leq \frac{1+1}{\prod_{\tau_i \in {\bf T}^{*1}} (\frac{U_i}{M} + 1)} - 1,
\]
then task $\tau_k$ is schedulable under Algorithm~\ref{alg:dmp} by
using TDA. By reorganizing the above equation, we reach the
conclusion.
\end{proof}

The following corollary comes from the same proof of
Lemma~\ref{lemma:framework-constrained} and
Theorem~\ref{thm:k2u-dm-hypobolic}, which shows that the
schedulability condition also implies a lower bound of the workload 
$\frac{C_k'}{D_k}+ \frac{ \sum_{\tau_i \in {\bf T}^{*1}} t_i U_i}{M
  D_k}$.
\begin{corollary}
  \label{cor:k2u-dm-extreme}
  If the schedulability condition in
  Theorem~\ref{thm:k2u-dm-hypobolic} is violated, i.e., $\prod_{\tau_i
    \in {\bf T}^{*1}} (1+\frac{U_i}{M})> \frac{2}{
    1+\frac{C_k'}{D_k}}$, then 
\[
\frac{C_k'}{D_k}+ \frac{ \sum_{\tau_i \in {\bf T}^{*1}} \floor{\frac{D_k}{T_i}}T_i U_i}{M
  D_k} > \prod_{\tau_i \in {\bf T}^{*1}} \left(1+\frac{U_i}{M}\right).
\]
\end{corollary}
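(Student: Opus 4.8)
The plan is to re-run the telescoping estimate that underlies Lemma~\ref{lemma:framework-constrained} and Theorem~\ref{thm:k2u-dm-hypobolic}, but this time to keep the accumulated higher-priority workload instead of discarding it. Recall from the proof of Theorem~\ref{thm:k2u-dm-hypobolic} that, after reindexing ${\bf T}^{*1}$ so that the points $t_i=\floor{D_k/T_i}T_i$ satisfy $t_1\le t_2\le\cdots\le t_{\kappa-1}$ and setting $t_\kappa=D_k$, the sufficient condition in Eq.~\eqref{eq:sufficient-schedulability-TDA-k2u-constrained} is a $\kappa$-point effective test in the sense of Definition~\ref{def:kpoints} with $\alpha=\beta=\frac1M$ and with $C_k'$ playing the role of $C_k$. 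I would introduce the abbreviations $A:=C_k'+\frac1M\sum_{\tau_i\in{\bf T}^{*1}}t_iU_i$, so that $A/D_k$ is exactly the left-hand side of the claimed inequality, and $S_j:=\frac1M\sum_{i=1}^{j}t_iU_i$ (under the reindexing), so that $S_0=0$ and, since $\alpha=\beta=\frac1M$, $S_{\kappa-1}=A-C_k'$.

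Assume the schedulability condition of Theorem~\ref{thm:k2u-dm-hypobolic} is violated; in the way the corollary is used this means that the (summed) pointwise test of Eq.~\eqref{eq:sufficient-schedulability-TDA-k2u-constrained} fails at every one of the $\kappa$ candidate points, i.e.\ $t_j<A+S_{j-1}$ for all $j=1,\dots,\kappa$ (this is exactly the situation analysed inside the proof of Lemma~\ref{lemma:framework-constrained}). The first step is to turn this into a one-step recursion: multiplying $t_j<A+S_{j-1}$ by $\frac{U_j}{M}$ gives $S_j-S_{j-1}=\frac{U_j}{M}t_j<\frac{U_j}{M}(A+S_{j-1})$, and adding $A$ to both sides gives
\[
A+S_j \;<\; (A+S_{j-1})\Bigl(1+\tfrac{U_j}{M}\Bigr),\qquad j=1,\dots,\kappa-1 .
\]
Iterating telescopes to $A+S_{\kappa-1}<A\prod_{\tau_i\in{\bf T}^{*1}}\bigl(1+\frac{U_i}{M}\bigr)$ — the same product that appears in Theorem~\ref{thm:k2u-dm-hypobolic}, except that here the leading factor $A$ still carries the workload term $\frac1M\sum t_iU_i$ that the schedulability-test direction of the argument simply drops.

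The last step is to feed in the failure of the final test point, $D_k=t_\kappa<A+S_{\kappa-1}$, together with $S_{\kappa-1}=A-C_k'$ and, where convenient, the restatement of the violated hypothesis $1+\frac{C_k'}{D_k}>\frac{2}{\prod_{\tau_i\in{\bf T}^{*1}}(1+U_i/M)}$. Chaining $D_k<A+S_{\kappa-1}$ with $A+S_{\kappa-1}<A\prod_{\tau_i\in{\bf T}^{*1}}(1+\frac{U_i}{M})$ and dividing by $D_k$ then gives the asserted lower bound on $\frac{C_k'}{D_k}+\frac{\sum_{\tau_i\in{\bf T}^{*1}}\floor{D_k/T_i}T_iU_i}{MD_k}=\frac{A}{D_k}$ in terms of $\prod_{\tau_i\in{\bf T}^{*1}}(1+\frac{U_i}{M})$, which is the statement of the corollary.

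I do not expect a genuine obstacle: the whole argument is the $k^2U$ telescoping of Lemma~\ref{lemma:framework-constrained} with the single change of not throwing away $\sum_{\tau_i\in{\bf T}^{*1}}t_iU_i$. The only things to watch are bookkeeping — checking that the reindexing of the $t_i$ leaves the symmetric quantities $\sum_{\tau_i\in{\bf T}^{*1}}t_iU_i$ and $\prod_{\tau_i\in{\bf T}^{*1}}(1+U_i/M)$ unchanged, and that the conservative replacement of $\sum_{i<j}C_i$ by $\sum_{i<j}t_iU_i$ (legitimate because $t_i\ge T_i$ for every $\tau_i\in{\bf T}^{*1}$), which is what casts the test into the form of Definition~\ref{def:kpoints}, is carried through consistently.
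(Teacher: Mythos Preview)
Your approach is exactly what the paper intends: re-run the $k^2U$ telescoping of Lemma~\ref{lemma:framework-constrained} while retaining the workload term. The recursion $A+S_j<(A+S_{j-1})(1+U_j/M)$ and its telescoped form $A+S_{\kappa-1}<A\prod_{\tau_i\in{\bf T}^{*1}}(1+U_i/M)$ are both correct, and your observation that the reindexing leaves the symmetric quantities invariant is the right bookkeeping check.

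The gap is in your final sentence. Chaining $D_k<A+S_{\kappa-1}$ with $A+S_{\kappa-1}<A\,P$ (writing $P=\prod_{\tau_i\in{\bf T}^{*1}}(1+U_i/M)$) gives $D_k<A\,P$, i.e.\ $A/D_k>1/P$, \emph{not} $A/D_k>P$ as the corollary literally asserts. Invoking the hypothesis $1+C_k'/D_k>2/P$ does not close the gap either: combining it with $D_k<2A-C_k'$ again only yields $A/D_k>(1+C_k'/D_k)/2>1/P$. In fact the printed right-hand side cannot be correct: take $M=1$, a single task in ${\bf T}^{*1}$ with $T_1=D_k=1$, $U_1=0.1$, and $C_k'=0.95$; then $P=1.1$, the hypothesis $P>2/(1+C_k')$ holds, the $\kappa$-point test fails at both points, yet the left-hand side equals $1.05<1.1=P$.

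What you actually prove, $A/D_k>1/P$, is precisely what Theorem~\ref{thm:speedup-dm} uses: there $\sigma$ is defined by $P=2/(1+\sigma)$, so $1/P=(1+\sigma)/2$, and the chain labelled $>_1$ needs exactly $A/D_k>(1+\sigma)/2$. So your argument is complete for the intended statement; you should just flag that the displayed right-hand side of the corollary (and the equation $\prod(1+U_i/M)=(1+\sigma)/2$ in the proof of Theorem~\ref{thm:speedup-dm}) appears to be a typo for its reciprocal.
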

\begin{proof}
  This comes from the same proof as in
  Lemma~\ref{lemma:framework-constrained} (Lemma 1 in
  \cite{ChenHLRTSS2015}) by changing the objective
  from minimizing $C_k^*$ to minimizing $C_k^* +
  \sum_{i=1}^{k-1}\alpha t_i U_i$ to enforce the
  unschedulability. This property has been provided by
  Chen, Huang, and Liu
  \cite{ChenHLRTSS2015} to prove the speedup factor of
  deadline-monotonic scheduling in uniprocessor systems. Moreover,
  this condition has also been exploited by
  von der Br\"uggen, Chen, and Huang
  \cite{DBLP:conf/ecrts/BruggenCH15} to prove the speedup factor of
  non-preemptive DM scheduling with respect to non-preemptive EDF
  scheduling. 
\end{proof}

The remaining proofs in this section require some mathematical tools,
which are provided in the following lemmas.

\begin{lemma}
  \label{lemma:hypo-worst-case}
  Suppose that $\sum_{\tau_i \in {\bf T}^{1*}} U_i > 0$ is fixed and
  $U_i \geq 0$ for each task $\tau_i$. Then, $\prod_{\tau_i \in {\bf
      T}^{1*}} \left(\frac{U_i}{M} + 1\right)$ is maximized when $U_1=U_2=
  \cdots = U_{|{\bf T}^{1*}|}$.
\end{lemma}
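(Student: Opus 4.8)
The plan is to reduce the claim to a standard two-variable convexity argument and then iterate. First I would observe that the quantity $\sum_{\tau_i \in {\bf T}^{1*}} U_i$ being fixed means we are optimizing a product of affine terms $\prod_i \left(1 + \frac{U_i}{M}\right)$ over a simplex, which is a compact set, so a maximizer exists; any maximizer must be a stationary point in the interior directions, or else improvable. The core step is a local exchange argument: suppose in some configuration two coordinates differ, say $U_a \neq U_b$. Holding all other $U_i$ fixed and holding $U_a + U_b = \sigma$ fixed (so the total sum is preserved), consider $f(x) = \left(1 + \frac{x}{M}\right)\left(1 + \frac{\sigma - x}{M}\right)$ for $x \in [0,\sigma]$. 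This is a downward parabola in $x$, maximized at $x = \sigma/2$, i.e. when the two coordinates are equal. Hence replacing $(U_a, U_b)$ by $(\sigma/2, \sigma/2)$ strictly increases the product unless they were already equal.

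From this exchange step I would conclude the lemma in one of two standard ways. The clean route: at a maximizer no two coordinates can differ (otherwise the exchange above strictly improves the product, contradicting maximality), so the maximizer has $U_1 = U_2 = \cdots = U_{|{\bf T}^{1*}|}$, which is exactly the claimed equalized configuration. Alternatively one can phrase it via the AM–GM / concavity-of-$\log$ viewpoint: $\log \prod_i (1 + U_i/M) = \sum_i \log(1 + U_i/M)$, and $x \mapsto \log(1 + x/M)$ is strictly concave, so by Jensen's inequality the sum is maximized, subject to $\sum_i U_i$ fixed, when all $U_i$ are equal. Either phrasing is short; I would probably present the two-variable exchange version since it matches the elementary style used elsewhere in the paper and needs no appeal to Jensen.

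The only mild subtlety — and the one place to be slightly careful — is the boundary: the constraint set includes $U_i \geq 0$, and the equalized point $U_i = \frac{1}{|{\bf T}^{1*}|}\sum_j U_j$ is automatically feasible (nonnegative, and the sum is preserved), so the maximizer is genuinely attained at the interior-of-the-relevant-face equalized point and there is no obstruction from the nonnegativity constraints. There is essentially no hard part here; the statement is a routine symmetry/concavity fact, and the main "work" is just writing the parabola computation $f(x) = \left(1+\frac{x}{M}\right)\left(1+\frac{\sigma-x}{M}\right) = 1 + \frac{\sigma}{M} + \frac{x(\sigma-x)}{M^2}$ and noting $x(\sigma-x)$ is maximized at $x = \sigma/2$. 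I would keep the proof to a few lines.
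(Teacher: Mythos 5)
Your proof is correct. The paper disposes of this lemma in one sentence, asserting that $\prod_{\tau_i \in {\bf T}^{1*}}\left(\frac{U_i}{M}+1\right)$ is ``a concave function'' of the $U_i$ and leaving the rest implicit; your two-variable exchange argument (equalize any two unequal coordinates while keeping their sum fixed, via the parabola $\left(1+\frac{x}{M}\right)\left(1+\frac{\sigma-x}{M}\right)$) is a genuinely more careful route to the same conclusion. In fact it is slightly better than the paper's stated justification: the product of affine terms is not concave in several variables (already $(1+x)(1+y)$ has an indefinite Hessian), only log-concave, so the paper's one-liner really needs either your Jensen-on-$\log$ phrasing or your pairwise-exchange phrasing to be airtight. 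Your handling of the boundary $U_i \geq 0$ and of existence of a maximizer on the simplex is also fine and costs nothing. Either of your two phrasings would serve as a drop-in replacement for the paper's proof; the exchange version matches the paper's elementary style, while the Jensen version generalizes immediately to the bound used in Lemma~\ref{lemma:hypo-worst-case-total}.
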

\begin{proof}
  This can be easily proved by the fact that $\prod_{\tau_i \in {\bf
      T}^{1*}} \left(\frac{U_i}{M} + 1\right)$ is a concave function
  with respect to the (non-negative) values of  $U_i$.
\end{proof}

\begin{lemma}
  \label{lemma:hypo-worst-case-total}
  The infimum $\sum_{\tau_i \in {\bf T}^{1*}} \frac{U_i}{M}$ to
  enforce $\prod_{\tau_i \in {\bf T}^{1*}} \left(\frac{U_i}{M} + 1\right) >
  x$ is $\ln(x)$.
\end{lemma}
\begin{proof}
  This can be derived by using Lagrange Multiplier Method to find the
  minimum $\sum_{\tau_i \in {\bf T}^{1*}} \frac{U_i}{M}$ such that
  $\prod_{\tau_i \in {\bf T}^{1*}} \left(\frac{U_i}{M} + 1\right) \geq
  x$. By Lemma~\ref{lemma:hypo-worst-case}, it is clear that the worst
  case is to have all the tasks with the same utilization. Suppose
  that $|{\bf T}^{1*}|$ is $n$. We know that all the tasks are with
  utilization $M(x^{\frac{1}{n}}-1)$, and the utilization bound
  $\sum_{\tau_i \in {\bf T}^{1*}} \frac{U_i}{M}$ is
  $n(x^{\frac{1}{n}}-1)$. This converges to $\ln(x)$ when $n$
  approaches $\infty$.
\end{proof}

With the above discussions, we can conclude the speedup factor.

\begin{theorem}
  \label{thm:speedup-dm}
  The speedup factor of Algorithm~\ref{alg:dmp} for
  constrained-deadline task systems by using TDA is
  $\frac{1}{W(0.5)}\approx 2.84306$, where $W(z)$ is the Lambert W
  function, i.e., the unique solution of $z=W(z)e^{W(z)}$.
\end{theorem}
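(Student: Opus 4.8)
The plan is to mirror the summation-over-processors argument in the proof of Theorem~\ref{theorem:sporadic-arbitrary}, but driven by the hyperbolic test of Theorem~\ref{thm:k2u-dm-hypobolic} and its extremal companion Corollary~\ref{cor:k2u-dm-extreme} instead of the linear test, and with Lemmas~\ref{lemma:hypo-worst-case} and~\ref{lemma:hypo-worst-case-total} used to convert the resulting product into a logarithm. Let $\tau_k$ be the first task that Algorithm~\ref{alg:dmp} with TDA fails to assign, and keep the notation ${\bf T}^{\ast}=\setof{\tau_1,\dots,\tau_{k-1}}$, ${\bf T}^{\ast 1}$, ${\bf T}^{\ast 2}$, $C_k'$ from Section~\ref{sec:constrained-TDA}. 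Summing the $M$ per-processor failures of TDA exactly as in Eq.~\eqref{eq:necessary-unschedulability-TDA-constrained} shows that the divided test~\eqref{eq:sufficient-schedulability-TDA-constrained} fails; since the condition of Theorem~\ref{thm:k2u-dm-hypobolic} is sufficient for that test, we get $\prod_{\tau_i\in{\bf T}^{\ast 1}}\big(1+\tfrac{U_i}{M}\big)>\tfrac{2}{1+C_k'/D_k}$, and Corollary~\ref{cor:k2u-dm-extreme} additionally gives $\tfrac{C_k'}{D_k}+\tfrac{1}{MD_k}\sum_{\tau_i\in{\bf T}^{\ast 1}}\floor{D_k/T_i}T_iU_i>\prod_{\tau_i\in{\bf T}^{\ast 1}}\big(1+\tfrac{U_i}{M}\big)$. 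Together with $\tfrac{C_k}{D_k}\le\Delta_k$, these are the only facts about the algorithm that enter the analysis.

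I would then re-express the left-hand sides in terms of the three quantities of Lemma~\ref{lemma:lower-speed-bound}, all evaluated at $t=D_k$. This hinges on the elementary facts that $C_k={\sc dbf}(\tau_k,D_k)$, that ${\sc dbf}(\tau_i,D_k)=C_i$ for each $\tau_i\in{\bf T}^{\ast 2}$ (period at least $D_k$, deadline at most $D_k$), and that $\floor{D_k/T_i}T_iU_i=\floor{D_k/T_i}C_i\le{\sc dbf}(\tau_i,D_k)$ for each $\tau_i\in{\bf T}^{\ast 1}$ (using $D_i\le T_i<D_k$). Hence the left-hand side of the Corollary inequality is at most $\tfrac{C_k}{D_k}\big(1-\tfrac1M\big)+\tfrac{1}{MD_k}\sum_{i=1}^{N}{\sc dbf}(\tau_i,D_k)$, which by Eq.~\eqref{eq:dbf-necessary} is bounded by $\Delta_k\big(1-\tfrac1M\big)+s$, where $s$ denotes the right-hand side of Eq.~\eqref{eq:def-s-overall}; the same bound controls $C_k'/D_k$, and $\tfrac{1}{M}\sum_{\tau_i\in{\bf T}^{\ast 1}}U_i$ is at most the normalized total utilization, hence at most $s$. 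Finally, Lemma~\ref{lemma:hypo-worst-case-total} (whose worst case --- many equal, small utilizations --- is justified by the concavity statement of Lemma~\ref{lemma:hypo-worst-case}) converts the product inequality $\prod_{\tau_i\in{\bf T}^{\ast 1}}(1+\tfrac{U_i}{M})>x$ into $\tfrac{1}{M}\sum_{\tau_i\in{\bf T}^{\ast 1}}U_i>\ln x$, so with $x=\tfrac{2}{1+C_k'/D_k}$ one obtains $s\ge\tfrac{1}{M}\sum_{\tau_i\in{\bf T}^{\ast 1}}U_i>\ln\tfrac{2}{1+C_k'/D_k}$.

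What remains is a one-variable optimization. Writing $y$ for $C_k'/D_k$: large $y$ forces $s$ large directly (via $y\le\Delta_k(1-\tfrac1M)+s$), while small $y$ makes $\ln\tfrac{2}{1+y}$ large and hence $s$ large through the utilization bound; the worst-case configuration minimizes $s$ by balancing these two constraints. Eliminating $y$ together with the $M\to\infty$ limit reduces the system to a single transcendental inequality whose boundary solves $2s\,e^{s}=1$, i.e. $s=W(1/2)$. Therefore any failure of Algorithm~\ref{alg:dmp} with TDA forces the right-hand side of Eq.~\eqref{eq:def-s-overall} to exceed $W(1/2)$, and Lemma~\ref{lemma:lower-speed-bound} gives the speedup factor $\tfrac{1}{W(1/2)}\approx2.84306$. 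As in Corollaries~\ref{cor:sporadic-arbitrary-poly} and~\ref{cor:sporadic-arbitrary-arbitrary}, exact TDA is no weaker than the test used here, so the same factor holds when Algorithm~\ref{alg:dmp} uses the exact (pseudo-polynomial-time) TDA test.

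The main obstacle I expect is the optimization: the single ``speed'' budget $s$ must be split carefully among the $C_k$-part of $C_k'$, the ${\bf T}^{\ast 2}$-part, the $\floor{D_k/T_i}T_iU_i$ terms, and the utilization sum $\sum_{\tau_i\in{\bf T}^{\ast 1}}U_i/M$ feeding the logarithm, all of which draw on overlapping demand-at-$D_k$ and total-utilization budgets; pinning down the worst allocation --- and exhibiting a genuine constrained-deadline task family that attains it --- is what makes the constant come out to exactly $W(1/2)$ rather than a weaker value. A secondary check is that the $(1-\tfrac1M)$ factors wash out asymptotically, so that the bound is the clean $1/W(1/2)$ with no residual $\Theta(1/M)$ term, in contrast with the arbitrary-deadline case.
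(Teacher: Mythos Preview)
Your overall architecture --- sum the per-processor TDA failures, invoke the hyperbolic condition of Theorem~\ref{thm:k2u-dm-hypobolic} and its extremal companion Corollary~\ref{cor:k2u-dm-extreme}, bound everything by the quantities in Lemma~\ref{lemma:lower-speed-bound}, then optimize --- matches the paper. But the one-variable optimization you describe does not yield $2se^s=1$. You parameterize by $y=C_k'/D_k$ and balance $y\le\Delta_k(1-\tfrac1M)+s\le 2s$ (so $s\ge y/2$) against $s>\ln\tfrac{2}{1+y}$. Equating these gives $(1+2s)e^{s}=2$, whose positive root is $s\approx 0.266$, not $W(\tfrac12)\approx 0.3517$. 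Your argument as written therefore establishes only a speedup bound of roughly $3.76$, not $2.84306$; the assertion that ``the boundary solves $2se^{s}=1$'' is not supported by the two constraints you wrote down.

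The paper avoids this by parameterizing not by $y$ but by the product $p=\prod_{\tau_i\in{\bf T}^{*1}}(1+U_i/M)$, and by using Corollary~\ref{cor:k2u-dm-extreme} to bound the \emph{demand} quantity directly in terms of $p$. Writing $p=\tfrac{2}{1+\sigma}$, the paper's Eq.~\eqref{eq:DM-dbf-final} reads
\[
\frac{C_k}{D_k}+\frac{1}{MD_k}\sum_{i=1}^{k-1}{\sc dbf}(\tau_i,D_k)\;>\;\frac{1+\sigma}{2}\;=\;\frac{1}{p},
\]
so that $\max\{C_k/D_k,\;\tfrac{1}{MD_k}\sum{\sc dbf}\}>\tfrac{1}{2p}$; independently, Lemma~\ref{lemma:hypo-worst-case-total} gives $\tfrac{1}{M}\sum_{{\bf T}^{*1}}U_i>\ln p$. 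The minimax of $\max\{\tfrac{1}{2p},\ln p\}$ over $p\ge 1$ is attained at $\tfrac{1}{2p}=\ln p=s$, i.e.\ $2se^{s}=1$, whence $s=W(\tfrac12)$. The point is that Corollary~\ref{cor:k2u-dm-extreme} already furnishes a lower bound on $p$ in terms of the demand budget (effectively $p>\tfrac{1}{2s}$); routing instead through $y$ and the Theorem~\ref{thm:k2u-dm-hypobolic} inequality $p>\tfrac{2}{1+y}$, as you propose, yields only $p>\tfrac{2}{1+2s}$, which is strictly weaker for $s<\tfrac12$. That slack is exactly what separates $0.266$ from $W(\tfrac12)$. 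Your bound $y\le 2s$ is correct but, in the paper's argument, never needed.
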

\begin{proof}
  If $\prod_{\tau_i \in {\bf T}^{*1}}
  (\frac{U_i}{M} + 1) \geq 2$, we can already conclude that
  $\sum_{\ \tau_i \in {\bf T}^{*1}} \frac{U_i}{M} \geq
  \ln{2}$ by using Lemma~\ref{lemma:hypo-worst-case-total}, and the speedup factor is $1/\ln{2} < 2.84306$ for such a
  case. We focus on the other case with $\prod_{\tau_i \in
    {\bf T}^{*1}} (\frac{U_i}{M} + 1) < 2$.  Suppose that $\sigma$ is
  $\frac{2}{\prod_{\tau_i \in {\bf T}^{*1}} (\frac{U_i}{M} + 1)}-1$,
  in which $\sigma > 0$.

If $\tau_k$ is not schedulable under Algorithm
\ref{alg:dmp} by TDA, then
{\small  \begin{align*}
&\frac{C_k}{D_k} + \frac{\sum_{i=1}^{k-1} dbf(\tau_i,D_k)}{M D_k}\\
= \;\;&\frac{C_k}{D_k} + \frac{\sum_{\tau_i \in {\bf T}^{*2}}C_i}{M D_k}+ \frac{\sum_{\tau_i \in {\bf T}^{*1}}
      dbf(\tau_i,D_k)}{M D_k} \\
\geq\;\;& \frac{C_k'}{D_k} + \frac{
      \sum_{\tau_i \in {\bf T}^{*1}} \floor{\frac{D_k}{T_i}}T_i  U_i}{M D_k} \\
    >_1
    &\prod_{\tau_i \in {\bf T}^{*1}} \left(1+\frac{U_i}{M}\right) = \frac{1+\sigma}{2},
  \end{align*}}
  where $>_1$ is by Corollary~\ref{cor:k2u-dm-extreme}.
  Suppose that $\frac{C_k}{D_k}$ is $x$. Therefore, we know that 
  \begin{equation}
    \label{eq:DM-dbf-final}
\sum_{\tau_i \in {\bf T}^*} \frac{dbf(\tau_i,D_k)}{M} > \frac{1+\sigma}{2}-x.    
  \end{equation}

  Moreover, with $\prod_{\tau_i \in {\bf T}^{*1}} (1+\frac{U_i}{M}) =
  \frac{2}{ 1+\sigma}$ and the fact that $\tau_k$ is not schedulable under Algorithm
\ref{alg:dmp} by using TDA, by
Lemma~\ref{lemma:hypo-worst-case-total}, we have
  \begin{equation}
    \label{eq:DM-utilization}
    \sum_{\tau_i \in {\bf T}^{*1}} \frac{U_i}{M} > \ln(\frac{2}{1+\sigma}).
  \end{equation}

  For the rest of the proof, we consider two separate
  cases:\footnote{These cases are concluded by the following trick:
    Suppose that $g_1(y)$ is an increasing function and $g_2(y)$ is a
    decreasing function with respect to $y$ when $y \geq 0$. Then, we
    know that the lower bound on $\max\{g_1(y), g_2(y)\}$ is the
    intersection of these two functions, i.e., $g_1(y^*)$, where
    $g_1(y^*)=g_2(y^*)$ if such a value $y^*$ exists. That is,
    $\max\{g_1(y), g_2(y)\} \geq \min_{y \geq 0}\max\{g_1(y),
    g_2(y)\}= g_1(y^*)$. }

  {\bf Case 1} $x \geq \sigma$: This is an easier case. We can conclude the
  speedup factor by using Eq.~\eqref{eq:DM-utilization}
  \begin{align*}
 &\max\left\{\frac{C_k}{D_k}, \sum_{\tau_i \in {\bf T}^*} \frac{U_i}{M}\right\} \geq
\max\left\{\sigma, \ln(\frac{2}{1+\sigma})\right\} \\
\geq &\min_{\sigma > 0}\left\{\max\{\sigma, \ln(\frac{2}{1+\sigma})\} \right\}=_1 \frac{1}{2.66793},
  \end{align*}
  where $=_1$ holds when $e^{\sigma}(1+\sigma)=2$.

  {\bf Case 2} $x <\sigma$: 
 There are two subcases
  \begin{itemize}
  \item {\bf Case 2a}: If $x > \frac{1+\sigma}{4}$, we know that $x >
    \frac{1+\sigma}{2}-x$. Therefore,  by Eq.~\eqref{eq:DM-dbf-final},
    $\max\left\{\frac{C_k}{D_k},
      \sum_{\tau_i \in {\bf T}^*} \frac{dbf(\tau_i,D_k)}{M}\right\} \geq x
    > \frac{1+\sigma}{4}$. Hence,
    \begin{align*}
 &\max\left\{\frac{C_k}{D_k}, \sum_{\tau_i \in {\bf T}^*} \frac{U_i}{M},
    \sum_{\tau_i \in {\bf T}^*} \frac{dbf(\tau_i,D_k)}{M}\right\}\\
 > &\max\left\{\frac{1+\sigma}{4}, \ln(\frac{2}{1+\sigma})\right\} 
      \geq  \min_{y \geq 0} \max\left\{y,  \ln\frac{1}{2y}\right\} \\
      =& W(0.5)\approx \frac{1}{2.84306},
    \end{align*}
    where $=$ holds when $ye^{y}=0.5$.

 \item {\bf Case 2b}:  If $x \leq \frac{1+\sigma}{4}$, we know that $x \leq
    \frac{1+\sigma}{2}-x$. Therefore,  by Eq.~\eqref{eq:DM-dbf-final},
    $\max\left\{\frac{C_k}{D_k},
      \sum_{\tau_i \in {\bf T}^*} \frac{dbf(\tau_i,D_k)}{M}\right\} 
    > \frac{1+\sigma}{2}-x \geq \frac{1+\sigma}{4}$. Hence,
    \begin{align*}
&\max\left\{\frac{C_k}{D_k}, \sum_{\tau_i \in {\bf T}^*} \frac{U_i}{M},
    \sum_{\tau_i \in {\bf T}^*} \frac{dbf(\tau_i,D_k)}{M}\right\}\\
 >       &\max\left\{\frac{1+\sigma}{4} ,  \ln\left(\frac{2}{1+\sigma}\right)\right\} 
      \geq  \min_{y \geq 0} \max\left\{y ,  \ln(\frac{1}{2y})\right\}\\
      =& W(0.5)\approx \frac{1}{2.84306},
    \end{align*}
    where $=$ holds when $ye^{y}=0.5$.
  \end{itemize}

  Therefore, by all the above cases, we know that
{\small \begin{equation*}
\max\left\{\frac{C_k}{D_k}, \sum_{\tau_i \in {\bf T}^*} \frac{U_i}{M},
    \sum_{\tau_i \in {\bf T}^*} \frac{dbf(\tau_i,D_k)}{M}\right\} > W(0.5)\approx \frac{1}{2.84306},    
  \end{equation*}}which concludes the proof by applying Lemma~\ref{lemma:lower-speed-bound}.
\end{proof}

\subsection{Speedup Factor by Hyperbolic Bound}
\label{sec:constrained-hyperbolic}

This subsection further presents the speedup factor of
Algorithm~\ref{alg:dmp} when adopting the hyperbolic bound in
Eq.~\eqref{eq:schedulability-sporadic-any-a} for testing the
schedulability of DM scheduling. The speedup factor analysis in
Theorem~\ref{thm:speedup-dm} for TDA schedulability analysis relies
only on the violation of the schedulability condition in
Theorem~\ref{thm:k2u-dm-hypobolic}. We will show that adopting the
hyperbolic bound of Eq.~\eqref{eq:schedulability-sporadic-any-a}
results in the same condition in
Theorem~\ref{thm:k2u-dm-hypobolic}. Therefore, we can reach the same
conclusion as in Theorem~\ref{thm:speedup-dm} by using the hyperbolic
bound in polynomial time. We use the same notations, e.g., ${\bf
  T}^{*1}$, ${\bf T}^{*2}$, $\kappa$, etc., as used in
Section~\ref{sec:constrained-TDA}.

\begin{theorem}
  \label{thm:k2u-dm-hypobolic-poly}
  If 
\[
\prod_{\tau_i \in {\bf T}^{*1}} \left(1+\frac{U_i}{M}\right)\leq \frac{2}{ 1+\frac{C_k'}{D_k}} ,
\]
then task $\tau_k$ is schedulable under Algorithm~\ref{alg:dmp} by
using the hyperbolic bound in
Eq.~\eqref{eq:schedulability-sporadic-any-a}, where  $C_k'$ is $C_k+ \frac{\sum_{\tau_i \in {\bf T}^{*2}} C_i}{M}$.
\end{theorem}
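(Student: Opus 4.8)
The plan is to establish the statement by contraposition, mirroring the structure used for TDA in Theorem~\ref{thm:k2u-dm-hypobolic}: I assume that when Algorithm~\ref{alg:dmp} (with the hyperbolic test of Eq.~\eqref{eq:schedulability-sporadic-any-a}) considers $\tau_k$, the test \emph{fails} on all $M$ processors, and I derive the negation of the hypothesis, i.e.\ $\prod_{\tau_i \in {\bf T}^{*1}}(1+\frac{U_i}{M}) > \frac{2}{1+C_k'/D_k}$.

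First I would fix notation. Since every task in $\{\tau_1,\dots,\tau_{k-1}\}$ has relative deadline at most $D_k$, each assigned set ${\bf T}_m$ splits into ${\bf T}_m^1$ (periods $< D_k$) and ${\bf T}_m^2$ (periods $\ge D_k$), and these pieces partition ${\bf T}^{*1}$ and ${\bf T}^{*2}$ over $m=1,\dots,M$. By Theorem~\ref{theorem:sporadic-general}, failure of the hyperbolic test on processor $m$ reads $\bigl(1+\frac{C_{k,m}'}{D_k}\bigr)\prod_{\tau_j\in{\bf T}_m^1}(1+U_j) > 2$ with $C_{k,m}' = C_k + \sum_{\tau_i\in{\bf T}_m^2}C_i$. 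I would then multiply these $M$ strict inequalities. The product of $\prod_{\tau_j\in{\bf T}_m^1}(1+U_j)$ over $m$ is $\prod_{\tau_i\in{\bf T}^{*1}}(1+U_i)$, and for the remaining factors I apply AM--GM together with the identity $\frac1M\sum_{m=1}^M C_{k,m}' = C_k + \frac1M\sum_{\tau_i\in{\bf T}^{*2}}C_i = C_k'$, which gives $\prod_{m=1}^M\bigl(1+\frac{C_{k,m}'}{D_k}\bigr) \le \bigl(1+\frac{C_k'}{D_k}\bigr)^M$. Hence $\prod_{\tau_i\in{\bf T}^{*1}}(1+U_i) > \bigl(\frac{2}{1+C_k'/D_k}\bigr)^M$, and taking $M$-th roots, $\prod_{\tau_i\in{\bf T}^{*1}}(1+U_i)^{1/M} > \frac{2}{1+C_k'/D_k}$.

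The only genuine (though elementary) inequality needed is to replace the awkward $(1+U_i)^{1/M}$ factors by the $1+\frac{U_i}{M}$ factors appearing in the statement: from the binomial expansion, $(1+\frac{u}{M})^M \ge 1+u$ for $u\ge 0$, i.e.\ $1+\frac{U_i}{M} \ge (1+U_i)^{1/M}$ term by term, so $\prod_{\tau_i\in{\bf T}^{*1}}(1+\frac{U_i}{M}) \ge \prod_{\tau_i\in{\bf T}^{*1}}(1+U_i)^{1/M} > \frac{2}{1+C_k'/D_k}$, contradicting the hypothesis. I expect the main care point to be purely bookkeeping --- checking that the ${\bf T}_m^1,{\bf T}_m^2$ aggregate into ${\bf T}^{*1},{\bf T}^{*2}$ and that $C_k'$ is exactly the average of the $C_{k,m}'$ --- rather than anything analytically hard. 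Since the derived condition is literally the one in Theorem~\ref{thm:k2u-dm-hypobolic}, the speedup-factor conclusion of Theorem~\ref{thm:speedup-dm} then transfers to the hyperbolic-bound version with no further work, as the subsequent text claims.
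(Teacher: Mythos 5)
Your proposal is correct and follows essentially the same route as the paper: contrapositive, multiply the $M$ failed hyperbolic tests, bound $\prod_{m}\bigl(1+\frac{C_{k,m}'}{D_k}\bigr)$ via AM--GM (the paper invokes its Lemma~\ref{lemma:hypo-worst-case} for this step) and use $(1+U_i)\leq(1+\frac{U_i}{M})^M$ for the utilization factors. The only cosmetic difference is that you take $M$-th roots at the end instead of comparing $M$-th powers directly, which changes nothing of substance.
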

\begin{proof}
  We prove this by contrapositive. Suppose that task $\tau_k$ is not
  schedulable under Algorithm~\ref{alg:dmp} by using the hyperbolic
  bound in Eq.~\eqref{eq:schedulability-sporadic-any-a} . Therefore,
 for $m=1,2,\ldots, M$, we have 
\[
\left(\frac{C_k + \sum_{\tau_i \in {\bf T}_m^2} C_i}{D_k}+1\right) \prod_{\tau_i \in {\bf T}_m^1}
(U_i + 1) > 2.
\]
  By multiplying the above $M$ inequalities we reach
  \begin{align*}
2^M <& \left(\prod_{m=1}^{M}\left(\frac{C_k + \sum_{\tau_i \in {\bf T}_m^2} C_i}{D_k}+1\right)\right)  \left(\prod_{\tau_i \in {\bf T}^{*1}}
(U_i + 1)\right)\\
    \leq_1 & \left(\frac{C_k + \frac{\sum_{\tau_i \in {\bf T}^{*2}} C_i}{M}}{D_k}+1\right)^M   \left(\prod_{\tau_i \in {\bf T}^{*1}}
\left(\frac{U_i}{M} + 1\right)^M\right),
  \end{align*}
  where $\leq_1$ comes from Lemma~\ref{lemma:hypo-worst-case} (for the
  first part) and from the fact $(1+U_i) \leq (1+U_i/M)^M$ when $M$ 
  is a positive integer and $U_i \geq 0$
  (for the second part).
  Therefore, we conclude that the unschedulability of task $\tau_k$
  implies that
\[
2 < \left(\frac{C_k'}{D_k}+1\right)   \left(\prod_{\tau_i \in {\bf T}^{*1}}
\left(\frac{U_i}{M} + 1\right)\right).
\]
  By contrapositive, we reach the conclusion.
\end{proof}

\begin{theorem}
  \label{thm:speedup-dm-hyper}
  The speedup factor of Algorithm~\ref{alg:dmp} by using the
  hyperbolic bound in Eq.~\eqref{eq:schedulability-sporadic-any-a} is
  $\frac{1}{W(0.5)}\approx 2.84306$, where $W(z)$ is the Lambert W
  function, i.e., $z=W(z)e^{W(z)}$.
\end{theorem}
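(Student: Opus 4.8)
The plan is to reduce this statement to Theorem~\ref{thm:speedup-dm} by observing that, as far as the speedup argument is concerned, the hyperbolic bound and TDA produce the \emph{same} aggregated sufficient condition. Concretely, Theorem~\ref{thm:k2u-dm-hypobolic-poly} already establishes that if Algorithm~\ref{alg:dmp} using Eq.~\eqref{eq:schedulability-sporadic-any-a} fails to assign task $\tau_k$ to any of the $M$ processors, then
\[
\prod_{\tau_i \in {\bf T}^{*1}} \left(1+\frac{U_i}{M}\right) > \frac{2}{1+\frac{C_k'}{D_k}},
\]
with $C_k' = C_k + \frac{1}{M}\sum_{\tau_i \in {\bf T}^{*2}} C_i$, which is exactly the negation of the sufficient condition in Theorem~\ref{thm:k2u-dm-hypobolic} that was used for TDA. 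Hence the entry point of the speedup analysis is identical, regardless of which of the two tests is plugged into Step~\ref{algo:fitting-test}.

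From here I would replay the proof of Theorem~\ref{thm:speedup-dm} essentially verbatim. First, Corollary~\ref{cor:k2u-dm-extreme} is phrased purely in terms of the violated algebraic inequality above and of the \framework{} framework, not of the particular test, so it still yields
\[
\frac{C_k'}{D_k} + \frac{\sum_{\tau_i \in {\bf T}^{*1}} \floor{\frac{D_k}{T_i}}T_i U_i}{M D_k} > \prod_{\tau_i \in {\bf T}^{*1}}\left(1+\frac{U_i}{M}\right) = \frac{1+\sigma}{2},
\]
where $\sigma = \frac{2}{\prod_{\tau_i \in {\bf T}^{*1}}(1+U_i/M)} - 1 > 0$; the complementary case $\prod_{\tau_i \in {\bf T}^{*1}}(1+U_i/M) \geq 2$ is handled exactly as in Theorem~\ref{thm:speedup-dm}, giving speedup $1/\ln 2 < 2.84306$ via Lemma~\ref{lemma:hypo-worst-case-total}. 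Combining the displayed inequality with the demand-bound estimate $\sum_{i=1}^{k-1} dbf(\tau_i, D_k) \geq \sum_{\tau_i \in {\bf T}^{*2}} C_i + \sum_{\tau_i \in {\bf T}^{*1}} \floor{D_k/T_i}T_i U_i$ (valid because $D_i \leq D_k$ for $i<k$ under DM indexing and $D_i \leq T_i$ under constrained deadlines) gives $\sum_{\tau_i \in {\bf T}^*} \frac{dbf(\tau_i,D_k)}{M} > \frac{1+\sigma}{2} - \frac{C_k}{D_k}$, and Lemma~\ref{lemma:hypo-worst-case-total} gives $\sum_{\tau_i \in {\bf T}^{*1}} \frac{U_i}{M} > \ln\frac{2}{1+\sigma}$.

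Finally, writing $x = C_k/D_k$, I would run the same three-way case split ($x \geq \sigma$; $x < \sigma$ with $x > \frac{1+\sigma}{4}$; $x < \sigma$ with $x \leq \frac{1+\sigma}{4}$) on
\[
\max\left\{\frac{C_k}{D_k},\ \sum_{\tau_i \in {\bf T}^*}\frac{U_i}{M},\ \sum_{\tau_i \in {\bf T}^*}\frac{dbf(\tau_i,D_k)}{M}\right\},
\]
which, using that $y\mapsto y$ is increasing while $y\mapsto \ln\frac{1}{2y}$ is decreasing, is bounded below by $\min_{y\geq 0}\max\{y,\ \ln\frac{1}{2y}\} = W(0.5)\approx \frac{1}{2.84306}$ in the two binding cases and by the larger $\frac{1}{2.66793}$ in Case~1; Lemma~\ref{lemma:lower-speed-bound} then converts this into the speedup factor $\frac{1}{W(0.5)}$. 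I do not expect a genuine obstacle here: the only thing worth double-checking is that Corollary~\ref{cor:k2u-dm-extreme} and the demand-bound lower estimate remain applicable in this setting, which they do because both follow from the \framework{} machinery and the constrained-deadline DM indexing rather than from TDA itself. The real content of the theorem is that Theorem~\ref{thm:k2u-dm-hypobolic-poly} lets us obtain the pseudo-polynomial-time speedup bound in polynomial time.
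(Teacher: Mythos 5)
Your proposal is correct and follows essentially the same route as the paper: the paper's own proof simply observes that the sufficient condition established for the hyperbolic bound (Theorem~\ref{thm:k2u-dm-hypobolic-poly}) coincides with the one used for TDA (Theorem~\ref{thm:k2u-dm-hypobolic}), so the analysis of Theorem~\ref{thm:speedup-dm} carries over unchanged. Your write-up just spells out the replay of that argument in more detail than the paper does.
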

\begin{proof}
  Since the schedulability condition remains the same as in
  Theorem~\ref{thm:k2u-dm-hypobolic}, the speedup factor is also the
  same as in Theorem~\ref{thm:speedup-dm}.
\end{proof}

\subsection{Tightness Analysis}
\label{sec:constrained-tightness}

We conclude this section by showing that the above speedup factor
analysis  is tight when $M$ is sufficiently large under an arbitrary
fitting strategy. 
\begin{theorem}
  \label{theorem:sporadic-constrained-tight1} For constrained-deadline
  task systems, the speedup factor of Algorithm~\ref{alg:dmp} is at
  least $\frac{1}{W(0.5)} \approx 2.84306$ when adopting TDA or
  the hyperbolic bound in Eq~\eqref{eq:schedulability-sporadic-any-a}
  for DM schedulability test under an arbitrary fitting strategy, when
  $M$ is sufficiently large.
\end{theorem}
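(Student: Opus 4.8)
The plan is to exhibit, for each $M$ and each $\gamma>0$, a concrete constrained-deadline instance on which Algorithm~\ref{alg:dmp} (under a suitably prescribed assignment order, which an arbitrary fitting strategy permits) assigns every task but one and then fails, while the instance is feasible — not necessarily under DM — at a speed tending to $W(0.5)$, so that Lemma~\ref{lemma:lower-speed-bound} and the speedup definition deliver the bound. Write $W=W(0.5)$ and normalize the deadline of the "blocking" task to $1$. The instance mirrors exactly the decomposition ${\bf T}^{*1}\cup{\bf T}^{*2}\cup\{\tau_k\}$ used in the proof of Theorem~\ref{thm:speedup-dm}, with the parameters pushed to the extremal point identified there: $C_k/D_k\to W$, $\frac1M\sum_{{\bf T}^{*1}}U_i\to\ln\frac{2}{1+\sigma}=W$, and $\sigma=4W-1$.

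Concretely, on every processor place (i) a large bundle of high-priority ``filler'' tasks with implicit deadlines, tiny individual utilizations, and periods spread over $(2W,\,1)$ with a utilization density that in the limit is proportional to $1/T$, so the bundle's total utilization is $\ln\frac{1}{2W}=W$ and its request-bound function $\sum\lceil t/T_i\rceil C_i$ follows $t\mapsto t-W$ from just above throughout $(0,1]$; and (ii) one long-period ($T=\infty$) reserved task with execution $(3W-1)$ plus a vanishing slack and relative deadline slightly below $1$. Because the filler utilizations are infinitesimal, $\prod_i(1+U_i/M)\to e^{\sum U_i/M}$, so on each processor the hyperbolic test of Eq.~\eqref{eq:schedulability-sporadic-any-a} and the exact test TDA agree and are driven to exactly the borderline where the blocking task $\tau_k$ (execution $W$, $T=\infty$, deadline $1$, presented last) is rejected. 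Prescribe the order so the algorithm fills one filler bundle and one reserved task onto each processor — every such step passing its test, as the bundles and reserved tasks are schedulable — whereupon $\tau_k$ fails on every processor by this very borderline, and Algorithm~\ref{alg:dmp} reports failure.

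I then evaluate $s$ in Eq.~\eqref{eq:def-s-overall}. The per-task term is $\max_i\Delta_i=\max\{W,\,3W-1,\,o(1)\}=W$; the utilization term is $\frac1M\sum U_i\to W$, since only the fillers carry utilization; and the demand term $\max_{t>0}\frac{\sum {\sc dbf}(\tau_i,t)}{Mt}$ also equals $W$ — it is dominated by the fillers (whose rate tends to $W$ as $t\to\infty$ and is at most $W$ everywhere, as $\lfloor t/T\rfloor\le t/T$) together with the reserved jobs near $t=1$, the endpoint $2W$ being chosen precisely so those two contributions at $t=1$ sum to $W$. Hence $s\to W$. For the lower-bound direction I exhibit an explicit feasible (EDF-based, cross-processor) partition of the whole instance at speed $W+o(1)$: put $\tau_k$ alone on one processor; give each remaining processor one filler bundle plus one reserved task — EDF-feasible at speed $W+o(1)$ since a single long-period reserved job adds only $o(1)$ to the demand rate — and absorb the one left-over reserved task by shaving an $O(1)$ amount of filler utilization off one bundle and redistributing it. Since the DM-based ``one bundle per processor'' partition is itself only DM-schedulable at the larger speed $1-W$, it is essential here that the speedup definition permits feasibility under an arbitrary (here EDF) scheduler. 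The loss $\gamma$ comes from the vanishing slacks and the finiteness of $M$, the latter entering because the steps $\prod_m(\cdot)>(\cdot)^M$ and $1+U_i\le(1+U_i/M)^M$ used in Theorems~\ref{thm:k2u-dm-hypobolic} and~\ref{thm:k2u-dm-hypobolic-poly} are tight only as $M\to\infty$ with infinitesimal filler loads.

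The main obstacle is fixing the filler-period profile: the ``$1/T$ density on $(2W,1)$'' is forced by a three-way constraint — (a) the aggregate request-bound function must hug $t-W$ on all of $(0,1]$ so that the \emph{exact} TDA test (not merely its linear over-approximation in Eq.~\eqref{eq:fbb-constrained-1}) rejects $\tau_k$ on every processor; (b) the demand-bound function's peak rate must be exactly $W$ so the necessary-speed quantity does not exceed $W$; and (c) each bundle must be low-speed (EDF) schedulable so the witness partition at speed $W+o(1)$ goes through. Checking these simultaneously — together with verifying that the prescribed assignment order is genuinely realizable by Algorithm~\ref{alg:dmp}, and that the cross-processor witness partition's bookkeeping closes — is where the real work lies; the rest is routine once the profile is set. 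This parallels the arbitrary-deadline tightness of Theorem~\ref{theorem:sporadic-arbitrary-tight2}, but is harder, because the constraint $D_i\le T_i$ forbids the tiny-period, long-deadline fillers used there and forces periods close to the deadline.
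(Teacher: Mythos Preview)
Your construction is essentially the paper's, reparameterized: your $2W$ is the paper's $f$ (indeed $\ln(1/f)=f/2$ is exactly $ye^y=1/2$ with $y=f/2$), the fillers have equal execution times and periods spread uniformly over $(f,1)$, the reserved mass per processor is $3W-1=1.5f-1$, and the blocking task has $C_k=W=f/2$, $D_k=1$. The plan—force a prescribed assignment, reject $\tau_k$ by TDA on every processor, and bound the quantity in Lemma~\ref{lemma:lower-speed-bound} by $W$—is the paper's plan.

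The gap is your justification of the demand-bound term. You argue the fillers' dbf-rate is at most $W$ everywhere ``as $\lfloor t/T\rfloor\le t/T$,'' and that the reserved jobs matter only ``near $t=1$.'' But the reserved tasks contribute $3W-1>0$ to the dbf for \emph{every} $t\ge 1$, so your bound yields only $\frac{1}{Mt}\sum{\sc dbf}(\tau_i,t)\le W+(3W-1)/t$, which exceeds $W$ for all finite $t$ since $3W>1$. To get rate $\le W$ you must show the fillers' dbf falls \emph{short} of $Wt$ by at least $3W-1$ at every $t\ge 1$; this is not automatic and is exactly what the paper's Appendix~A establishes. There the argument builds a piecewise-linear over-approximation $dbf^\sharp(t)$, reduces $\max_{t\ge 1}dbf^\sharp(t)/t$ to integer $t$ by a slope-monotonicity argument, checks $t=1,\ldots,5$ numerically, and then proves $dbf^\sharp(\ell+1)/(\ell+1)\ge dbf^\sharp(\ell)/\ell$ for all integers $\ell\ge 5$ via a case split on whether $\lfloor(\ell+1)/f\rfloor-\lfloor\ell/f\rfloor$ equals $1$ or $2$. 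This is the hardest part of the proof and your sketch does not supply it—indeed the reasoning you do give points in the wrong direction.

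A secondary point: the paper uses $M$ identical copies of each reserved task (hence $M^2$ reserved tasks total, each of size $(1.5f-1)/M$) rather than one per processor. This is deliberate—redistributing $M$ tiny reserved tasks across $M-1$ processors in the witness partition is immediate, whereas your ``shave and redistribute'' step with one indivisible size-$(3W-1)$ reserved task per processor needs a separate argument.
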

\begin{proof}
  We prove this theorem by providing a concrete task system by
  assuming that $M$ is sufficiently large. There are $N=2M^2+ 1$
  tasks. Let $f$ be $\frac{2}{\frac{1}{W(0.5)}} \approx
  0.7034674$, i.e., $\ln(\frac{1}{f}) = \frac{f}{2}$. Let $\delta$ be an arbitrarily small  positive number just for
  enforcing the indexing:
  \begin{itemize}
 \item $T_i = D_i = f + (\ceiling{\frac{i}{M}}-1)\frac{1-f}{M-1}$, $C_i =
    \frac{1-f}{M-1}$, for $i=1,2,\ldots, M^2$,
  \item $T_i=\infty$, $D_i=1+\delta$, $C_i=\frac{1.5f-1}{M}\approx \frac{0.0552}{M}$, for
    $i=M^2+1,M^2+2,\ldots, 2M^2$,  and
  \item $T_N = \infty$, $D_N = 1+2\delta$, $C_N = 0.5f+\epsilon$, with
    $N=2M^2+1$, where $\epsilon$ is a positive small number.
  \end{itemize}
  \emph{We again simply take $\delta$ to $0$ for the rest of the proof.}
  Moreover, $\frac{1}{M}$ is also consider negligible for the
  simplicity of computation, since $M$ is assumed to be sufficiently
  large.

  For an arbitrary fitting algorithm, consider the following task
  assignment by assigning task $\tau_{i+jM}$ to processor $i$ with $i=1,2,\ldots,M$
  for every $j=0,1,2,\ldots,2M-1$. It is not difficult to see that the
  above task assignment can be achieved feasibly and results in a
  feasible task assignment for the first $2M^2$ tasks. The set of the
  tasks assigned on processor $m$ is denoted as ${\bf T}_m$. Now,
  consider task $\tau_{2M^2+1}$, i.e., $\tau_N$ to be assigned on
  processor $m$. The overall execution time request at time $0$ is $0.5f+\epsilon +
  1.5f-1+1-f=f+\epsilon$ on processor $m$. Therefore, it can be easily seen that task
  ${\bf T}_m\cup \setof{\tau_N}$ is not schedulable under DM scheduling on processor $m$ since the TDA test
  in Eq.~\eqref{eq:fbb-constrained-1} fails. As a result, task
  $\tau_N$ cannot be assigned on any processor.


  In this input task set, the utilization of
  the individual task and $\frac{C_i}{D_i}$ are not more than $\frac{f}{2}+\epsilon$ for each
  task $\tau_i$. Moreover, the overall task utilization is
  $M\sum_{i=0}^{M-1}\frac{\mu}{f+i \mu}$, where $\mu =
  \frac{1-f}{M-1}$. Due to the assumption that $M$ is sufficiently
  large, the above total utilization is a \emph{left Riemann sum},
  i.e., $M\sum_{i=0}^{M-1}\frac{\mu}{f+i \mu} \approx M\int_{0}^{1-f}
  \frac{1}{f+ x} dx= M\ln(\frac{1}{f})$. By the fact that
  $\ln(\frac{1}{f}) = \frac{f}{2}$, we know that the total utilization
  is $M\frac{f}{2}$ when $M$ is sufficiently large.  

  Now, we examine $\max_{t > 0}\frac{\sum_{i=1}^{N} dbf(\tau_i,
    t)}{Mt}$. By definition, $\max_{t > 0}\frac{\sum_{i=1}^{N}
    dbf(\tau_i, t)}{Mt}> \max_{t > 0}\frac{\sum_{i=1}^{N-1}
    dbf(\tau_i, t)}{Mt}$. By the construction of the task set, we have
  (1) $\frac{\sum_{i=1}^{N-1} dbf(\tau_i, t)}{Mt} = 0$ if $0 < t < 1$,
  and (2) $\frac{\sum_{i=1}^{N-1} dbf(\tau_i, 1)}{M} = \frac{f}{2}$ if
  $t=1$. Therefore, we know that the speedup factor for this task set
  is purely dominated by $\max_{t > 0}\frac{\sum_{i=1}^{N-1}
    dbf(\tau_i, t)}{Mt}$.  

  However, proving that
  $\frac{\sum_{i=1}^{N-1} dbf(\tau_i, t)}{Mt} \leq \frac{f}{2}$ if $t
  > 1$ is pretty complicated.  The proof involves quite some
  mathematical derivations, and is left in Appendix A\ifbool{techreport}{.}{ in \cite{DBLP:journals/corr/Chen15k}.}
It should be clear that we can ignore the
  $M-1$ duplicated copies of the tasks by considering only the tasks
  assigned on one processor (before considering $\tau_N$). 
The complete proof is Appendix A\ifbool{techreport}{.}{ in \cite{DBLP:journals/corr/Chen15k}.}
The proof strategy first
  makes an over-approximation, denoted as $dbf^{\sharp}(t)$, of the
  sum $\sum_{\tau_i \in {\bf T}_m} dbf(\tau_i, t)$ of the demand bound
  functions at time $t$ (after removing the $M-1$ duplicated copies). Based on such an over-approximation, it can
  be shown that $\max_{t>0} \frac{dbf^{\sharp}(t)}{t}$ happens when
  $t$ is a positive integer for ${\bf T}_m$. It can then be proved that the
  maximum $\frac{dbf^{\sharp}(t)}{t}$ happens when $t=1$ or
  $t=\infty$, in which both leads $\frac{dbf^{\sharp}(t)}{t}$ to
  $\frac{f}{2}$. 
  
  Figure~\ref{fig:segment-curve} draws the functions
 $dbf^{\sharp}(t)$ and $\frac{dbf^{\sharp}(t)}{t}$ in
the range of $[1, 6]$.
 Figure~\ref{fig:dbf-tight} provides an illustrative
  view of $\frac{dbf^{\sharp}(t)}{t}$ from $t=1,2,\ldots, 4000$.
  It can also be easily shown by assigning $\tau_N$ to one processor
  alone, we can find a task partition that requires a speedup factor
  asymptotically equal to $\frac{2}{f}$ when $M$ is sufficiently large.
\end{proof}


\begin{figure}[t]
\centering
\subfloat[$dbf^{\sharp}(t)$ ]{\label{fig:segment-curve-A}
  \includegraphics[width=0.75\columnwidth]{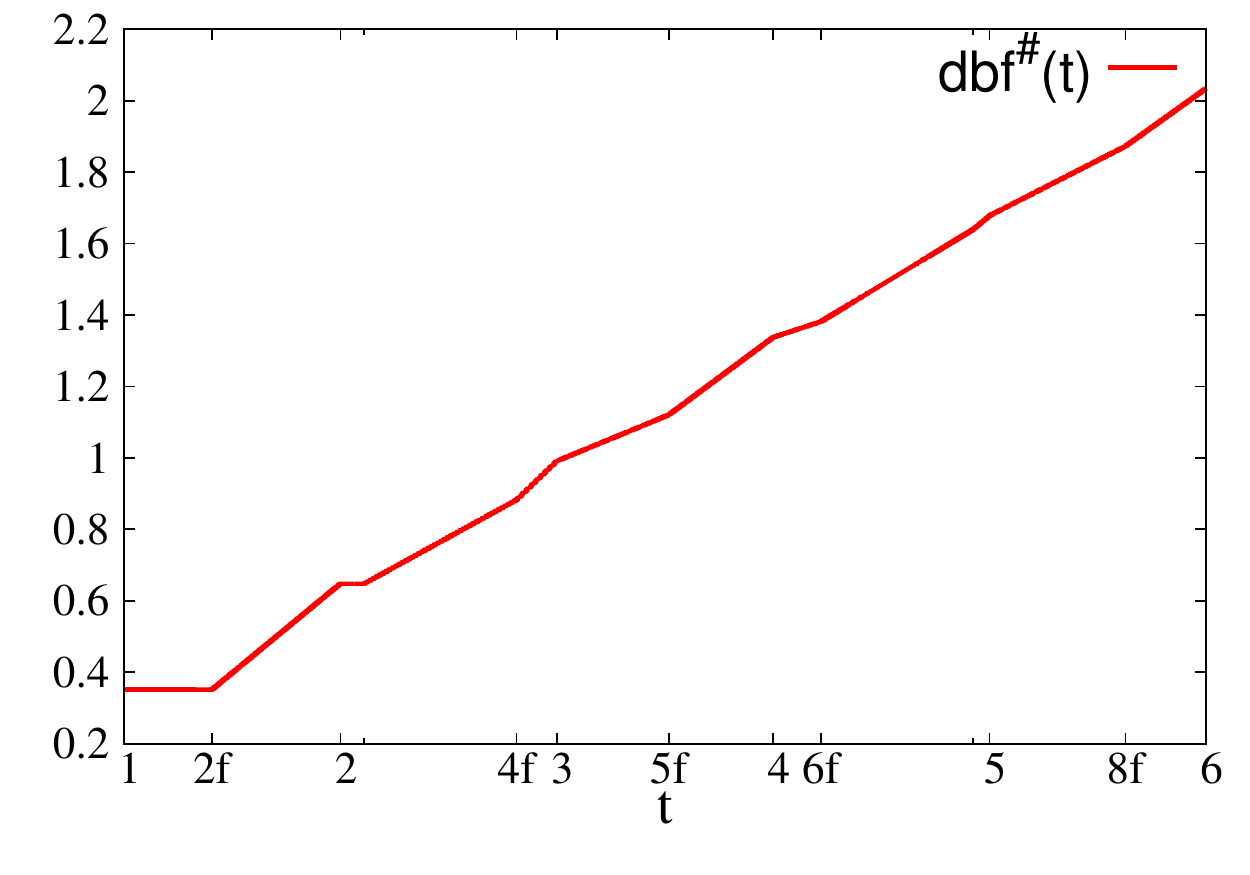}}\\
\subfloat[$dbf^{\sharp}(t)/t$]{\label{fig:segment-curve-B}
  \includegraphics[width=0.75\columnwidth]{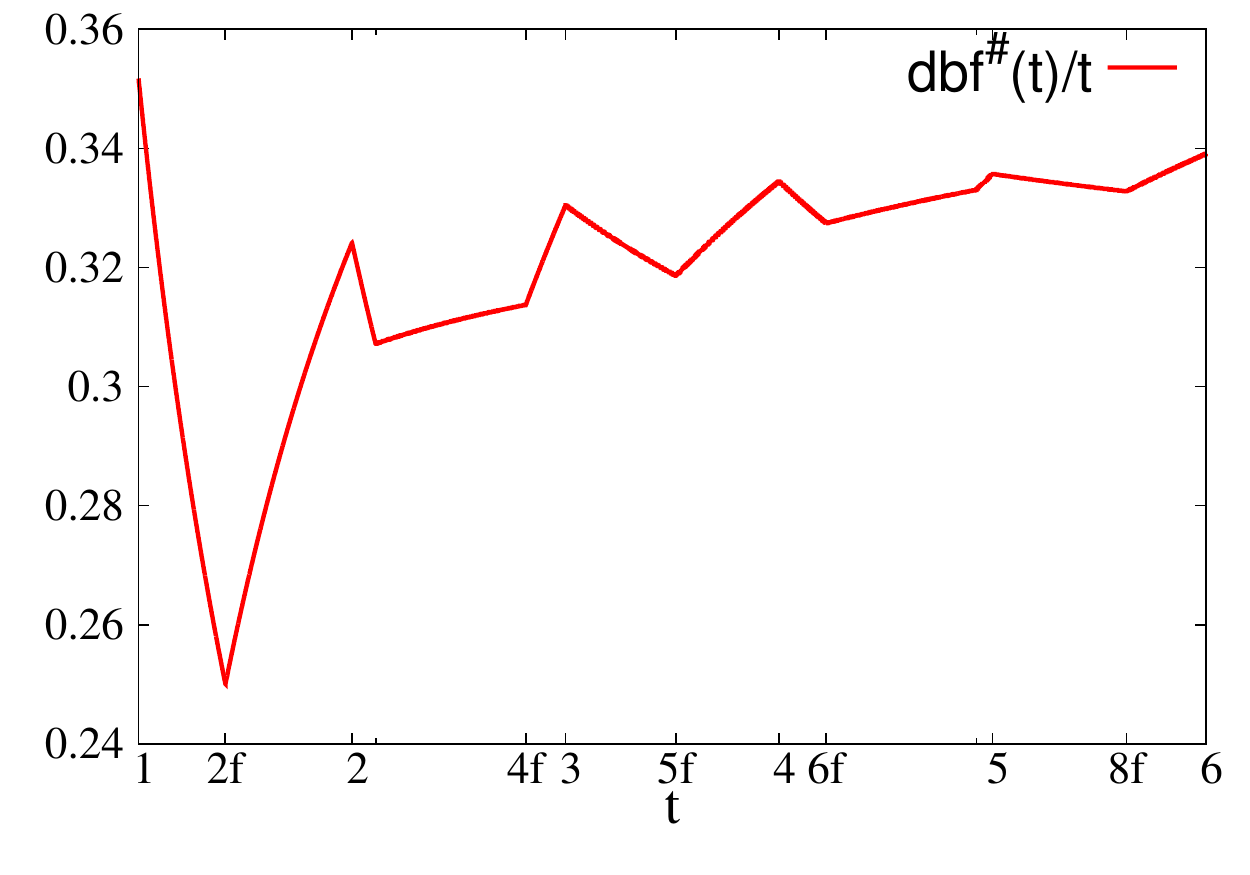}}
  \caption{Functions $dbf^{\sharp}(t)$ and $\frac{dbf^{\sharp}(t)}{t}$
    for the proof of Theorem~\ref{theorem:sporadic-constrained-tight1}
    with $t$ in
the range of $[1, 6]$.}
  \label{fig:segment-curve}
\end{figure}

 \begin{figure}[t]
\centering
\subfloat[$ 1 \leq t \leq 30$ ]{
  \includegraphics[width=0.75\columnwidth]{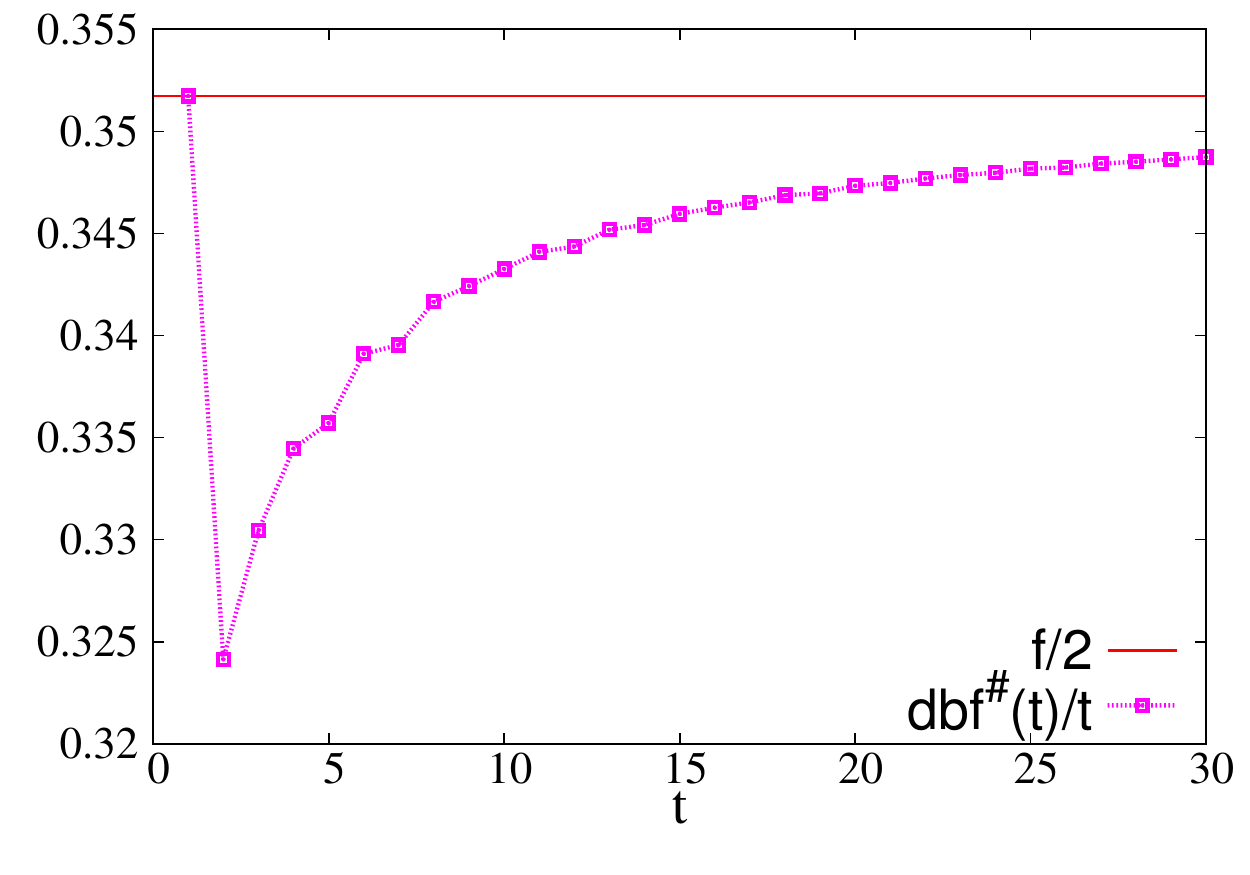}}\\
\subfloat[$ 30 \leq t \leq 4000$]{
  \includegraphics[width=0.75\columnwidth]{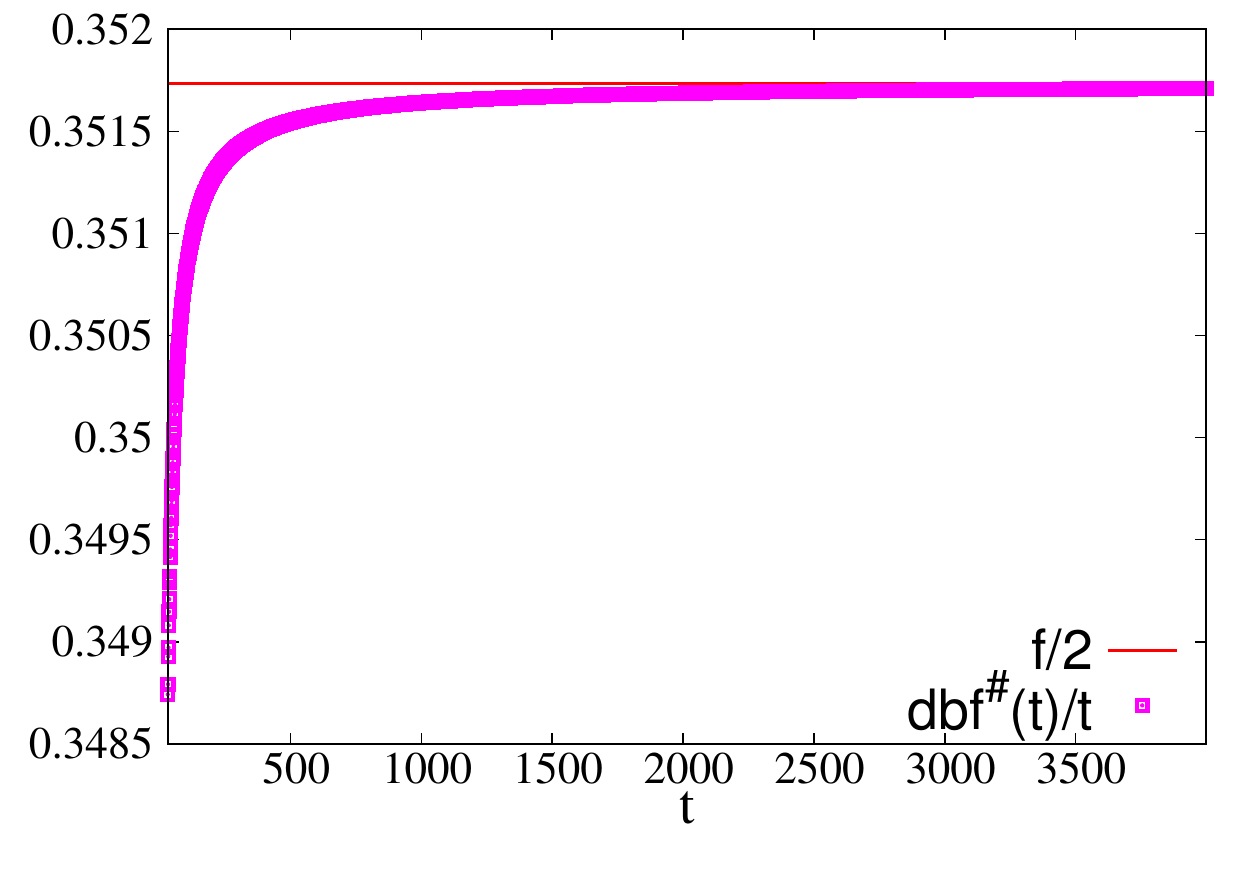}}
  \caption{$\frac{dbf^{\sharp}(t)}{t}$ when $t$ is a positive integer
    number (the curve is just for visualization)  for the proof of Theorem~\ref{theorem:sporadic-constrained-tight1}, and the reference point $\frac{f}{2} \approx 0.3517337$}
  \label{fig:dbf-tight}
\end{figure}

\section{Concluding Remarks}
\label{sec:conclusion}

This paper provides detailed analysis for  the deadline-monotonic partitioning algorithm
proposed by Fisher, Baruah, and Baker
\cite{DBLP:conf/ecrts/FisherBB06} for multiprocessor partitioned fixed-priority scheduling, by
using exact schedulability tests and approximated schedulability
tests. It may seem at first glance that using exact schedulability
tests is more precise, but the proof shows that such exact tests are
with the same speedup factors as approximated tests.  We show that the
deadline-monotonic partitioning algorithm has a speedup factor
$3-\frac{1}{M}$ when considering task systems with arbitrary
deadlines. Such a factor holds
for polynomial-time schedulability tests and exponential-time
schedulability tests.  Moreover, we also improve the speedup factor to
$2.84306$ when considering constrained-deadline task systems.

The speedup factor analyses in this paper are asymptotically tight
if the analysis is oblivious of the particular fitting strategy used.
In all the tightness analyses,
we only take Lemma~\ref{lemma:lower-speed-bound}, which also
implicitly implies the
reference to optimal global scheduling.  The tasks are designed on
purpose, e.g., $M^2$ tasks (instead of $M$ tasks) with period $\infty$
in the proof of Theorem~\ref{theorem:sporadic-constrained-tight1}, to
show that such factors also hold (asymptotically or with minor
changes) for optimal partitioned scheduling.  However, this does not
limit the potential to have better speedup factors by adopting better
fixed-priority scheduling strategies or more precise analysis for
concrete fitting strategies.

\noindent{\bf Acknowledgements.} This paper is supported by DFG, as part of the Collaborative Research Center SFB876 (http://sfb876.tu-dortmund.de/).

\begin{spacing}{1}
{\footnotesize
\def\IEEEbibitemsep{0pt}
  \bibliographystyle{abbrv}
\bibliography{real-time}

\begin{thebibliography}{10}

\bibitem{DBLP:conf/ecrts/AlbersS04}
K.~Albers and F.~Slomka.
\newblock An event stream driven approximation for the analysis of real-time
  systems.
\newblock In {\em ECRTS}, pages 187--195, 2004.

\bibitem{audsley1993applying}
N.~Audsley, A.~Burns, M.~Richardson, K.~Tindell, and A.~J. Wellings.
\newblock Applying new scheduling theory to static priority pre-emptive
  scheduling.
\newblock {\em Software Engineering Journal}, 8(5):284--292, 1993.

\bibitem{BaruahRTSS2011}
S.~Baruah.
\newblock The partitioned {EDF} scheduling of sporadic task systems.
\newblock In {\em Real-Time Systems Symposium (RTSS)}, pages 116 --125, 2011.

\bibitem{DBLP:conf/rtss/BaruahF05}
S.~K. Baruah and N.~Fisher.
\newblock The partitioned multiprocessor scheduling of sporadic task systems.
\newblock In {\em RTSS}, pages 321--329, 2005.

\bibitem{DBLP:journals/tc/BaruahF06}
S.~K. Baruah and N.~Fisher.
\newblock The partitioned multiprocessor scheduling of deadline-constrained
  sporadic task systems.
\newblock {\em IEEE Trans. Computers}, 55(7):918--923, 2006.

\bibitem{DBLP:conf/rtss/BaruahMR90}
S.~K. Baruah, A.~K. Mok, and L.~E. Rosier.
\newblock Preemptively scheduling hard-real-time sporadic tasks on one
  processor.
\newblock In {\em IEEE Real-Time Systems Symposium}, pages 182--190, 1990.

\bibitem{DBLP:journals/tc/Bini15}
E.~Bini.
\newblock The quadratic utilization upper bound for arbitrary deadline
  real-time tasks.
\newblock {\em {IEEE} Trans. Computers}, 64(2):593--599, 2015.

\bibitem{bini2003rate}
E.~Bini, G.~C. Buttazzo, and G.~M. Buttazzo.
\newblock Rate monotonic analysis: the hyperbolic bound.
\newblock {\em Computers, IEEE Transactions on}, 52(7):933--942, 2003.

\bibitem{DBLP:journals/tc/BiniNRB09}
E.~Bini, T.~H.~C. Nguyen, P.~Richard, and S.~K. Baruah.
\newblock A response-time bound in fixed-priority scheduling with arbitrary
  deadlines.
\newblock {\em {IEEE} Trans. Computers}, 58(2):279--286, 2009.

\bibitem{DBLP:dblp_journals/tc/BurchardLOS95}
A.~Burchard, J.~Liebeherr, Y.~Oh, and S.~H. Son.
\newblock New strategies for assigning real-time tasks to multiprocessor
  systems.
\newblock pages 1429--1442, 1995.

\bibitem{DBLP:conf/rtss/ChakrabortyKT02}
S.~Chakraborty, S.~K{\"u}nzli, and L.~Thiele.
\newblock Approximate schedulability analysis.
\newblock In {\em IEEE Real-Time Systems Symposium}, pages 159--168, 2002.

\bibitem{DBLP:journals/corr/Chen15k}
J.~Chen.
\newblock Partitioned multiprocessor fixed-priority scheduling of sporadic
  real-time tasks.
\newblock {\em Computing Research Repository (CoRR)}, abs/1505.04693,
  \url{http://arxiv.org/abs/1505.04693}, 2015.

\bibitem{Chakraborty2011a}
J.-J. Chen and S.~Chakraborty.
\newblock Resource augmentation bounds for approximate demand bound functions.
\newblock In {\em IEEE Real-Time Systems Symposium}, pages 272 -- 281, 2011.

\bibitem{ChenECRTS12}
J.-J. Chen and S.~Chakraborty.
\newblock Partitioned packing and scheduling for sporadic real-time tasks in
  identical multiprocessor systems.
\newblock In {\em ECRTS}, pages 24--33, 2012.

\bibitem{DBLP:journals/rts/ChenC13}
J.-J. Chen and S.~Chakraborty.
\newblock Resource augmentation for uniprocessor and multiprocessor partitioned
  scheduling of sporadic real-time tasks.
\newblock {\em Real-Time Systems}, 49(4):475--516, 2013.

\bibitem{ChenHLRTSS2015}
J.-J. Chen, W.-H. Huang, and C.~Liu.
\newblock {k2U}: A general framework from k-point effective schedulability
  analysis to utilization-based tests.
\newblock In {\em Real-Time Systems Symposium (RTSS)}, 2015.

\bibitem{conf:/rtns09/Davis}
R.~Davis, T.~Rothvoß, S.~Baruah, and A.~Burns.
\newblock Quantifying the sub-optimality of uniprocessor fixed priority
  pre-emptive scheduling for sporadic tasksets with arbitrary deadlines.
\newblock In {\em Real-Time and Network Systems (RTNS)}, pages 23--31, 2009.

\bibitem{DBLP:journals/csur/DavisB11}
R.~I. Davis and A.~Burns.
\newblock A survey of hard real-time scheduling for multiprocessor systems.
\newblock {\em ACM Comput. Surv.}, 43(4):35, 2011.

\bibitem{DBLP:journals/rts/DavisRBB09}
R.~I. Davis, T.~Rothvo{\ss}, S.~K. Baruah, and A.~Burns.
\newblock Exact quantification of the sub-optimality of uniprocessor fixed
  priority pre-emptive scheduling.
\newblock {\em Real-Time Systems}, 43(3):211--258, 2009.

\bibitem{DBLP:conf/rtss/EisenbrandR08}
F.~Eisenbrand and T.~Rothvo\ss.
\newblock Static-priority real-time scheduling: Response time computation is
  {NP-Hard}.
\newblock In {\em Proceedings of the 29th {IEEE} Real-Time Systems Symposium,
  {RTSS} 2008, Barcelona, Spain, 30 November - 3 December 2008}, pages
  397--406, 2008.

\bibitem{DBLP:conf/ecrts/FisherB05}
N.~Fisher and S.~K. Baruah.
\newblock A fully polynomial-time approximation scheme for feasibility analysis
  in static-priority systems with arbitrary relative deadlines.
\newblock In {\em ECRTS}, pages 117--126, 2005.

\bibitem{DBLP:conf/ecrts/FisherBB06}
N.~Fisher, S.~K. Baruah, and T.~P. Baker.
\newblock The partitioned scheduling of sporadic tasks according to
  static-priorities.
\newblock In {\em 18th Euromicro Conference on Real-Time Systems, ECRTS'06, 5-7
  July 2006, Dresden, Germany, Proceedings}, pages 118--127, 2006.

\bibitem{DBLP:journals/siamam/Graham69}
R.~L. Graham.
\newblock Bounds on multiprocessing timing anomalies.
\newblock {\em SIAM Journal of Applied Mathematics}, 17(2):416--429, 1969.

\bibitem{DBLP:journals/jacm/HochbaumS87}
D.~S. Hochbaum and D.~B. Shmoys.
\newblock Using dual approximation algorithms for scheduling problems
  theoretical and practical results.
\newblock {\em J. ACM}, 34(1):144--162, 1987.

\bibitem{DBLP:conf/waoa/KarrenbauerR10}
A.~Karrenbauer and T.~Rothvo{\ss}.
\newblock A 3/2-approximation algorithm for rate-monotonic multiprocessor
  scheduling of implicit-deadline tasks.
\newblock In {\em Workshop of Approximation and Online Algorithms {WAOA}},
  pages 166--177, 2010.

\bibitem{journals/tc/LeeSP04}
C.-G. Lee, L.~Sha, and A.~Peddi.
\newblock Enhanced utilization bounds for qos management.
\newblock {\em IEEE Trans. Computers}, 53(2):187--200, 2004.

\bibitem{DBLP:conf/rtss/Lehoczky90}
J.~P. Lehoczky.
\newblock Fixed priority scheduling of periodic task sets with arbitrary
  deadlines.
\newblock In {\em RTSS}, pages 201--209, 1990.

\bibitem{DBLP:conf/rtss/LehoczkySD89}
J.~P. Lehoczky, L.~Sha, and Y.~Ding.
\newblock The rate monotonic scheduling algorithm: Exact characterization and
  average case behavior.
\newblock In {\em IEEE Real-Time Systems Symposium}, pages 166--171, 1989.

\bibitem{journals/pe/LeungW82}
J.~Y.-T. Leung and J.~Whitehead.
\newblock On the complexity of fixed-priority scheduling of periodic, real-time
  tasks.
\newblock {\em Perform. Eval.}, 2(4):237--250, 1982.

\bibitem{liu73scheduling}
C.~L. Liu and J.~W. Layland.
\newblock Scheduling algorithms for multiprogramming in a hard-real-time
  environment.
\newblock {\em Journal of the ACM}, 20(1):46--61, 1973.

\bibitem{Mok:1983:FDP:888951}
A.~K. Mok.
\newblock Fundamental design problems of distributed systems for the
  hard-real-time environment.
\newblock Technical report, Cambridge, MA, USA, 1983.

\bibitem{DBLP:conf/ecrts/BruggenCH15}
G.~von~der Bruggen, J.-J. Chen, and W.~Huang.
\newblock Schedulability and optimization analysis for non-preemptive static
  priority scheduling based on task utilization and blocking factors.
\newblock In {\em ECRTS}, pages 90--101, 2015.

\end{thebibliography}
}
  
\end{spacing}

\normalsize

\ifbool{techreport}{
\section*{Appendix A}

\begin{proofAppendix}{Theorem~\ref{theorem:sporadic-constrained-tight1}}
  The remaining part of the proof is to show that
  $\frac{\sum_{i=1}^{N-1} dbf(\tau_i, t)}{Mt} \leq \frac{f}{2}$ for
  all $t \geq 1$ in the input instance. Since each task has $M-1$
  duplicated copies
  with the same task characteristics, we will implicitly drop the
  $M-1$ duplicated copies by considering only these $2M$ tasks in ${\bf T}_m$. We
  index these tasks according to their periods in a non-decreasing
  order. Due to the setting of the task periods, we know that $T_i =
  f+\frac{i-1}{M-1}$ for $i=1,2,\ldots, M$ and $T_i = \infty$ for
  $i=M+1, M+2, \ldots, 2M$.

  For the rest of the proof, we will only analyze the demand bound
  function of these $2M$ tasks on one
  processor, and our objective is to show that $\frac{\sum_{i=1}^{2M}
    dbf(\tau_i, t)}{t} \leq \frac{f}{2}$ for all $t \geq 1$.  Note
  that this is identical to the original $N-1$ tasks on $M$ processors. Moreover,
  for the rest of the proof, we only consider the arrival pattern that
  defines the demand bound function of the system from time $0$ on
  the processor. 

  For $t=1$, we know that $\frac{\sum_{i=1}^{2M} dbf(\tau_i, 1)}{1}
  =\frac{f}{2}$.  For $t > 1$, we need to identify the demand
  of the $M$ periodic tasks, whose periods are not $\infty$, more precisely.  Here, we define some
  terms for the simplicity of the explanations. We define a
  \emph{pile} of jobs as follows: \emph{The $\ell$-th job of task
    $\tau_i$ is placed in the $\ell$-th pile.}  Therefore, by the
  definition, each pile has $M$ jobs and has total execution time
  equal to $1-f+\frac{1-f}{M-1} \approx 1-f$ since $M$ is
  sufficiently large. Precisely, the absolute deadline and the arrival
  time of the $j$-th job (from the earliest arrival) in the $\ell$-th
  pile are $\ell\cdot ( f+ j-1)$ and $(\ell-1)\cdot ( f+ j-1)$,
  respectively.  Therefore, the contribution of the jobs in the
  $\ell$-th pile to the demand bound function at time $t$ is \emph{at
    most}
  \begin{equation}
    \label{eq:dbf-ind-sharp}
    dbf_\ell^{\sharp}(t) = 
    \begin{cases}
      0 & \mbox{ if } t < \ell\cdot    f\\
      (t-\ell \cdot f)\cdot \frac{1}{\ell}& \mbox{ if }\ell\cdot f
      \leq t < \ell \\
      1-f & \mbox{ if } t \geq \ell
    \end{cases}
  \end{equation}

  We can now define a safe upper bound $dbf^{\sharp}(t)$ of the demand
  bound function of the $2M$ tasks for $t \geq 1$ as follows:
  \begin{equation}
    \label{eq:dbf-total-sharp}
    dbf^{\sharp}(t) =1.5f-1+ \sum_{\ell=1}^{\infty} dbf^{\sharp}_\ell(t).
  \end{equation}
  Based on such an over-approximation, $dbf^{\sharp}(t)$ is a
  piece-wise linear function, which is differentiable.
  Figure~\ref{fig:segment-curve} illustrates the above definition of
  $dbf^{\sharp}(t)$ and   $\frac{dbf^{\sharp}(t)}{t}$.

  We now prove that $\max_{t \geq 1} \frac{dbf^{\sharp}(t)}{t}$
  happens when $t$ is a postive integer. For any positive number
  $\ell$, the function $dbf^{\sharp}(t)$ for $t$ in the interval
  $[\ell, \ell+1)$ can have either two segments $\left[\ell,
    (\floor{\frac{\ell}{f}}+1)f\right),
  \left[(\floor{\frac{\ell}{f}}+1)f, \ell+1\right)$ or three segments
  $\left[\ell, (\floor{\frac{\ell}{f}}+1)f\right),
  \left[(\floor{\frac{\ell}{f}}+1)f,
    (\floor{\frac{\ell}{f}}+2)f\right)$,
  $\left[(\floor{\frac{\ell}{f}}+2)f, \ell+1\right)$ of linearly
  continuous functions.  For example, in
  Figure~\ref{fig:segment-curve-A}, in interval $[2, 3)$, there are
  three segments and, in interval $[3, 4)$, there are only two
  segments. 

  The case when $\ell$ is $1$ is clear, since $dbf^{\sharp}(t) =
  1.5f-1+1-f=0.5f$ when $1 \leq t< 2f$ and $dbf^{\sharp}(t) = 0.5f +
  (t-2f)/2$ when $2f \leq t < 2$. Therefore, $dbf^{\sharp}(t)/t
  \leq dbf^{\sharp}(1)$ when $1 \leq t < 2$. Similarly,
  $dbf^{\sharp}(t)/t \leq dbf^{\sharp}(3)/3$ when $2 \leq t < 3$,
  as also illustrated in Figure~\ref{fig:segment-curve-B}.

  For a given $t \geq 3$, by the definition that $f \approx
  0.7034674$, we also know that $\floor{\frac{t}{f}} \geq \floor{t} + 1$.
  Consider the case that there are two linear segments of the function
  $dbf^{\sharp}(t)$ in $[\ell, \ell+1)$ for an integer $\ell \geq
  3$. In this case, these two segments are $\left[\ell,
    (\floor{\frac{\ell}{f}}+1)f\right),
  \left[(\floor{\frac{\ell}{f}}+1)f, \ell+1\right)$. When $\ell \leq t
  \leq (\floor{\frac{\ell}{f}}+1)f$, we know that
  \begin{align}
  dbf^{\sharp}(t) &=1.5f-1+\ell(1-f) + \sum_{i=\ell+1}^{\floor{\frac{t}{f}}} (t-i \cdot f)\cdot \frac{1}{i}\nonumber\\    
  &= dbf^{\sharp}(\ell)+ (t-\ell)\cdot\sum_{i=\ell+1}^{\floor{\frac{\ell}{f}}}  \frac{1}{i}. \label{eq:dbf-sharp-1st-segment}
  \end{align}
 When $(\floor{\frac{\ell}{f}}+1)f \leq t < \ell+1$, we know that 
{\small  \begin{align}
  dbf^{\sharp}(t)   &= dbf^{\sharp}(\ell)+ (t-\ell)\cdot\left(\sum_{i=\ell+1}^{\floor{\frac{\ell}{f}}}  \frac{1}{i}\right) + \frac{t-(\floor{\frac{\ell}{f}}+1)f}{\floor{\frac{\ell}{f}}+1}. \nonumber\\
&= dbf^{\sharp}\left((\floor{\frac{\ell}{f}}+1)f\right)+ \left(t-(\floor{\frac{\ell}{f}}+1)f\right)\cdot\sum_{i=\ell+1}^{\floor{\frac{\ell}{f}}+1}  \frac{1}{i}. \label{eq:dbf-sharp-2nd-segment}
  \end{align}}

For a linear segment started from $t^*$, we have
$\frac{dbf^{\sharp}(t^*+x)}{t^*+x} = \frac{dbf^{\sharp}(t^*) + \sigma
  x}{t^*+x}$ if $x$ is no more than the length of the linear segment, where
$\sigma$ is the slope of the linear function defined in
Eq.~\eqref{eq:dbf-sharp-1st-segment} or
Eq.~\eqref{eq:dbf-sharp-2nd-segment}.  Since the first order
derivative of $\frac{dbf^{\sharp}(t^*) + \sigma x}{t^*+x}$ with
respect to $x$ is $\frac{\sigma t^* - dbf^{\sharp}(t^*)}{(t^*+x)^2}$,
we know that the function $\frac{dbf^{\sharp}(t^*+x)}{t^*+x}$
monotonically decreases, or monotonically increases, or remains the
same with respect to valid $x$ values. 
If, for contradiction, there exists a $t$ with
   $\ell \leq t < \ell+1$ such that $\frac{dbf^{\sharp}(t)}{t} >
   \frac{dbf^{\sharp}(\ell)}{\ell}$ and $\frac{dbf^{\sharp}(t)}{t} >
   \frac{dbf^{\sharp}(\ell+1)}{\ell+1}$, then $\frac{dbf^{\sharp}(t)}{t}$ in the first segment must be
   increasing and $\frac{dbf^{\sharp}(t)}{t}$  in the second segment must be decreasing.
However, 
 with
Eq.~\eqref{eq:dbf-sharp-1st-segment} and
Eq.~\eqref{eq:dbf-sharp-2nd-segment}, it is rather clear that the
\emph{slope} of $dbf^{\sharp}(t)$ of the second linear segment is
larger than the slope of the first linear segment when there are two
linear segments in $[\ell, \ell+1)$ and $\ell \geq 3$.  Therefore,
if $\frac{dbf^{\sharp}(t)}{t}$ is an increasing function in the first segment, 
it must be also an increasing function in the second segment. Hence,
$dbf^{\sharp}(t)/t \leq \max\left\{\frac{dbf^{\sharp}(\ell)}{\ell},
  \frac{dbf^{\sharp}(\ell+1)}{\ell+1}\right\}$ when $\ell \leq t <
\ell+1$ for the case when there are two linear segments of
$dbf^{\sharp}(t)$ in $[\ell, \ell+1)$.
  The
  same argument holds for the case with three segments as well by
  examining the \emph{increasing slopes} of $dbf^{\sharp}(t)$ of these three segments. As a
  result, we can conclude that $\max_{t \geq 1}
  \frac{dbf^{\sharp}(t)}{t}$ happens when $t$ is a postive integer.

  The remaining part of the proof is to show that the maximum
  $\frac{dbf^{\sharp}(t)}{t}$ happens when $t=1$ or $t=\infty$, in
  which both lead $\frac{dbf^{\sharp}(t)}{t}$ to
  $\frac{f}{2}$. We can easily evaluate $\frac{dbf^{\sharp}(t)}{t}$
  for $t=1,2,3,4,5$ as follows:
  \begin{itemize}
  \item $\frac{dbf^{\sharp}(1)}{1}=1.5f-1+1-f = 0.5f \approx 0.3517337$.
  \item $\frac{dbf^{\sharp}(2)}{2}=\frac{1.5f-1+2-2f}{2} \approx 0.3241$.
  \item $\frac{dbf^{\sharp}(3)}{3}=\frac{1.5f-1+3-3f+(3-4f)/4}{3}
    \approx 0.3304$.
  \item $\frac{dbf^{\sharp}(4)}{4}=\frac{1.5f-1+4-4f+(4-5f)/5}{4}
    \approx 0.3344$.
  \item $\frac{dbf^{\sharp}(5)}{5}=\frac{1.5f-1+5-5f+(5-6f)/6+(5-7f)/7}{5}
    \approx 0.3357$.
  \end{itemize}
  For any positive integer $\ell$ with $\ell \geq 5$, we can
  reformulate Eq.~\eqref{eq:dbf-total-sharp} into the following
  equation:
  \begin{align}
    dbf^{\sharp}(\ell) =&1.5f-1+ \sum_{i=1}^{\ell} (1-f) +
    \sum_{i=\ell+1}^{\floor{\ell/f}} \frac{\ell-i\cdot
      f}{i}\nonumber\\
    =&1.5f-1+ \ell-\floor{\ell/f}f + \sum_{i=\ell+1}^{\floor{\ell/f}}
    \frac{\ell}{i}
    \label{eq:density-total-sharp}
  \end{align}

  We can complete the proof by showing that
  $\frac{dbf^{\sharp}(\ell+1)}{\ell+1} -
  \frac{dbf^{\sharp}(\ell)}{\ell} \geq 0$ for any positive integer
  $\ell \geq 5$. Such a fact can be seen in Figure~\ref{fig:dbf-tight}
  based on numerical evaluations, but a formal proof requires quite
  some effort due to the floor function.  For any positive integer $\ell$ with $\ell
  \geq 5$, we have
{\footnotesize
\begin{align}
&  \frac{dbf^{\sharp}(\ell+1)}{\ell+1} -
  \frac{dbf^{\sharp}(\ell)}{\ell}\nonumber\\
=&
  \frac{1.5f-1+\ell+1-\floor{(\ell+1)/f}f}{\ell+1}
   -
  \frac{1.5f-1+\ell-\floor{\ell/f}f}{\ell}\nonumber\\
& \; - \frac{\ell \frac{1}{\ell+1}}{\ell}+ \sum_{i=\floor{\ell/f}+1}^{\floor{(\ell+1)/f}} \frac{(\ell+1)
  \frac{1}{i}}{\ell+1}\nonumber\\
=& 
  -\frac{1.5f-1-\floor{\ell/f}f}{\ell(\ell+1)} -
  \frac{f(\floor{(\ell+1)/f}-\floor{\ell/f})}{\ell+1}\nonumber\\
& \; - \frac{1}{\ell+1} + \sum_{i=\floor{\ell/f}+1}^{\floor{(\ell+1)/f}} 
  \frac{1}{i}\label{eq:density-diff}
  \end{align}
} Since $\ell \geq 5$, the above equation is well-defined. Due to the
fact that $f \approx 0.7034674$, we know that $\floor{(\ell+1)/f}$ is
either $\floor{\ell/f}+1$ or $\floor{\ell/f}+2$. Let $\frac{\ell}{f} =
\floor{\frac{\ell}{f}}+b$ where $0 \leq b < 1$. If $0 \leq b <
2-\frac{1}{f}$, then $\frac{\ell+1}{f} <
\floor{\frac{\ell}{f}}+2-\frac{1}{f}+\frac{1}{f} =
\floor{\frac{\ell}{f}}+2$, which implies that
$\floor{(\ell+1)/f}=\floor{\ell/f}+1$ for such a case. If $2-\frac{1}{f} \leq b <
1$, then $\floor{\frac{\ell}{f}}+1+\frac{1}{f} > \frac{\ell+1}{f} \geq
\floor{\frac{\ell}{f}}+2-\frac{1}{f}+\frac{1}{f} =
\floor{\frac{\ell}{f}}+2$, which implies that
$\floor{(\ell+1)/f}=\floor{\ell/f}+2$ for such a case.
We now analyze these two cases individually.

{\bf Case 1: $0 \leq b < 2-\frac{1}{f}$: } In this case,
$\floor{(\ell+1)/f}=\floor{\ell/f}+1$. Therefore, Eq.~\eqref{eq:density-diff}
becomes
\begin{align}
  \label{eq:Y-ell-b-case1}
Y_1(b)=\frac{(\frac{\ell}{f}-b)f-1.5f+1}{\ell(\ell+1)} - \frac{f+1}{\ell+1} +
\frac{1}{\frac{\ell}{f}-b+1}.  
\end{align}
We take the first order derivative of $Y_1(b)$ with respect to $b$. Since
$\frac{d Y_1(b)}{d b} = \frac{-f}{\ell(\ell+1)} +
\frac{1}{(\frac{\ell}{f}-b+1)^2}$ is an increasing function with
respect to $b$ and $\frac{-f}{\ell(\ell+1)} +
\frac{1}{(\frac{\ell+1}{f}-1)^2} =
f\left(\frac{-(\ell+1)^2+2f(\ell+1)-f^2+f(\ell+1)^2-f(\ell+1)}{\ell(\ell+1)(\ell+1-f)^2}\right)<0$,
for a given positive $\ell$,
the function $Y_1(b)$ is at least $Y_1(2-\frac{1}{f})$. Since
\begin{align*}
  Y_1(2-\frac{1}{f}) 
= &\frac{\ell-3.5f+2}{\ell(\ell+1)} - \frac{f+1}{\ell+1} +
\frac{1}{\frac{\ell+1}{f}-1}\\
=& \frac{-f(\ell+1) -2.5f+2}{\ell(\ell+1)} +\frac{f}{\ell+1-f}\\
=& \frac{(\ell+1)(f^2+2-3.5f) + 2.5f^2-2f}{\ell(\ell+1)(\ell+1-f)}\\
> & 0, \qquad\mbox{[\bf due to $\ell \geq 5$.]}
\end{align*}
we know that $Y_1(b) > 0$ for any integer $\ell \geq 5$ and $0 \leq b
< 2-\frac{1}{f}$. Therefore, $\frac{dbf^{\sharp}(\ell+1)}{\ell+1} -
  \frac{dbf^{\sharp}(\ell)}{\ell} > 0$ for such a case.

{\bf Case 2: $2-\frac{1}{f} \leq b < 1$: } In this case,
$\floor{(\ell+1)/f}=\floor{\ell/f}+2$. Therefore,
Eq.~\eqref{eq:density-diff} becomes
\begin{align}
  \label{eq:Y-ell-b-case2}
Y_2(b)=\frac{\ell -bf -1.5f+1}{\ell(\ell+1)} - \frac{2f+1}{\ell+1} +
\frac{1}{\frac{\ell}{f}-b+1} +
\frac{1}{\frac{\ell}{f}-b+2}.  
\end{align}
We take the first order derivative of $Y_2(b)$ with respect to $b$. Since $\frac{d Y_2(b)}{d b} = \frac{-f}{\ell(\ell+1)} +
\frac{1}{(\frac{\ell}{f}-b+1)^2} + \frac{1}{(\frac{\ell}{f}-b+2)^2}$ is an increasing function with respect to $b$, and $\frac{-f}{\ell(\ell+1)} +
\frac{1}{(\frac{\ell+1}{f} -1)^2} + \frac{1}{(\frac{\ell+1}{f})^2} >
\frac{-f}{\ell(\ell+1)} + \frac{2}{(\frac{\ell+1}{f})^2} =
f\left(\frac{-(\ell+1)+2f\ell}{\ell(\ell+1)^2}\right)> 0$, for a
given $\ell \geq 3$, the function $Y_2(b)$ is
at least $Y_2(2-\frac{1}{f})$. Since
\begin{align*}
  Y_2(2-\frac{1}{f}) 
= &\frac{\ell-3.5f+2}{\ell(\ell+1)} - \frac{2f+1}{\ell+1} +
\frac{1}{\frac{\ell+1}{f}-1}+\frac{1}{\frac{\ell+1}{f}}\\
=& Y_1(2-\frac{1}{f})
>  0, \qquad\mbox{[\bf due to $\ell \geq 5$.]}
\end{align*}
we know that $Y_2(b) > 0$ for any integer $\ell \geq 5$ and
$2-\frac{1}{f} \leq b < 1$. Therefore,
$\frac{dbf^{\sharp}(\ell+1)}{\ell+1} - \frac{dbf^{\sharp}(\ell)}{\ell}
> 0$ for such a case.

With the above two cases, we can conclude that
$\frac{dbf^{\sharp}(\ell)}{\ell} \leq \frac{dbf^{\sharp}(\infty)}{\infty}
= \frac{f}{2}$ for any positive integer $\ell \geq 5$, which concludes
the proof.
\end{proofAppendix}

}{}
\end{document}